\documentclass[12pt]{article}
\usepackage{booktabs} 
\usepackage{algorithm}
\usepackage{algpseudocode}

\usepackage[utf8]{inputenc} 
\usepackage[T1]{fontenc}    

\usepackage[utf8]{inputenc}
\usepackage[T1]{fontenc}

\usepackage[margin=1in]{geometry}
\usepackage{setspace}
\usepackage{titlesec}
\titlespacing*{\subparagraph}{0pt}{1.25ex plus 1ex minus .2ex}{1em}

\usepackage{amsmath, amssymb, amsfonts, amsthm, amsbsy}
\usepackage{mathtools}
\usepackage{nicefrac}
\usepackage{bbm}           
\usepackage{pifont}        

\usepackage{graphicx}
\usepackage{subcaption}
\usepackage{xcolor}
\usepackage{transparent}
\usepackage{tikz}
\usetikzlibrary{arrows.meta, decorations.pathreplacing, calc}
\usepackage[table]{xcolor}
\usepackage{tikz}
\usepackage{graphicx} 
\usepackage{enumitem}

\usepackage{booktabs}
\usepackage{multirow, bigdelim}
\usepackage{makecell}
\usepackage{array}

\usepackage{natbib}

\usepackage{thmtools} 

\declaretheoremstyle[
  headfont=\bfseries, 
  bodyfont=\itshape, 
  spaceabove=6pt, 
  spacebelow=6pt,
  headpunct={.}\;
]{thmstyle}

\declaretheorem[style=thmstyle,name=Theorem]{theorem}
\declaretheorem[style=thmstyle,name=Proposition]{proposition}

\declaretheorem[style=thmstyle,name=Definition]{definition}

\declaretheorem[style=thmstyle,name=Assumption]{assumption}

\usepackage{url}
\PassOptionsToPackage{hyphens}{url}
\usepackage[
    colorlinks=true,
    linkcolor=blue,      
    citecolor=black,       
    urlcolor=teal        
]{hyperref}

\usepackage[noabbrev, capitalize, nameinlink]{cleveref}
\crefname{assumption}{Assumption}{Assumptions}

\DeclareMathOperator*{\argmin}{arg\,min}

\newcommand{\Var}{\operatorname{Var}}

\usepackage{xspace}
\newcommand{\RR}{Status Quo Reference Rule\xspace}
\newcommand{\GRR}{Reference Rule\xspace}
\newcommand{\GRRs}{Reference Rules\xspace}
\newcommand{\AR}{Anchoring Rule\xspace}

\makeatletter
\newcommand\footnoteref[1]{\protected@xdef\@thefnmark{\ref{#1}}\@footnotemark}
\makeatother

\title{
 Minimizing Volatility: Optimal Adjustment with \\ Evolving Feasibility Constraints
 }
\author{
Simon Jantschgi\thanks{Univ. Zurich, 8050 Zurich, Switzerland; simon.jantschgi@uzh.ch},
\ Heinrich H.~Nax\thanks{Univ. Zurich, 8050 Zurich, Switzerland; heinrich.nax@uzh.ch},
\ Bary S.R.~Pradelski\thanks{CNRS, Maison Fran\c{c}aise d'Oxford and Dept. of Economics, Oxford, OX2 6SE Oxford, UK; bary.pradelski@cnrs.fr}, 
\ Marek Pycia\thanks{Univ. Zurich, 8006 Zurich, Switzerland; marek.pycia@econ.uzh.ch}
	}
\date{January 2026}

\begin{document}

\maketitle
\renewcommand\thefootnote{}
\footnotetext{We are grateful for comments and suggestions by Paul Klemperer and Alex Teytelboym. 
This project has received funding from the European Research Council (ERC) under the European Union's Horizon 2020 research and innovation program grant agreement No 866376 (MP and SJ), from the Swiss National Science Foundation (SNSF) under Eccellenza Grant PCEFP1-181111 (SJ and HHN), by the French National Research Agency through PEPR project FOUNDRY, ANR23-PEIA-0003 (BSRP).}
\renewcommand\thefootnote{\arabic{footnote}}
\vspace{-0.75cm}
\begin{abstract}
Minimizing volatility and adjustment costs is of central importance in many economic environments, yet it is often complicated by evolving feasibility constraints.
We study a decision maker who repeatedly selects an action from a stochastically evolving interval of feasible actions in order to minimize either average adjustment costs or variance. We show that for strictly convex adjustment costs (such as quadratic variation), the optimal decision rule is a reference rule in which the decision maker minimizes the distance to a target action. In general, the optimal target depends both on the previous action and the expectation of future constraints; but 
for the special case where the constraints follow a random walk, the optimal mechanism is to simply target the previous action. If the decision maker minimizes variance, the optimal policy is also a reference rule, but the target is a constant, 
which is not necessarily equal to the long-term average action. 
Compared to mid-point heuristics, these optimal rules may substantially reduce quadratic variation and variance, in natural environments by $50\%$ or more. Applied to stock market auctions, our results provide an explanation for the wide-spread use of reference price rules. We also apply our results to bilateral trade in over-the-counter markets, capacity planning in supply chains, and positioning in political agenda setting.
\end{abstract}

\newpage

\section{Introduction}\label{sec:introduction}

Minimizing volatility and adjustment costs is important in many economic environments.
In asset markets, price volatility creates uncertainty \citep{bansal2004risks} and adjustment costs prevent efficient rebalancing \citep{GrossmanLaroque1990}. Adjustments costs affect capital investment decisions \citep{dixit-pindyck1994investment,CooperEtAl2006} as well as inventory and production management \citep{holt1960planning,lucas1967adjustment,hamermesh1996adjustment}. 
Policy reversals may erode credibility of policy makers \citep{BarroGordon1983, FernandezRodrik1991}. Theoretically, \cite{dekel2025comparative} recently link adjustment costs to a general Le Chatelier principle \citep{Samuelson1947}, where long-run responses to shocks exceed those in the short run.

In practice, not all adjustments are feasible as the choice set is constrained.
Exogenous factors and past decisions shape which actions are feasible.
In repeated double auctions \citep{Wil85,budish2015high}, as used by major stock exchanges, the auctioneer selects a price from a market-clearing interval determined by the overlap of buy and sell orders.
The chosen price and exogenous noise influence future traders' behavior, thereby shifting the clearing interval to the next period. In inventory and production management, firms adjust output subject to supply and demand constraints that might change depending on past production \citep{becker1983simple,AbelEberly1994}. Policy choices may need to remain within a range that is consistent with electoral or institutional viability, and yesterday's policy choice might shape what positions are considered acceptable today \citep{Krugman1991,AlesinaTabellini1990}.

We develop a general theory of optimal decisions in these environments. We study a decision maker who repeatedly selects an action from a time-varying interval of feasible choices that is stochastic and may depend endogenously on the previous period's action.  
The primary objective of the decision maker is to minimize adjustment costs. 
We assume that the adjustment costs are strictly convex, a standard modeling choice in dynamic settings that penalizes large changes more than small ones.
For example, the cost might take the form of quadratic variation, a popular measure of volatility.  
We hence address the question: 
\emph{What course of action minimizes adjustment costs when feasibility constraints evolve?}
This question involves an intertemporal trade-off: 
a myopic policy that tries to repeat the last action reduces immediate adjustment costs but risks large forced adjustments as future feasible intervals may be far from current levels. 
On the other hand, gradually steering actions toward a long-run target limits global dispersion at the cost of frequent, but small, adjustment costs. 

\medskip
\textbf{Illustrative example.}
Consider the following stochastic environment. 
In each period $t$, the random feasible interval 
is generated by the order statistics of two independent draws from a uniform distribution on $[0,1]$ (and hence is independent of past actions). 
Natural benchmark policies include selecting the \emph{Midpoint} of the interval, an \emph{\AR} that minimizes the distance to a fixed reference point, and a ~\emph{\RR} rule that in a given period chooses the action from the feasible interval  that minimizes the distance from the previous period's action (that is, the reference action).\footnote{For financial markets, the {\it Midpoint Rule} corresponds to the often-studied $\frac{1}{2}$-Double Auction of \cite*{chatterjee1983bargaining,Wil85,Rus94}. 
The \emph{\AR} yields a constant action path absent constraints and hence in this case minimizes any measure of adjustment cost.  
The \emph{\RR} rule is employed by stock exchanges in practice \citep*{Jantschgi25,NYSE_Rule_7.35C,NASDAQ_Options_3}.} \cref{fig:policy_comparison} illustrates these policies with an exogenously shifting feasible interval. 
\\

\vspace{-0.25em}
\begin{figure}[ht]
\centering
\scalebox{0.85}{
    \begin{tikzpicture}[
        >=stealth,
        thick,
        x=2.2cm, 
        y=5.4cm,
        axis/.style={->, gray!60, thin},
        tick label/.style={font=\sffamily\small, gray!80!black},
        grid line/.style={gray!20, thin},
        midpoint line/.style={blue!70!black, line width=1.5pt},
        inertia line/.style={red!80!black, line width=1.5pt},
        static line/.style={green!60!black, line width=1.5pt},
        interval line/.style={line width=3.5pt, gray!30, line cap=butt},
        interval cap/.style={line width=1.5pt, gray!60},
        legend box/.style={fill=white, draw=gray!30, rounded corners, inner sep=5pt}
    ]

        
        \def\Lone{0.4} \def\Rone{0.8} 
        \def\PmidOne{0.6}   
        \def\PinerOne{0.6}  
        \def\PstatOne{0.5}  

        \def\Ltwo{0.65} \def\Rtwo{0.9} 
        \def\PmidTwo{0.775} 
        \def\PinerTwo{0.65} 
        \def\PstatTwo{0.65} 

        \def\Lthree{0.3} \def\Rthree{0.6} 
        \def\PmidThree{0.45} 
        \def\PinerThree{0.6} 
        \def\PstatThree{0.5} 

        \def\Lfour{0.35} \def\Rfour{0.75} 
        \def\PmidFour{0.55} 
        \def\PinerFour{0.6}  
        \def\PstatFour{0.5}  

        \def\Lfive{0.1} \def\Rfive{0.35} 
        \def\PmidFive{0.225} 
        \def\PinerFive{0.35} 
        \def\PstatFive{0.35} 

        \foreach \y in {0.25, 0.5, 0.75, 1.0} {
            \draw[grid line] (0.5, \y) -- (5.5, \y);
            \draw (0.5, \y) -- (0.4, \y) node[left, tick label] {\y};
        }
        
        \draw[axis] (0.5, 0) -- (5.5, 0) node[right, black, font=\small] {\textbf{Time} $t$};
        \draw[axis] (0.5, 0) -- (0.5, 1.1) node[above, black, font=\small] {\textbf{Action Space}};

        \foreach \t in {1,2,3,4,5} {
            \draw[gray!40] (\t, 0) -- (\t, -0.02);
            \node[below, tick label] at (\t, -0.02) {\t};
        }

        \foreach \t/\L/\R in {1/\Lone/\Rone, 2/\Ltwo/\Rtwo, 3/\Lthree/\Rthree, 4/\Lfour/\Rfour, 5/\Lfive/\Rfive} {
            \draw[interval line] (\t, \L) -- (\t, \R);
            \draw[interval cap] (\t-0.12, \L) -- (\t+0.12, \L);
            \draw[interval cap] (\t-0.12, \R) -- (\t+0.12, \R);
        }

        \draw[midpoint line] plot coordinates {(1, \PmidOne) (2, \PmidTwo) (3, \PmidThree) (4, \PmidFour) (5, \PmidFive)};
        \draw[inertia line] plot coordinates {(1, \PinerOne) (2, \PinerTwo) (3, \PinerThree) (4, \PinerFour) (5, \PinerFive)};
        \draw[static line] plot coordinates {(1, \PstatOne) (2, \PstatTwo) (3, \PstatThree) (4, \PstatFour) (5, \PstatFive)};

        \node[legend box, anchor=north east] at (5.5, 1.25) {
            \begin{tikzpicture}[x=1cm, y=0.5cm] 
                \draw[interval line] (0.25, 0.6) -- (0.25, 0.0);
                \draw[interval cap] (0.1, 0.6) -- (0.4, 0.6);
                \draw[interval cap] (0.1, 0.0) -- (0.4, 0.0);
                \node[anchor=west, font=\sffamily\small] at (0.6, 0.3) {Feasible Set};
                
                \draw[blue!70!black, line width=1.5pt] (0, -0.8) -- (0.5, -0.8);
                \node[anchor=west, font=\sffamily\small] at (0.6, -0.8) {Midpoint};
                
                \draw[red!80!black, line width=1.5pt] (0, -1.6) -- (0.5, -1.6);
                \node[anchor=west, font=\sffamily\small] at (0.6, -1.6) {\RR};
                
                \draw[green!60!black, line width=1.5pt] (0, -2.4) -- (0.5, -2.4);
                \node[anchor=west, font=\sffamily\small] at (0.6, -2.4) {\AR at $0.5$};
            \end{tikzpicture}
        };

    \end{tikzpicture}
}
\caption{\textbf{Comparison of Benchmark Policies.} The graph traces the actions selected by the \emph{Midpoint Rule} (blue), the \emph{\RR} (red), and the \emph{\AR} rule (green). The gray bars indicate the feasible interval in a given period.}
\label{fig:policy_comparison}
\vspace{-0.25em}
\end{figure}
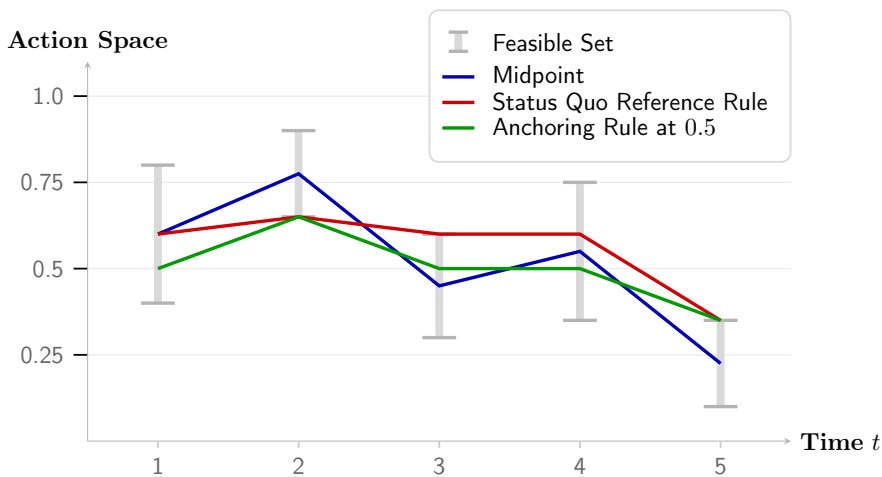

Each of these policies is distinct and may lead to more or less fluctuation in actions. 
We evaluate the three benchmark policies against \emph{quadratic variation:} 
The adjustment cost incurred in any period is the squared change compared to the previous period. 
The total cost is the sum of the period-by-period costs. 

It turns out that both the ~\AR and the ~\RR  yield substantive gains over the Midpoint Rule, reducing quadratic variation by  50\% or more.
For the analytical derivation of the resulting quadratic variation for each rule, see \cref{app:uniform-benchmark}. 
Our analysis shows that optimal rules have a similar, simple structure, even when the feasible intervals evolve endogenously.

\subsection{Summary of Results}

We consider a broad class of stochastic environments where feasibility constraints evolve randomly and endogenously. We model this evolution as the set-valued analogue of an autoregressive process of order 1 (AR(1)): just as a standard autoregressive process describes how a single point evolves based on its history, our framework describes how a \emph{feasible interval ($I_t$)} evolves over time ($t$).
This dynamic structure captures three distinct economic forces: \emph{endogeneity}, where the future interval is stochastically anchored to the agent's previous choice (captured by $C_t(P_{t-1})$); \emph{exogenous drift}, representing fundamental, stochastic shocks that shift the interval regardless of past actions (via an additive component); and \emph{stochastic width}, capturing changing levels of discretion or flexibility ($J_t$). See \cref{fig:model_dynamics} for an illustration. By varying the degree of endogeneity, our framework includes purely exogenous constraints (independent of past actions; see the example above) to random walk dynamics (where constraints move one-to-one with actions). 
Within this environment, we analyze a broad class of strictly convex adjustment costs. In particular, we allow cost functions that may be asymmetric and penalize deviations in one direction more than the other.

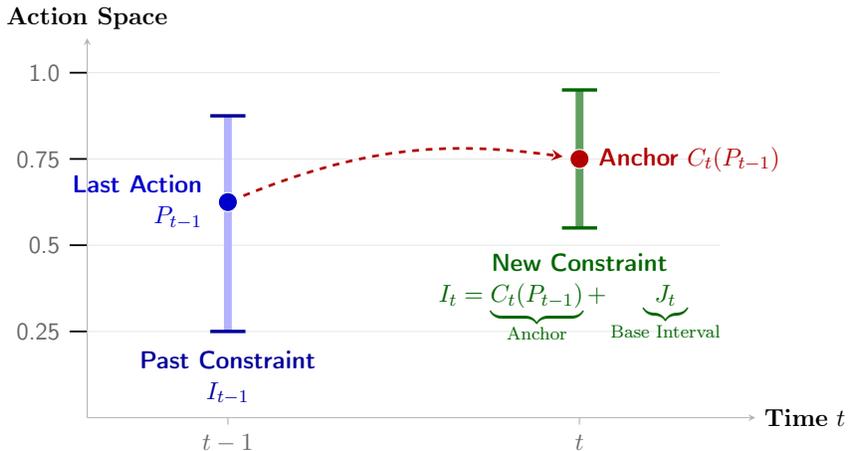
\begin{figure}[ht]
\setlength{\belowcaptionskip}{-12pt}
\centering
\scalebox{0.85}{
    \begin{tikzpicture}[
        >=stealth,
        thick,
        x=2.75cm, 
        y=5.4cm,  
        axis/.style={->, gray!60, thin},
        tick label/.style={font=\sffamily\small, gray!80!black},
        label text/.style={font=\sffamily\small},
        grid line/.style={gray!20, thin},
        price point/.style={circle, fill=blue!80!black, inner sep=3pt, draw=white, line width=0.5pt},
        anchor point/.style={circle, fill=red!70!black, inner sep=3pt, draw=white, line width=0.5pt},
        interval bar old/.style={line width=3.5pt, blue!30, line cap=butt},
        interval cap old/.style={line width=1.5pt, blue!60!black, line cap=butt},
        interval bar new/.style={line width=3.5pt, green!25!gray, line cap=butt},
        interval cap new/.style={line width=1.5pt, green!40!black, line cap=butt},
        transition arrow/.style={->, dashed, red!70!black, very thick, shorten >=3pt, shorten <=3pt}
    ]

        \def\tx{1.0}        
        \def\ty{3.0}        


        \def\yP{0.625}       
        \def\yLold{0.25}     
        \def\yRold{0.875}    
        \def\yAnchor{0.75}   
        \def\yLnew{0.55}     
        \def\yRnew{0.95}     

        \foreach \y in {0.25, 0.5, 0.75, 1.0} { 
            \draw[grid line] (0.2, \y) -- (3.8, \y);
            \draw (0.2, \y) -- (0.1, \y) node[left, tick label] {\y};
        }

        \draw[axis] (0.2, 0) -- (4.0, 0) node[right, black, font=\small] {\textbf{Time} $t$};
        \draw[axis] (0.2, 0) -- (0.2, 1.1) node[above, black, font=\small] {\textbf{Action Space}};

        \draw[gray!40] (\tx, 0) -- (\tx, -0.02) node[below, tick label] {$t-1$};
        \draw[gray!40] (\ty, 0) -- (\ty, -0.02) node[below, tick label] {$t$};


        \draw[interval bar old] (\tx, \yLold) -- (\tx, \yRold);
        \draw[interval cap old] (\tx-0.1, \yLold) -- (\tx+0.1, \yLold);
        \draw[interval cap old] (\tx-0.1, \yRold) -- (\tx+0.1, \yRold);

        \node[below=0.15cm, blue!60!black, label text, align=center] at (\tx, \yLold) 
            {\textbf{Past Constraint}\\$I_{t-1}$};

        \node[price point] (P_prev) at (\tx, \yP) {};
        \node[left=0.25cm, blue!80!black, label text, align=right] at (P_prev) 
            {\textbf{Last Action}\\$P_{t-1}$};


        \draw[interval bar new] (\ty, \yLnew) -- (\ty, \yRnew);
        \draw[interval cap new] (\ty-0.1, \yLnew) -- (\ty+0.1, \yLnew);
        \draw[interval cap new] (\ty-0.1, \yRnew) -- (\ty+0.1, \yRnew);

        \node[below=0.25cm, green!40!black, font=\small, align=center] at (\ty, \yLnew) 
            {\textbf{\sffamily New Constraint}\\
             $I_t = \underbrace{C_t(P_{t-1})}_{\text{\scriptsize Anchor}} + \underbrace{J_t}_{\text{\scriptsize Base Interval}}$};

        \node[anchor point] (C_curr) at (\ty, \yAnchor) {};
        \node[right=0.15cm, red!70!black, label text, align=left] at (C_curr) 
            {\textbf{Anchor} $C_t(P_{t-1})$};


        \draw[transition arrow] (P_prev) to[bend left=15] 
            node[midway, above=3pt, align=center, red!70!black, font=\bfseries\sffamily\small] 
            {} 
            (C_curr);
    \end{tikzpicture}
}
\caption{\textbf{Evolution of the Feasible Interval.} The figure illustrates the formation of the constraint at time $t$. First, the agent selects an action $P_{t-1}$ (blue dot) from the vertical interval $I_{t-1}$. This choice determines the location of the new stochastic anchor $C_t(P_{t-1})$ (red dot) via a stochastic  transformation. Finally, the new feasible interval $I_t$ (green bar) is realized by adding a stochastic base interval $J_t$ to the anchor.}
\label{fig:model_dynamics}
\end{figure}

Our first result (\Cref{thm:optimal_policy_adjustment}) states that the optimal policy takes the form of a simple \emph{\GRR}, in which the decision maker minimizes the distance to a target point that is continuous in the previous action.
This is done by projecting the target onto the feasible interval: The decision maker selects the target if it is feasible, the lower endpoint if the target falls below the interval, and the upper endpoint if it lies above.
 
\Cref{thm:optimal_policy_adjustment} also shows that it is optimal to disregard a significant portion of the information available to the decision maker. 
A priori the target point could depend on the current period draw of the constraint interval (or, in applications, other current period realizations of uncertainty), 
as it does, for instance, in Midpoint Rules. 
The theorem, however, establishes that it is sufficient to condition the target point only on the information from previous periods. 
This and our next results thus provide an explanation for the popularity of Reference Rules---rather than Midpoint Rules---in many stock market auctions. 

In \Cref{theorem:non-expansive}, we show that for a broad class of adjustment costs the optimal policy is strictly increasing and slowly drags the decision maker's target point back to a fixed center. The optimality of such {\GRRs} is driven by 
the fundamental trade-off imposed by strictly convex adjustment costs: 
because the cost of a change increases disproportionately with its size, two small adjustments are always cheaper than one large adjustment of the same total magnitude. 
Consequently, a myopic policy that minimizes immediate adjustment costs by repeating the previous action may be suboptimal. 
While it avoids costs today, it risks leaving the decision maker ``cornered'' by future constraints, necessitating a large, expensive correction later.
Instead of reacting only when forced, the decision maker may want to anticipate future constraints and actively move towards them gradually. 
This effectively distributes the necessary adjustment across multiple periods, smoothing the path and minimizing total adjustment costs.\footnote{Suppose the decision maker is currently forced to select a low action, $P_t \approx 0$, but expects future constraints to shift to the right. If the next interval allows it (e.g., $I_{t+1} = [0,1]$), a myopic agent would stay at $0$. However, if the interval later shifts to $I_{t+2} = [0.5, 1]$, this agent is forced to jump $0 \to 0.5$ all at once (cost $\propto 0.5^2 = 0.25$). The optimal reference rule anticipates this and starts moving early, selecting an intermediate step like $0.25$ in period $t+1$. This path ($0 \to 0.25 \to 0.5$) incurs a total cost proportional to $0.25^2 + 0.25^2 = 0.125$, cutting the total adjustment burden in half.} 

\Cref{theorem:rwinertia} states that in the random walk limit --- where the location of the feasible interval moves one-to-one with the previously chosen action --- if the adjustment cost depends only on the size of the change and not the absolute action level, the \RR is the unique optimal policy. 
In this case, the optimal target is simply the last action.
The reason is that the random walk environment is translation invariant. 
Unlike a mean-reverting case, where the decision maker navigates toward a fixed attracting center to minimize the impact of future constraints, here the constraints move with the agent. The complex infinite-horizon problem effectively decouples into a sequence of static choices: at every step, simply minimize the immediate adjustment cost. 

We next turn to the objective of minimizing the variance of the action path, that is, minimizing the squared distance of the actions to their mean.\footnote{This objective violates some of the assumptions made for \Cref{thm:optimal_policy_adjustment}.}
\Cref{theorem:varanchor} states that for this objective, the optimal policy is an~\AR that minimizes the distance to a constant anchor. 
Intuitively, the variance objective penalizes \emph{action dispersion} rather than the size of the adjustment. 
Since the decision maker faces no penalty for the period-by-period magnitude of the change, the intertemporal smoothing motive vanishes.
The agent is willing to make large jumps to return to the attracting position in the state space immediately, rather than drifting there gradually.
Crucially, because constraints are endogenous, the optimal anchor is not necessarily the long-run average action. 
To keep the process centered, the policy may aim at a different value to counteract the natural drift of the constraints.
For exogenous environments, we characterize this anchor via a first-order condition (\Cref{theorem:optanchor}). The optimal anchor must equalize the expected distortion from the lower and upper constraints.
In our uniform example, the~\AR at $\frac{1}{2}$ is globally optimal.

In \Cref{sec:applications}, we consider several applications that are nested within our general framework.  
First, for \emph{repeated double auctions}, \Cref{theorem:rwinertia} rationalizes the prevalence of reference prices: if the market fundamental approximates a random walk, the \RR found on stock exchanges is indeed optimal for minimizing transaction volatility and beats the standard midpoint rules studied in theory.
Second, in \emph{OTC markets}, we examine the incentive properties of a \GRR within a standard model of bilateral trade, where the feasible interval is not exogenous but set by strategic counterparties. The reference price creates a sharp non-convexity between a monopsony and a price-taking regime, which precludes the existence of standard Bayes-Nash equilibria in increasing strategies, but allow for a pooling equilibrium similar to a posted price mechanism at the Reference price. This may explain price-stickiness in OTC markets.
Third, for \emph{capacity planning} in supply chains, we illustrate how optimal production smoothing acts as a buffer against simultaneous supply and demand shocks.
Finally, in \emph{political agenda setting}, we demonstrate that a policymaker's optimal strategy bifurcates depending on their objective: minimizing ``flip-flopping'' (adjustment costs) leads to path-dependent policy adjustments (\Cref{thm:optimal_policy_adjustment}), while minimizing ideological inconsistency (variance) dictates a static agenda (\Cref{theorem:varanchor}). 

Our analysis is based on the framework of stochastic dynamic programming and Markov decision processes in discrete time 
\citep{Puterman1994,bertsekas1996stochastic,HernandezLermaLasserre1999}. 
In modeling the evolution of feasibility, we build on the theory of 
autoregressive dynamics \citep{hamilton2020time}. 
Most of the relevant economic literature has been developed within the context of specific application, which we discuss in the above-mentioned \Cref{sec:applications}. 

\section{The Model}
\label{sec:model}

We study the problem of a decision maker who seeks to stabilize a stochastic process that is subject to \emph{time-varying} and \emph{endogenous} feasibility constraints. At the beginning of each period $t=1,2,\dots$, the decision maker observes a \emph{feasible interval} $I_t = [L_t, R_t] \subset \mathbb{R}$.\footnote{
We define the state space as the entire real line for analytical convenience, as it avoids the technical complexities of boundary conditions.
For many applications, such as political positioning, this seems natural.
In the specific context of prices, the assumption of an unbounded (and negative) support is appropriate for two reasons.
First, in many applications, $P_t$ represents the \emph{logarithm} of an asset price, a basis spread, or a net inventory position, all of which naturally span $\mathbb{R}$. Even for nominal prices, negative values have precedents in commodity markets (e.g., electricity or oil futures).
Second, under the stability conditions we impose below, if the drift terms ($B_t$) are sufficiently positive, or the persistance terms ($A_t$) are sufficiently small, the probability of the process visiting negative values can be made arbitrarily small.}

The decision maker must choose an action $P_t$ within this interval, $P_t \;\in\; I_t$.
A core friction of the model is that the constraints are endogenous: the interval available today may depend on the action chosen yesterday. 
This creates a dynamic feedback loop where current stabilization efforts constrain future flexibility.

\subsection{The Evolution of the Feasible Interval}

We model the evolution of the feasible interval as a \emph{stochastic linear transition process}.

\smallskip
\begin{assumption}[Transition Dynamics]\label{assum:transitiondynamics}
The evolution of the feasible interval $I_t$ is generated by  a random \emph{base interval} $J_t = [U_t, V_t]$ drawn i.i.d.\ from a fixed distribution and shifted  by a location parameter dependent on the previous action. Specifically:
\begin{equation}
\label{eq:transition-affine}
I_t \;=\; \left[L_{t},R_{t}\right] \;=\;C_t(P_{t-1}) \;+\; J_t\;=\; \left[C_t(P_{t-1}) \;+\; U_t\;,\;C_t(P_{t-1}) \;+\; V_t\right] \tag{Feasible Interval}
\end{equation}
The stochastic anchor 
\begin{equation}
C_t(P_{t-1}) = A_t P_{t-1} + B_t \tag{Stochastic Anchor}
\label{eq:AR1}
\end{equation}
is an affine function of the previous action, where the coefficients $(A_t, B_t)$ are i.i.d.\ random variables.
\end{assumption}
Equations \eqref{eq:transition-affine}  and \eqref{eq:AR1} are a set-valued analog of a standard AR(1) process; the anchor $C_t$ follows an AR(1) process and the interval is drawn randomly around the anchor. 
To ensure that the control problem is well-posed on an unbounded domain, we impose the following stability conditions on the transition parameters. 

\begin{assumption}[$L^{p}$-Stability]\label{assum:stability}
There exists some $p \geq 2$, such that the scaling factor satisfies $\mathbb{E}[\vert A_t\vert ^p] < 1$ and $\mathbb{E}[|B_t|^p] \;+\; \mathbb{E}[|U_t|^p] \;+\; \mathbb{E}[|V_t|^p] \;<\; \infty.$
\end{assumption}

This assumption guarantees that the system is dissipative and admits a stationary distribution with finite moments up to order $p$. 
Crucially, it also ensures that the expected total cost remains finite even for adjustment costs with polynomial growth of order $p$ (introduced in \cref{assum:cost}). 

\medskip
\noindent The model captures three distinct forces:
\begin{itemize}[leftmargin=*, labelindent=0pt]
\item \emph{Persistence $(A_t)$:} This coefficient measures the \emph{endogeneity} of the future constraint. It determines the degree of path dependence in the system. Two important special cases are \emph{exogenous feasibility} where the feasible interval only moves exogenously ($A_t \equiv 0$) and \emph{random walk} where the feasibly interval is fully path-depedent ($A_t \approx 1$). Because the case $A_{t}=1$ violates \cref{assum:stability}, we treat it separately in \Cref{theorem:rwinertia}.
    \item \emph{Drift $(B_t)$:} This captures exogenous shifts in the \emph{location} of the feasible interval. In our double auction application, for example, this captures new information or fundamental value shocks.
    \item \emph{Feasibility Constraint $(J_t)$:} The random width of the base interval, $|V_t - U_t|$, represents the room for maneuver. A wide interval implies high discretion, while a narrow interval forces the decision maker's hand.
\end{itemize}

\subsection{Admissible Policies}

In each period, the decision maker must select an action $P_t$ from the feasible interval $I_{t}$. 
We adopt the most permissive definition of a strategy, allowing for history-dependence and randomization.

\begin{definition}[Admissible Policy]
Let $\mathcal{F}_t$ denote the information set available at time $t$, comprising of all past actions and constraints. An admissible policy $\pi = \{\nu_1, \nu_2, \dots\}$ is a sequence of decision rules in which $\nu_t$ maps the information $\mathcal{F}_t$ to a probability distribution over $\mathbb{R}$.
A policy is \emph{admissible} if the selected action $P_t$ respects the constraint almost surely: $P_t \sim \nu_t(\cdot \mid \mathcal{F}_t) \implies P_t \in I_t$.
\end{definition}

Many practical policies --- including the canonical examples we  introduce below --- depend only on the current state of the system. This motivates the following definition.

\begin{definition}[Deterministic Stationary Markov Policy]
A policy is a \emph{stationary Markov policy} if it can be represented by a single time-invariant measurable function $\mu: \mathbb{R} \times \mathcal{I} \to \mathbb{R}$ such that:
$P_t \;=\; \mu(P_{t-1}, I_t) \quad \text{almost surely for all } t.$
\end{definition}

\noindent Stationary Markov policies coupled with the independent interval process (\cref{assum:transitiondynamics}) define a time-homogeneous Markov chain for the action $P_t$. 
We say that the policy $\mu$ \emph{induces a stationary distribution} if this Markov chain admits an invariant probability measure $\psi$.\footnote{Formally, if the action at time $t-1$ is distributed according to some distribution, then the action at time $t$ follows the same distribution.}
In \cref{ssec:stability}, we show that \cref{assum:transitiondynamics} ensures that the action system is globally dissipative: under any admissible policy, the moments of the action path remain uniformly bounded. 
Moreover, for stationary Markov policies, this boundedness guarantees the existence of an invariant probability measure.

\smallskip
We next introduce two structural families of policies that serve as intuitive benchmarks:
\begin{itemize}[leftmargin=*, labelindent=0pt]
    \item \emph{Convex-Combination Rules:} The decision maker selects the action within the interval based on fixed relative weights. For $k \in [0,1]$, $P_t \;=\; (1-k)L_t + k R_t$.
    This includes adhering to the lower bound ($k=0$), the upper bound ($k=1$), or the \emph{midpoint rule}  ($k=0.5$). These rules are purely reactive to the constraint.
    \smallskip
    
 \item \emph{Reference Rules:} The decision maker targets an ideal action $r(P_{t-1})$ and moves as close to it as the interval allows. Let $\Pi_{[L,R]}(x) = \min(\max(x, L), R)$ denote the Euclidean projection of $x$ onto the interval $[L,R]$. Then the rule is given by: $P_t \;=\; \Pi_{I_t}\big( r(P_{t-1}) \big)$.
    Two canonical policies within this class are:
    \smallskip
    \begin{itemize}
        \item \emph{\AR:} $r(p) \equiv \mu$. The decision maker consistently targets a fixed fundamental value $\mu$, regardless of past actions, correcting deviations whenever the feasible interval permits.
        \item \emph{\RR:} $r(p) = p$. The decision maker myopically targets the previous action $P_{t-1}$. Under this rule, they maintain the status quo ($P_t = P_{t-1}$) whenever it is feasible, and make the minimal necessary adjustment to satisfy the constraint only when the previous action falls outside the new interval $I_t$.
        \end{itemize}
\end{itemize}

\subsection{Objective: Minimizing Adjustment Costs}

The decision maker aims to minimize the aggregate of period-by-period adjustment costs $c: \mathbb{R} \times \mathbb{R} \to [0, \infty)$ that depends on the previous action $P_{t-1}$ and the current action $P_t$. The objective is to find a policy $\pi$ that minimizes the long-run average cost:
\begin{equation}
\label{eq:general_objective}
\mathcal{C}(\pi) \;:=\; \limsup_{T\to\infty} \frac{1}{T} \sum_{t=1}^T \mathbb E^\pi \big[ c(P_{t-1}, P_t) \big].\tag{Long-Run Adjustment Costs}
\end{equation}

To capture the economic meaning of adjustment costs, we impose the following assumptions on the cost function.

\begin{assumption}[Adjustment Cost Structure]
We assume that the cost function $c(x,Y)$ satisfies:
\begin{enumerate}[leftmargin=*, labelindent=0pt]
    \item \emph{Regularity:} $c(x,y)$ is jointly convex and twice continuously differentiable and strictly convex in $y$.

    \item \emph{No Cost for Inaction:} $c(x,x) = 0$ and $\nabla c(x,x) = 0$.
    
    \item \emph{Monotonicity:} For any fixed past action $x$, the cost is strictly increasing as the current action $y$ moves away from $x$. That is, $\frac{\partial c}{\partial y} > 0$ for $y > x$ and $\frac{\partial c}{\partial y} < 0$ for $y < x$.
    
    \item \emph{Polynomial Growth:} The cost function $c(x,y)$ grows at most polynomially of the same order $p$ defined in  \cref{assum:stability}. Specifically, there exists a constant $K > 0$ such that for all $x, y \in \mathbb{R}$: $c(x, y) \;\le\; K \left( 1 + |x|^p + |y|^p \right)$.\footnote{This ensures that the expected single-period cost is finite.}
\end{enumerate}
\label{assum:cost}
\end{assumption}

\noindent For some applications, we will consider a special class of adjustment costs, that only depend on the size of the adjustment, not the location.

\begin{definition}[Translation Invariant Costs]\label{def:translationinvariant}
We say that adjustment costs are \emph{translation invariant}, if there exists a strictly convex, continuously differentiable function $\phi: \mathbb{R} \to [0, \infty)$ with $\phi(0)=0$ and $\phi'(0)=0$ such that: $c(x,y) \;=\; \phi(y-x)$.
\end{definition}

\smallskip
\cref{assum:cost} encompasses many canonical measures of adjustment costs and volatility. For example,  quadratic cost functions ($c(x,y) \;=\; (y-x)^2$) and the resulting aggregate quadratic variation, which are the standard proxy for volatility and execution risk. Next, cost functions that are strongly convex for small adjustments and less steep for extreme values, such as, for example, the Pseudo-Huber loss function used in robust regression. Finally, cost functions may be asymmetric, for example penalizing downward adjustments less than upward adjustments.

An important special case, which does not satisfy \cref{assum:cost} but appears as a limit case,\footnote{This cost function violates the strict convexity assumption in the second argument.} is the squared error from a fixed reference point, that is, for a constant $M \in \mathbb{R}: c(x,y)=(y-M)^{2}$.
The resulting aggregate objective corresponds to the overall dispersion of the actions and their variance (when $M$ is set to mean). We will study this cost function separately in \Cref{ssec:variance}.
    
\section{The Optimality of Reference Rules}
\label{sec:qv-opt}

We now determine the \emph{globally optimal} mechanism. Adjustment cost minimization fundamentally requires history dependence. If the previous action was $P_{t-1}$, moving to any $P_t \neq P_{t-1}$ incurs a cost. 
Thus, the optimal policy must balance the immediate benefit of staying at (or close to) $P_{t-1}$ against the future risk of large adjustments due to the evolving constraints.
We prove that the optimal mechanism always takes a simple, deterministic structural form: the \emph{\GRR Policy}.

\begin{theorem}[Optimality of Reference Rules]
\label{thm:optimal_policy_adjustment}
Consider the objective of minimizing adjustment costs satisfying \cref{assum:cost}.
There exists a unique optimal stationary policy. 
This policy takes the form of a \GRR: there exists a continuous target function $r: \mathbb{R} \to \mathbb{R}$ such that the optimal action is the projection of the target onto the feasible interval: $P_t^* \;=\; \Pi_{I_t} \big( r(P_{t-1}) \big).$
\end{theorem}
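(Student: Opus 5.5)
We will treat the problem as an average-cost Markov decision process with state $x=P_{t-1}$ (the interval $I_t$ is revealed at the start of period $t$, so the decision at $t$ depends on $(x,I_t)$). The plan is to go through the average-cost optimality equation (ACOE),
\begin{equation*}
\rho + h(x) \;=\; \mathbb{E}\Big[\min_{y\in C_t(x)+J_t}\big\{c(x,y)+h(y)\big\}\Big],
\end{equation*}
where $\rho$ is the optimal average cost and $h$ is the relative value function. We will obtain it by the vanishing-discount approach. For $\beta<1$, let $V_\beta$ be the discounted value function, which solves
\begin{equation*}
V_\beta(x)=\mathbb{E}\big[\min_{y\in I}\{c(x,y)+\beta V_\beta(y)\}\big].
\end{equation*}
Assumptions~\ref{assum:stability} and~\ref{assum:cost} give a weighted-norm contraction with weight $w(x)=1+|x|^p$. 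Indeed, $\mathbb{E}|A_t x+B_t+U_t|^p\le \alpha|x|^p+K'$ with $\alpha=\mathbb{E}|A_t|^p<1$ (Minkowski), and the same bound holds for any point of $I_t$ because $|y|\le |C_t|+|U_t|+|V_t|$. This is the dissipativity from \cref{ssec:stability}.

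The structural core is convexity. We will show by value iteration that $V_\beta$ is convex. Suppose $W$ is convex. Then $(x,y)\mapsto c(x,y)+\beta W(y)$ is jointly convex and strictly convex in $y$. The feasible set $\{(x,y): y-A x-B\in[U,V]\}$ is convex in $(x,y)$ for each realization of $(A,B,U,V)$, because it is cut out by affine inequalities. Partial minimization of a jointly convex function over a convex set preserves convexity, and expectation preserves it too. The weighted-norm limit of convex functions is convex, so $V_\beta$ is convex.

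Now fix $x$ and consider $g_x(y)=c(x,y)+\beta V_\beta(y)$ on $\mathbb{R}$. It is strictly convex and coercive, since $c(x,\cdot)$ is strictly increasing away from $x$ and $V_\beta\ge 0$. So it has a unique unconstrained minimizer $r_\beta(x)$. For a strictly convex function on an interval, the constrained minimizer is exactly the projection $\Pi_{I}(r_\beta(x))$. Continuity of $r_\beta$ follows from Berge's maximum theorem together with uniqueness, using joint continuity of $c$ and continuity of the convex $V_\beta$. The key observation is that $r_\beta(x)$ depends only on $x$, not on the current draw $I_t$. This is what makes the target condition only on past information.

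Next we pass to $\beta\to1$. Set $h_\beta=V_\beta-V_\beta(x_0)$. The main obstacle is to show that the family $\{h_\beta\}$ is locally uniformly bounded and equi-Lipschitz, and that $(1-\beta)V_\beta(x_0)$ stays bounded. To get this, we will use a coupling argument. Run the process from $x$ and from $x_0$ under the same shocks and the same reference rule $r_\beta$. Since projection onto a common-width interval is nonexpansive, the gap between the two paths contracts in $L^p$ at rate roughly $\mathbb{E}|A|^p$. Here we need to check that $|r_\beta(x)-r_\beta(x')|$ is controlled, or else use a coupling that simply copies the other chain's action when feasible. Combined with polynomial growth of $c$, this gives $|h_\beta(x)|\le K(1+|x|^p)$ uniformly in $\beta$. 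Convexity then gives local Lipschitz bounds for free, since a convex function bounded on a larger ball is Lipschitz on a smaller one. Arzelà--Ascoli yields a subsequence with $h_\beta\to h$ (convex) and $(1-\beta)V_\beta\to\rho$. Passing to the limit in the Bellman equation, via dominated convergence with the $w$-bound, gives the ACOE with $h$ convex.

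With the ACOE in hand, we repeat the projection argument with $h$ in place of $\beta V_\beta$. This gives the unique minimizer $\Pi_{I_t}(r(x))$ with $r$ continuous. The standard verification argument then applies: the term $h(P_T)/T\to0$ by the uniform moment bounds under any admissible policy. It shows that this stationary rule attains $\rho$ and that no admissible (history-dependent, randomized) policy does better. For uniqueness among stationary policies, note that any stationary optimal policy must attain the minimum in the ACOE $\psi$-almost everywhere under its invariant measure. Strict convexity makes the minimizer unique, so the optimal stationary policy is unique in this sense. If needed, we would also argue that $h$ is unique up to a constant, using the contraction and coupling above.
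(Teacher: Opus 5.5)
Your architecture matches the paper's proof step for step. Both use the discounted Bellman equation with weight $1+|x|^p$. Both prove convexity of $V_\beta$ by value iteration, via partial minimization of a jointly convex function over the convex graph $\{(x,y): y-Ax-B\in[U,V]\}$. Both then pass through the vanishing-discount limit to the ACOE, obtain the projection from strict convexity, get continuity of $r$ from Berge, and finish with verification.

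You go beyond the paper on the uniform control of $h_\beta$, which the paper only asserts from the drift condition. Of your two couplings, the second is the one that works. Let the chain from $x$ play $P_t=\Pi_{I_t}(P'_t)$, where $P'$ is optimal from $x'$. The two intervals are translates by $A_t(P_{t-1}-P'_{t-1})$, so $|P_t-P'_t|\le |A_t|\,|P_{t-1}-P'_{t-1}|$. By contrast, running both chains under $r_\beta$ only gives $|P_t-P'_t|\le \max\{|A_t|\,|P_{t-1}-P'_{t-1}|,\ |r_\beta(P_{t-1})-r_\beta(P'_{t-1})|\}$, which need not contract. Combine the first bound with $|\nabla c(z)|\le C(1+|z|)^{p-1}$, which follows from convexity and polynomial growth, and with H\"older for exponents $p/(p-1)$ and $p$. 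This yields $|V_\beta(x)-V_\beta(x')|\le C(1+|x|+|x'|)^{p-1}|x-x'|$ uniformly in $\beta$, which is what Arzel\`a--Ascoli needs. Your uniqueness statement, almost everywhere under the invariant measure, is also more accurate than the paper's.

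The step that does not go through as written is ``repeat the projection argument with $h$ in place of $\beta V_\beta$.'' For $\beta<1$ you got coercivity of $c(x,\cdot)+\beta V_\beta$ from $V_\beta\ge 0$. That does not survive normalization: $h_\beta\ge -V_\beta(x_0)$, and $V_\beta(x_0)$ behaves like $\rho/(1-\beta)$. The limit $h$ is only known to be convex with $|h|\le K(1+|x|^p)$. Meanwhile $c(x,\cdot)$ may grow only linearly, as with pseudo-Huber. So if the asymptotic slope of $h$ at $+\infty$ were at most $-\lim_{y\to\infty}\partial_y c(x,y)$, then $c(x,\cdot)+h$ would be strictly decreasing and $r(x)=+\infty$. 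That contradicts $r:\mathbb{R}\to\mathbb{R}$ and breaks the compactness your Berge step needs. The paper's Step 4.2 makes the same slip, since its bound $h\ge -V_\beta(0)$ is also $\beta$-dependent.

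The repair comes from the ACOE itself:
\begin{itemize}
\item Suppose $\lim_{y\to\infty}h'(y)=-s<0$. Then every subgradient of $h$ is $\le -s$, so $h$ is decreasing.
\item Hence its minimum over $I(x)$ is $h(R(x))$, where $R(x)=Ax+B+V$. By convexity, $h(R(x))\ge h(x)+\xi_x\,(R(x)-x)$ for $\xi_x\in\partial h(x)$.
\item Since $c\ge 0$, the ACOE gives $h(x)+\rho\ge \mathbb{E}[h(R(x))]$. For $x>0$, using $\xi_0\le \xi_x\le -s$, this gives $\rho\ge \xi_x\,\mathbb{E}[R(x)-x]\ge s(1-\mathbb{E}A)\,x-|\xi_0|\,|\mathbb{E}[B+V]|$.
\item This is impossible as $x\to\infty$, because $\mathbb{E}A\le(\mathbb{E}|A|^p)^{1/p}<1$.
\end{itemize}
Symmetrically, using $L(x)=Ax+B+U$, one gets $\lim_{y\to-\infty}h'(y)\le 0$. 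Since $\partial_y c(x,y)$ has a strictly positive limit as $y\to+\infty$ and a strictly negative one as $y\to-\infty$, $c(x,\cdot)+h$ is coercive, locally uniformly in $x$. That gives a finite $r(x)$ and the compactness Berge's theorem requires, and the rest of your argument then goes through.
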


\noindent\textbf{Proof Outline and Intuition.}
The formal proof (provided in \Cref{ssec:ProofTheorem1}) proceeds by casting the minimization of adjustment costs as an infinite-horizon \emph{Markov Decision Process (MDP)}. 
We utilize the machinery of Dynamic Programming to characterize the optimal policy. 

\noindent\textbf{Part 1: The Mathematical Framework.}
The problem is to minimize the long-run average cost $\mathcal{C}(\pi) \;:=\; \limsup_{T\to\infty} \frac{1}{T} \sum_{t=1}^T \mathbb E^\pi \big[ c(P_{t-1}, P_t) \big]$. This is a classic Average Cost MDP with a continuous state space $\mathbb{R}$ and stochastic constraints.
\begin{itemize}[leftmargin=*, labelindent=0pt]
    \item \textbf{The Discounted Problem.} To analyze the system, we first consider the $\beta$-discounted problem, where the objective is to minimize the total present value of future costs:
    \[
    \mathbb{E} \left[ \sum_{t=1}^\infty \beta^{t-1} c(P_{t-1}, P_t) \right], \quad \text{for } \beta \in (0,1). \tag{Discounted Problem}
    \]
    Using a discount factor ensures the total cost is finite. We define the value function $V_\beta(x)$ as the minimum achievable expected cost starting from an initial action $P_0 = x$. This function satisfies:
    \[
    V_\beta(x) \;=\; \mathbb{E}\left[ \min_{y \in I(x)} \big\{ c(x,y) + \beta V_\beta(y) \big\} \right]. \tag{Bellman Equation}
    \]
    We prove that the Bellman operator is a contraction on a weighted Banach space, ensuring that a unique, convex solution $V_\beta$ exists.
    \smallskip
    \newpage

    \item \textbf{The Average Cost Limit (ACOE).} By taking the vanishing discount limit ($\beta \uparrow 1$), we derive the \emph{Average Cost Optimality Equation}, which characterizes the solution to the original problem. The optimal policy minimizes the sum of the immediate adjustment cost and the differential value function $h(y)$, which captures the relative future cost of being in state $y$:
    \[
    \lambda^* + h(x) \;=\; \mathbb{E}\left[ \min_{y \in I(x)} \big\{ c(x,y) + h(y) \big\} \right]. \tag{Average Cost Optimality Equation}
    \]
    In this equation, the constant $\lambda^*$ represents the \emph{minimal long-run average cost}. 
\end{itemize}

\noindent\textbf{Part 2: The Separation Principle.}
The solution to the ACOE reveals a striking structural property: the optimal policy decouples into a \emph{dynamic planning phase} and a \emph{static execution phase}. 

\begin{enumerate}[leftmargin=*, labelindent=0pt]
    \item \textbf{The Optimal Target.}
    First, the decision maker solves the minimization problem as if the current constraint $I_t$ did not exist. They face a strictly convex objective function that combines two opposing forces:
    \begin{itemize}
        \item \emph{Immediate Inertia:} The adjustment cost $c(P_{t-1},P_{t})$ pulls the action toward the previous value to minimize volatility.
        \item \emph{Future Safety:} The value function $h(p)$ encodes the expected cost of future constraints. It pulls the action toward a safer action to avoid large deviations later.
    \end{itemize}
    The unconstrained minimizer of this sum is the \emph{Reference Target} $r(P_{t-1})$. It represents the ideal compromise between immediate inertia and future safety.
    
    \item \textbf{The Feasibility Constraint.}
    Next, the interval $I_t$ realizes. 
    Because the total objective is strictly convex, the solution is straightforward: the decision maker should get as close to the reference point as the interval allows. 
    Mathematically, this is the projection of the target $r(P_{t-1})$ onto the interval $I_t$.
    If the target is feasible, the agent chooses it exactly. If the target is outside the interval, the agent chooses the nearest boundary. \hfill $\diamondsuit$
\end{enumerate}

Next, we show that if adjustment costs are submodular and the marginal cost is at least as sensitive to the current action as to the past action, then the optimal target function is increasing and non-expansive, that is, the target function moves less than one-to-one with the last action.

\newpage

\begin{theorem}[Non-Expansiveness of Optimal \GRRs]
\label{theorem:non-expansive}
Consider the objective of minimizing adjustment costs satisfying \cref{assum:cost}.
If the cost function additionally satisfies $c_{xy}(x,y) < 0$ and $|c_{xy}| \le c_{yy}$ for all $x,y$, the optimal target function is increasing and non-expansive: for any distinct actions $p_1 < p_2$:
$0 \;\le\; r(p_2) - r(p_1) \;<\; p_2 - p_1$. Moreover, there exists a unique fixed point $p^\ast$ with $r(p^\ast) = p^\ast$.
\end{theorem}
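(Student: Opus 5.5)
Building on \Cref{thm:optimal_policy_adjustment}, the target is characterized as the unique unconstrained minimizer
\[
r(x) \;=\; \argmin_{y\in\mathbb{R}} \big\{ c(x,y) + h(y) \big\},
\]
where $h$ is the (convex) differential value function from the ACOE. The plan is to study the first-order condition $c_y(x,r(x)) + h'(r(x)) = 0$ and apply implicit-function and comparative-statics arguments. Since $h$ need not be $C^2$, I would first work with the discounted value functions $V_\beta$. These are convex, and after smoothing, or by working with one-sided derivatives, the same argument applies. I then pass to the limit $\beta\uparrow 1$, using that the discounted targets $r_\beta$ converge to $r$ by the uniqueness and continuity established in \Cref{thm:optimal_policy_adjustment}.

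Monotonicity comes first. Since $c_{xy}<0$, the objective $c(x,y)+h(y)$ has strictly decreasing differences in $(x,-y)$, equivalently increasing differences in $(x,y)$ for the minimization. Topkis' theorem then gives that $r$ is nondecreasing. Concretely, for $x_1<x_2$ we have $c_y(x_2,y) < c_y(x_1,y)$ for every $y$, so the strictly increasing function $y\mapsto c_y(x_2,y)+h'(y)$ crosses zero weakly to the right of $r(x_1)$. Hence $r(x_2)\ge r(x_1)$.

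Non-expansiveness is next. When $h$ is twice differentiable, implicit differentiation gives
\[
r'(x) \;=\; \frac{-c_{xy}}{c_{yy}+h''}.
\]
By the hypotheses $0\le -c_{xy}\le c_{yy}$ and $h''\ge 0$, this lies in $[0,1]$. To obtain the strict inequality $r(p_2)-r(p_1)<p_2-p_1$ without assuming $h\in C^2$, I would argue directly. Suppose $r(p_2)\ge r(p_1)+(p_2-p_1)$, and set $\delta=p_2-p_1$ and $y_1=r(p_1)$. Along the diagonal shift, $\frac{d}{ds}c_y(p_1+s,y_1+s)=c_{xy}+c_{yy}\ge 0$, so $c_y(p_2,y_1+\delta)\ge c_y(p_1,y_1)$. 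Convexity of $h$ gives $h'(y_1+\delta)\ge h'(y_1)$. Adding the two, the FOC expression at $(p_2,y_1+\delta)$ is at least $0$, which forces $r(p_2)\le y_1+\delta$. Equality would require both inequalities to be tight.

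The main obstacle is exactly this strictness: ruling out $c_{xy}+c_{yy}\equiv 0$ together with $h$ being affine on $[y_1,y_1+\delta]$. I would first try to show that $h$ is strictly convex. The plausible route is that $h(y)=\mathbb{E}[\min_{z\in I(y)}\{c(y,z)+h(z)\}]$ inherits strict convexity from the strict convexity of $c$ in its second argument. The constraint is binding with positive probability because $J_t$ is nondegenerate and the anchor moves with $A_t$, which transmits curvature. If that fails, a fallback is to note that the equality case makes the FOC residual identically zero along a segment, and to contradict uniqueness or strict convexity in $y$.

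Finally, the fixed point. The map $g(p)=r(p)-p$ is continuous and strictly decreasing by strict non-expansiveness. Uniqueness of a zero is then immediate. For existence I need $g(p)>0$ for very negative $p$ and $g(p)<0$ for very positive $p$. I would obtain this from the stability assumption \cref{assum:stability}: as $|p|\to\infty$, the future constraints $I(p)$ are centered near $\mathbb{E}[A]p+\mathbb{E}[B]$ with $|\mathbb{E}A|<1$, so $h$ grows and $h'(p)$ has the sign of $p$ for large $|p|$. The FOC $c_y(p,r(p))=-h'(r(p))$ then pushes $r(p)$ strictly toward the interior. If this sign argument is delicate, a cleaner route is to note that the induced chain has an invariant distribution with finite moments, which rules out $r(p)>p$ for all $p$, and symmetrically rules out $r(p)<p$ for all $p$. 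The intermediate value theorem then yields $p^\ast$.
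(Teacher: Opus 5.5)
\paragraph{Where your proposal matches the paper, and where it stops.} Your weak monotonicity argument (single crossing from $c_{xy}<0$) is correct. So is your diagonal-shift bound $r(p_2)-r(p_1)\le p_2-p_1$, which comes from $c_{xy}+c_{yy}\ge 0$ together with monotonicity of $\partial h$. These are the same two mechanisms the paper uses, packaged there as a mean-value ``balance equation''. Arguing directly with one-sided derivatives of $h$ is the right way. The detour through $V_\beta$ is unnecessary, and it could at best give weak inequalities: strict ones do not survive $\beta\uparrow 1$, and $r_\beta\to r$ is not established in \Cref{thm:optimal_policy_adjustment}.

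The genuine gap is the step you flag yourself. Since $c_y(x,x)=0$, a point $p$ is a fixed point of $r$ if and only if $0\in\partial h(p)$; that is, the fixed points are exactly the minimizers of $h$. For translation-invariant costs $c_{xy}+c_{yy}\equiv 0$, so the first of your two inequalities is always tight. Both the strict inequality and the uniqueness of $p^\ast$ therefore rest entirely on $h$ having no affine piece. Your fallback does not deliver this. In the equality case you simply get $r(p_1+s)=r(p_1)+s$ for all $s\in[0,\delta]$, which is consistent with a unique minimizer and with strict convexity in $y$, so nothing is contradicted.

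Your alternative existence route also fails. By \Cref{proposition:universal_stability}, every stationary Markov policy has an invariant law with finite moments, even one with $r(p)>p$ for all $p$, because the constraints confine the process, not the target. The coercivity route (fixed points equal $\argmin h$, as in the paper) is the one that works.

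\paragraph{The missing step is false under the stated assumptions.} Nondegeneracy of $J_t$ is not enough. Take $A_t\equiv 0$, $B_t\equiv 0$, $U_t\sim\mathrm{Unif}[-3,-1]$, $V_t\sim\mathrm{Unif}[1,3]$ and $c(x,y)=(y-x)^2$. All hypotheses hold with $p=2$, $c_{xy}=-2$, $c_{yy}=2$.
\begin{itemize}
\item Since $[-1,1]\subset I_t$ always, staying put is costless from any state in $[-1,1]$. Hence $V_\beta\ge 0$ with $V_\beta\equiv 0$ on $[-1,1]$.
\item It follows that $h\ge 0$ and $h\equiv 0$ on $[-1,1]$.
\item The target you work with, $r(x)=\argmin_z\{(z-x)^2+h(z)\}$, therefore equals $x$ for every $x\in[-1,1]$.
\end{itemize}
The strict inequality fails, and every point of $[-1,1]$ is a fixed point.

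The paper's proof has the same hole. It imports strict convexity of $h$ from Step 5.1 of the proof of \Cref{thm:optimal_policy_adjustment}. That step asserts that at every state the target is infeasible with positive probability, which is precisely what fails here.

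\paragraph{What would close the argument.} You need an extra hypothesis guaranteeing that, on every nontrivial interval of states, with positive probability a boundary binds on some subinterval while $A_t\neq 1$. This holds, for example, if $B_t$ has full support on $\mathbb{R}$ and is independent of $(A_t,J_t)$. On that event the realized value is $g_\omega(x)=c(x,A_tx+K)+h(A_tx+K)$. For translation-invariant costs this is the sum of the strictly convex term $\phi((A_t-1)x+K)$ and a convex function. Since $h+\lambda^\ast=\mathbb{E}[g_\omega]$ with every $g_\omega$ convex, $h$ being affine on a segment would force $g_\omega$ to be affine there almost surely, a contradiction. With $h$ strictly convex, your diagonal-shift argument gives the strict inequality, and $\argmin h$ is a singleton.
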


The proof is relegated to \Cref{ssec:proof-non-expansive}.
The submodularity and non-expansiveness conditions are automatically satisfied by all strictly convex, translation-invariant costs (\cref{def:translationinvariant}).\footnote{
If $c(x,y) = \phi(y-x)$), then $c_{xy} = -\phi''$. 
Provided $\phi$ is strictly convex ($\phi'' > 0$), this implies $c_{xy} < 0$ (strictly submodular) and $|c_{xy}| = c_{yy}$ (satisfying the non-expansiveness condition with equality).}
Consequently, standard specifications such as quadratic variation and Pseudo-Huber loss yield non-expansive policies.
The optimal policy thus acts as a \emph{damped smoother}, strictly interpolating between the \RR ($r'(p)=1$) and the \AR ($r'(p)=0$).

\medskip 
This theoretical result validates our numerical observation from the uniform exogenous environment, see \Cref{fig:optimal_map}: the optimal agent acts as a \emph{damped smoother}, strictly interpolating between pure Status Quo ($r'(p)=1$) and pure Anchoring ($r'(p)=0$).

\begin{figure}[ht]
\centering
\includegraphics[width=0.55\textwidth]{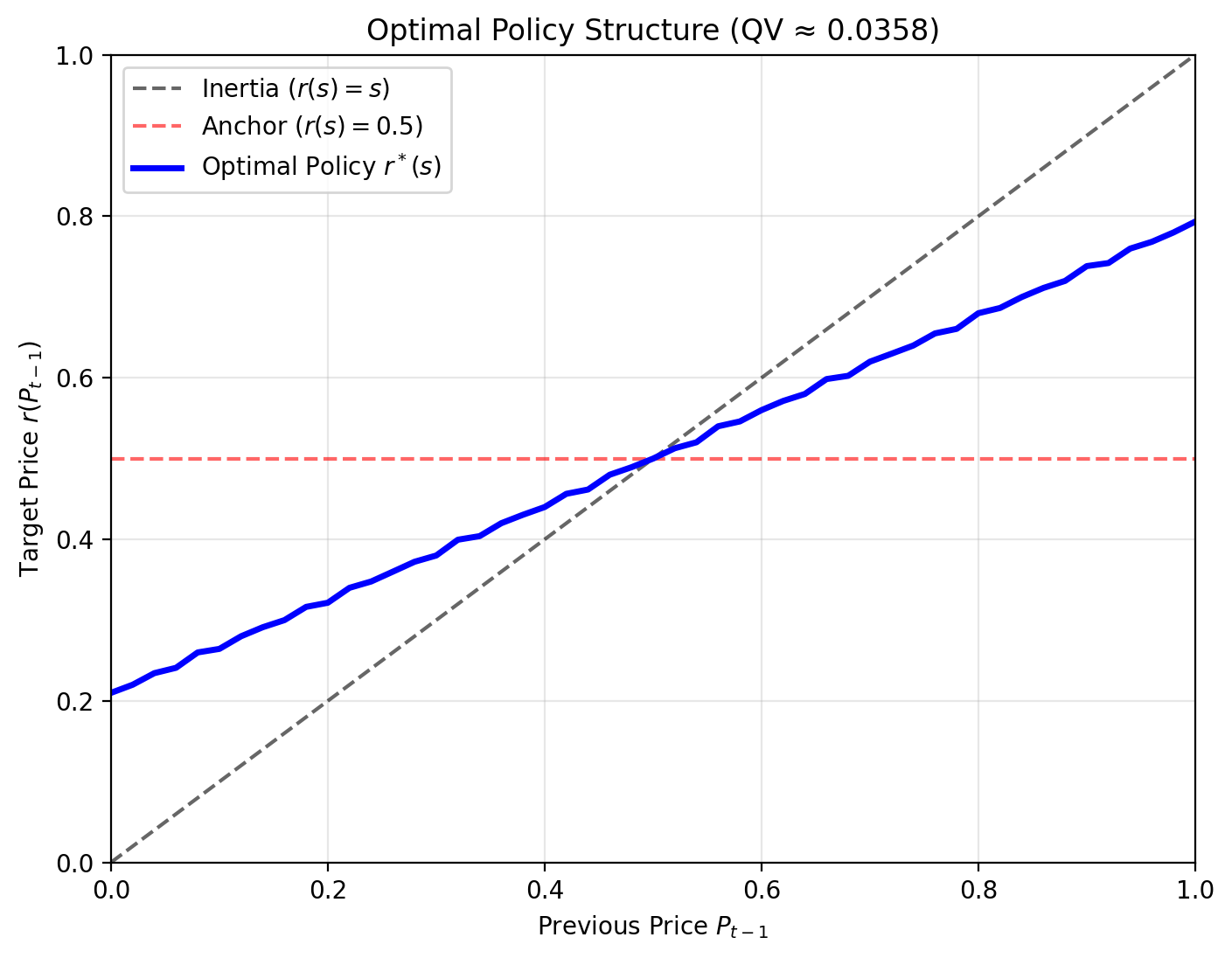}
\caption{\textbf{The Global Optimal Policy in the Uniform Environment for Quadratic Variation.} The solid blue line ($r^\ast(s)$) depicts an approximation of the optimal target function, solved via an Average Cost Bellman Equation. It interpolates between the \RR ($45^\circ$, dashed) and the \AR (horizontal), pulling the action toward $s=0.5$ with a slope of $\approx 0.88$. However, the efficiency gain is marginal compared to the \RR, which captures over \textbf{99\%} of the theoretical limit.}
\label{fig:optimal_map}
\end{figure}

\subsection{\textbf{Optimality of the \RR: The Random Walk Limit}}

In the previous section, we imposed the stability condition given in \cref{assum:stability} ($\mathbb{E}[|A_t|^p] < 1$). 
However, many financial applications focus on ``level-neutral'' environments --- such as random walks, where actions are fully persistent ($A_t = 1$ a.s. and \cref{assum:stability} is violated), and the action path has infinite variance.

\begin{assumption}[Random Walk Environment]
The transition parameter is $A_t \equiv 1$ almost surely. The feasibility set evolves additively: $I_t \;=\; P_{t-1} + J_t,$
where the base increments $(J_t)_{t \ge 1}$ are i.i.d. random intervals with finite moments.
\label{assum:randomwalk}
\end{assumption}

We are able to address this boundary case that is relevant in practice separately, and show that the \RR is uniquely optimal in this environment if adjustment costs are translation invariant (\cref{def:translationinvariant}).

\begin{theorem}[Optimality of the \RR]
\label{theorem:rwinertia}
In a Random Walk environment (\Cref{assum:randomwalk}), suppose that adjustment costs are translation invariant. Then, the  \RR  is the unique optimal policy.
\end{theorem}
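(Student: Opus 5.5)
The plan is to change variables from levels to increments, which makes the problem separate across periods. Let $D_t := P_t - P_{t-1}$. Under \cref{assum:randomwalk}, the constraint $P_t \in I_t = P_{t-1} + J_t$ is equivalent to $D_t \in J_t = [U_t, V_t]$. Since costs are translation invariant (\cref{def:translationinvariant}), the per-period cost is $c(P_{t-1},P_t) = \phi(D_t)$. In increment coordinates, therefore, neither the feasible set nor the cost depends on the past. The $J_t$ are i.i.d. and, because $J_t$ is realized after $\mathcal{F}_{t-1}$, each $J_t$ is independent of $\mathcal{F}_{t-1}$. So the history only enters through the level $P_{t-1}$, and the level is payoff-irrelevant.

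\textbf{Step 1 (pointwise lower bound).} Because $\phi$ is strictly convex with $\phi(0)=0$ and $\phi'(0)=0$, it is strictly decreasing on $(-\infty,0]$ and strictly increasing on $[0,\infty)$. Hence, for any realized interval $J=[U,V]$, the unique minimizer of $\phi$ over $J$ is the projection $\Pi_J(0)$. For any admissible (possibly randomized, history-dependent) policy $\pi$ this gives, almost surely,
\[
\phi(D_t) \;\ge\; \phi\big(\Pi_{J_t}(0)\big).
\]
Taking expectations and using that the $J_t$ are i.i.d., $\mathbb{E}^\pi[\phi(D_t)] \ge \lambda^* := \mathbb{E}[\phi(\Pi_{J_1}(0))]$ for every $t$. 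It follows that $\mathcal{C}(\pi) \ge \lambda^*$.

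The constant $\lambda^*$ is finite. Indeed, $|\Pi_J(0)| \le \max(|U|,|V|)$, so $\phi(\Pi_J(0)) \le \phi(U)+\phi(V)$. This is integrable given the finite moments of $J_t$ together with the polynomial growth in \cref{assum:cost}.

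\textbf{Step 2 (attainment).} The \RR sets $P_t = \Pi_{I_t}(P_{t-1})$. This is equivalent to $D_t = \Pi_{J_t}(0)$, so the bound in Step 1 holds with equality in every period. Thus the \RR attains $\mathcal{C} = \lambda^*$ and is optimal. Note that no Bellman or ACOE machinery is needed here, which is fortunate because \cref{assum:stability} fails and there is no stationary distribution of levels.

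\textbf{Step 3 (uniqueness), which I expect to be the main obstacle.} Suppose a stationary Markov policy $\mu$ differs from the \RR. Its excess cost is
\[
g(p,I) := \phi\big(\mu(p,I)-p\big) - \phi\big(\Pi_{I-p}(0)\big),
\]
which is $\ge 0$, and by strict convexity is strictly positive exactly where $\mu$ deviates.

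The difficulty is that under a random walk the level process is null-recurrent or transient. A deviation on a bounded set of levels $p$ might therefore be visited with vanishing frequency, and it would then not affect the $\limsup$ average. I would handle this in two stages.

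\begin{itemize}
\item For rules that are translation-equivariant, i.e.\ $\mu(p,I) - p$ depends only on $I - p$ (the natural class given the translation invariance of the environment), the quantity $\mathbb{E}[g(P_{t-1},I_t)\mid P_{t-1}]$ equals a constant $\delta>0$. The average cost then becomes $\lambda^*+\delta$, which gives strict suboptimality.
\item For general policies, I would state uniqueness in the sense that any policy achieving $\lambda^*$ must satisfy $D_t = \Pi_{J_t}(0)$ almost surely in all but a zero-density set of periods. This follows because $\frac1T\sum_{t\le T}\mathbb{E}[g] \to 0$ and strict convexity make the deviation probabilities vanish on average.
\end{itemize}

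Making this precise, and matching it to the paper's intended meaning of ``unique'', is where I expect the care to be needed.
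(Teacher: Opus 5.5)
Your Steps 1 and 2 are the paper's argument (its Steps 3--4): you pass to increments, prove the pointwise bound $\phi(\Delta P_t)\ge\phi(\Pi_{J_t}(0))$, and note that the Status Quo Reference Rule attains it in every period. Your version is tighter, because the pointwise bound makes the paper's heuristic ``decoupling'' step ($\nabla h\equiv 0$) unnecessary.

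You diverge from the paper on uniqueness, and your suspicion is correct. The paper's Step 5 asserts that an optimal policy must satisfy $\mathbb{E}[\phi(\Delta\tilde P_t)]=\mathbb{E}[\phi(\Pi_{J_t}(0))]$ in each period. The $\limsup$ Ces\`aro criterion $\mathcal{C}(\pi)$ forces this only on average. Concretely, suppose $U_1<V_1$ with positive probability. Then the policy that picks a non-projection point of $I_1$ at $t=1$ and follows the Status Quo Reference Rule afterwards also has $\mathcal{C}=\lambda^*$. So uniqueness over all admissible policies is false as stated. The paper's Step 5 reasoning is valid only for criteria that weight every period positively, such as finite-horizon totals or $\sum_t\beta^{t-1}\mathbb{E}[\phi(\Delta P_t)]$. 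There it does give $P_t=\Pi_{I_t}(P_{t-1})$ almost surely for every $t$. That restriction, or your translation-equivariant class, is the right way to make ``unique'' true. Your argument for the translation-equivariant class, yielding average cost $\lambda^*+\delta$ with $\delta>0$, is correct.

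Your general-policy formulation, however, overshoots. From $\frac1T\sum_{t\le T}\mathbb{E}[g_t]\to0$ you cannot conclude that $D_t=\Pi_{J_t}(0)$ almost surely outside a zero-density set of periods. Assume $\mathbb{P}(U_t<0<V_t)>0$ and take $\varepsilon_t\downarrow 0$. Consider the policy that, whenever $U_t<0<V_t$, moves to a feasible $D_t$ with $0<|D_t|\le\varepsilon_t$, and otherwise projects. It deviates with fixed positive probability in every period. Yet $\mathbb{E}[g_t]\le\max\{\phi(\varepsilon_t),\phi(-\varepsilon_t)\}\to0$, so it is optimal.

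What strict convexity does give is control of deviations of fixed size. Set $\kappa(\varepsilon):=\min\{\phi(\varepsilon),\phi(-\varepsilon)\}>0$. If $d\in J$ and $|d-\Pi_J(0)|\ge\varepsilon$, then $\phi(d)-\phi(\Pi_J(0))\ge\kappa(\varepsilon)$. For example, if $\Pi_J(0)=U>0$ and $d\ge U+\varepsilon$, monotonicity of secant slopes gives $\phi(d)-\phi(U)\ge\phi(U+\varepsilon)-\phi(U)\ge\phi(\varepsilon)$. Hence
\[
\frac1T\sum_{t\le T}\mathbb{P}\big(|D_t-\Pi_{J_t}(0)|\ge\varepsilon\big)\;\le\;\frac{1}{\kappa(\varepsilon)}\cdot\frac1T\sum_{t\le T}\mathbb{E}[g_t]\;\longrightarrow\;0\qquad\text{for every }\varepsilon>0.
\]
So an optimal policy must agree with the Status Quo Reference Rule only asymptotically, in this Ces\`aro-in-probability sense, and nothing stronger holds in general.
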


\noindent\textbf{Proof Outline and Intuition.}
The formal proof (provided in \Cref{ssec:ProofTheorem2}) exploits the symmetry of the Random Walk environment. Unlike the mean-reverting case, where the agent must navigate toward a ``safe'' center to avoid future costs, the Random Walk environment is translation invariant.

\begin{enumerate}[leftmargin=*, labelindent=0pt]
    \item \textbf{Decoupling.}
    In a random walk environment, the distribution of future constraint increments is independent of the current absolute action level. Shifting the entire action path up or down does not change the tightness of future constraints. Consequently, there is no good region of the state space to drift toward, meaning the differential value function $h(p)$ becomes constant.

    \item \textbf{Collapse to Static Problems.}
    With the future value term constant, the intertemporal trade-off disappears. The infinite-horizon optimization decouples into a sequence of independent static problems. 
    At each step $t$, the decision maker minimizes the immediate adjustment cost $\phi(\Delta P_t)$.

    \item \textbf{Optimality of Projection.}
    For any strictly convex cost function minimized at zero, the optimal action is to choose the smallest possible increment. Geometrically, this is achieved by projecting the previous action $P_{t-1}$ onto the current feasible interval $I_t$. Thus, the \RR is the unique optimal policy. \hfill $\diamondsuit$
\end{enumerate}

\subsection{Minimizing Variance: Optimality of the Anchoring Rule}
\label{ssec:variance}

So far, we considered adjustment costs that depend on the deviation from the previous period. 
However, in many applications, the decision maker has a \emph{global objective}: maintaining stability around a specific level rather than simply smoothing changes.
This objective is considered, e.g., in macro-finance, where the 100-day moving average is a classic metric and in behavioral economics, where agents reject prices that drift too far from a customary reference point.\footnote{In behavioral settings, \cite*{Kahneman1986} establish that consumers perceive prices deviating from a ``reference transaction'' as unfair, effectively imposing a penalty on variance.} 

In this section, we extend our model to the objective of minimizing the \emph{variance} of the action path.
Mathematically, minimizing variance is equivalent to finding a policy $M$ and a static center $c$ that minimize the Mean Squared Error (MSE):
\[
\min_{M \in \mathcal M} \mathrm{Var}(P) \;=\; \min_{c \in \mathbb{R}} \left( \min_{M \in \mathcal M} \mathbb E_{\pi^M}\left[ (P_t - c)^2 \right] \right)\tag{Minimizing Variance}\label{eq:Var}
\]

The next theorem shows that for this objective, the optimal strategy is the \AR, thus dampening as found to be optimal in the previous results is no longer present.

\begin{theorem}[Optimality of Constant Anchoring]
\label{theorem:varanchor}
Consider the objective of minimizing the stationary variance of the process.
The uniquely optimal policy is the \AR. There exists a unique optimal target $z \in \mathbb{R}$ and the target function is constant, $r(p) \;\equiv\; z$.
The optimal action is always the projection of this fixed target onto the current feasible interval:
$P_t^* \;=\; \Pi_{I_t}(z)$.
\end{theorem}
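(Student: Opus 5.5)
The plan is to split \eqref{eq:Var} into its two minimizations and handle the inner one with the dynamic-programming machinery behind \Cref{thm:optimal_policy_adjustment}. Fix a center $c\in\mathbb{R}$ and consider the average-cost problem with per-period cost $\tilde c(x,y)=(y-c)^2$. This cost is jointly convex, strictly convex in $y$ and of quadratic growth, and $p\ge 2$ in \cref{assum:stability}. It violates only the ``no cost for inaction'' part of \cref{assum:cost}. I expect that part was not used in the existence argument, so I would rerun the proof of \Cref{thm:optimal_policy_adjustment} essentially verbatim. That proof gives a contraction of the $\beta$-discounted Bellman operator on the weighted space, hence a convex $V_\beta$. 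Convexity is preserved because the graph of the constraint correspondence $x\mapsto [Ax+B+U,\,Ax+B+V]$ is a convex set for each realization, so $x\mapsto \min_{y\in I(x)}\{\tilde c(x,y)+\beta V_\beta(y)\}$ is convex. The vanishing-discount limit $\beta\uparrow 1$ then gives a convex relative value function $h_c$ and a constant $\lambda_c$ solving the ACOE
\[
\lambda_c + h_c(x) \;=\; \mathbb{E}\Big[\min_{y\in I(x)} \big\{ (y-c)^2 + h_c(y)\big\}\Big].
\]

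The key structural step comes next. In the ACOE the minimand $y\mapsto (y-c)^2+h_c(y)$ does \emph{not} depend on the previous action $x$; $x$ enters only through the feasible set $I(x)$. The minimand is strictly convex, being a strictly convex quadratic plus a convex function, and it is coercive. So it has a unique unconstrained minimizer $z_c$, independent of $x$. Minimizing a strictly convex one-dimensional function over an interval gives exactly the projection of its global minimizer, so the unique optimal stationary policy for center $c$ is $P_t=\Pi_{I_t}(z_c)$, an \AR. Uniqueness (up to null sets) follows from the uniqueness of the argmin at every state, together with the standard verification that any policy attaining $\lambda_c$ must select ACOE minimizers almost surely along the stationary path. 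Note that in general $z_c\neq c$ when $A_t\not\equiv 0$, because $h_c$ is nonconstant; this is exactly the ``counteract the drift'' effect mentioned before the theorem.

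For the outer minimization, I would first observe that $\min_M \mathrm{Var}_M(P)=\min_c \lambda_c$. Indeed, $\mathrm{Var}\le \mathbb{E}(P-c)^2$ for every $c$, with equality at $c=\mathbb{E}P$. Hence any variance-optimal policy $M^\ast$ is optimal for the inner problem at $c^\ast=\mathbb{E}_{M^\ast}P$, and by the previous step it is the \AR with target $z=z_{c^\ast}$. This reduces everything to the one-parameter family $z\mapsto \mathrm{Var}(\Pi_{I}(z))$ under the stationary law of the anchoring chain, whose existence and moment bounds come from \cref{ssec:stability}. Existence of a minimizing $z$ follows from continuity in $z$, proved by coupling the chains driven by the same $(A_t,B_t,J_t)$; the projection is $1$-Lipschitz in the target and the chain contracts in $L^p$. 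Coercivity holds as $|z|\to\infty$ because the action then sits at one endpoint almost always, and that endpoint process has variance bounded below.

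The main obstacle is \emph{uniqueness of the optimal target $z$}, since stationary variance is not convex in the policy. My first attempt would be a first-order condition, in the spirit of \Cref{theorem:optanchor}. Differentiate $\mathbb{E}(P-c)^2$ at $c=\mathbb{E}P$ with respect to $z$, using the envelope theorem in $c$. Use the same coupling to express $\partial_z P_t$ as a product of indicators of the target being interior along the path, multiplied by the $A_s$ factors. Then show that the resulting derivative changes sign exactly once, by monotonicity of the coupled paths in $z$. If this fails in full generality, the fallback is to argue that two distinct optimal targets $z_1<z_2$ would both be fixed points of $c\mapsto \mathbb{E}_{z_c}P$. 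That would contradict the strict monotonicity of $c\mapsto z_c$ obtained from the strict convexity of $(y-c)^2+h_c(y)$ in the pair $(y,c)$.
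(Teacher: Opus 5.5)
The step that fails is the uniqueness of the optimal target $z$, and your lines of argument cannot rescue it, because the claim is false under the paper's assumptions as stated.

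Take $A_t\equiv 0$, $B_t\equiv 0$, and base intervals with $J_t\supseteq[0,1]$ almost surely (e.g.\ $U_t\le 0$ and $V_t\ge 1$). Then $h_c$ is constant and $z_c=c$. Every $z\in[0,1]$ gives $P_t\equiv z$ with zero variance, so every such $z$ is optimal.

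The same example shows where your fallback breaks. Here $c\mapsto \mathbb{E}\,\Pi_{I_t}(z_c)$ is the identity on $[0,1]$, so it has a continuum of fixed points even though $c\mapsto z_c$ is strictly increasing. Strict monotonicity of $z_c$ does not rule out multiple fixed points. What you would actually need is that $c\mapsto\lambda_c$ has a unique minimizer, and nothing forces this: $\lambda_c-c^2=\inf_\pi\bigl(\mathbb{E}_\pi P^2-2c\,\mathbb{E}_\pi P\bigr)$ is an infimum of affine functions of $c$, hence merely concave.

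Your first attempt has the same problem. The ``single sign change'' of the derivative is asserted rather than derived, and it fails in this example, where the derivative vanishes on all of $[0,1]$.

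The paper's proof does not establish uniqueness either; its Step 4 only produces ``a minimizer $c^\ast$''. You should therefore either drop the uniqueness of $z$, or add hypotheses guaranteeing two things:
\begin{itemize}
\item $\lambda_c$ has a unique minimizer;
\item distinct targets induce distinct actions with positive probability.
\end{itemize}
In the exogenous case the first follows from the condition used in the proof of \Cref{theorem:optanchor}, namely that the constraint binds with positive probability at every $m$.

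A smaller defect is in your existence step: $z\mapsto\mathrm{Var}$ of the anchoring chain is not coercive. By \Cref{proposition:universal_stability}, stationary second moments are bounded uniformly over all policies. Moreover, as $z\to+\infty$ the chain approaches the endpoint chain $P_t=A_tP_{t-1}+B_t+V_t$, whose variance is finite. So ``bounded below'' does not give attainment.

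Argue in the center instead, as the paper does:
\begin{itemize}
\item $\lambda_c\ge(\mathbb{E}_{z_c}P-c)^2$;
\item stationary means lie in a bounded set, since $P_t-A_tP_{t-1}-B_t\in[U_t,V_t]$ and $|\mathbb{E}A_t|<1$;
\item $\lambda_c$ is continuous, since $\lambda_c-c^2$ is finite and concave.
\end{itemize}
Hence some $c^\ast$ minimizes $\lambda_c$, and $z_{c^\ast}$ attains $\inf_z\mathrm{Var}$.

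Apart from these two points, your inner argument is exactly Steps 1--3 of the paper. Your outer observation, that a variance-optimal policy is inner-optimal at its own mean $c^\ast=\mathbb{E}_{M^\ast}P$, is a genuine improvement over the paper's chain of inequalities. It shows that \emph{every} optimal stationary policy is an Anchoring Rule along its stationary path. The paper only shows that some Anchoring Rule attains the infimum.
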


This result provides a sharp contrast to the \RR found in \Cref{theorem:rwinertia}. While the \RR yields $r(p)=p$ (slope 1), Variance minimization yields the opposing policy $r(p)=c^\ast$ (slope 0).

\smallskip
\noindent\textbf{Proof Outline and Intuition.}
Variance measures global dispersion. 
History-dependent policies such as \RR introduce correlation, allowing actions to drift. 
The \AR forces actions to return to a fixed target immediately whenever feasible, eliminating drift and confining the distribution as tightly as possible.
The proof (provided in \Cref{ssec:ProofTheorem3}) decomposes the global variance problem:

\begin{enumerate}[leftmargin=*, labelindent=0pt]
    \item \textbf{Decomposition (MSE Equivalence).}
    We use \Cref{eq:Var}. 
    We fix an arbitrary center $c$ and solve for the policy $\pi_c$ that minimizes the expected squared distance to $c$.

    \item \textbf{The Static Target Problem.}
    For a fixed $c$, the cost function is $c(P_t) = (P_t - c)^2$. 
    The Bellman equation for this problem seeks an action $y \in I_t$ that balances the immediate distance to $c$ against future feasibility.
    We show that the strictly convex structure implies that the optimal policy is to project onto a constant target $z_c$ determined by the full dynamic program.
    Thus, the optimal policy is a projection $P_t = \Pi_{I_t}(z_c)$.

    \item \textbf{Global Optimization.}
    The problem then reduces to finding the optimal scalar target $c^\ast$ that minimizes the long-run average squared deviation. The global optimum is the policy that projects onto the corresponding optimal anchor $z_{c^\ast}$ in every period. \hfill $\diamondsuit$
\end{enumerate}

Crucially, under endogenous constraints, the optimal anchor $z_c^\ast$ generally differs from the process mean $c^\ast$.
To minimize global dispersion, the policy must anchor to a distinct value that compensates for the dynamic drift of the constraints.
However, this distinction affects only the location of the target, not the policy's structure: because the objective function remains state-independent, the optimal strategy is always to project a unique, constant target onto the current feasible set.

\bigskip
\textbf{Exogenous Constraints.} To provide a sharper characterization of the optimal anchor, we consider the benchmark case where feasibility constraints are exogenous (i.e., independent of past actions). 
The optimal policy projects directly onto an anchor that balances the marginal costs of upward and downward deviations.

\begin{theorem}[Characterization of the Optimal Anchoring Rule]
\label{theorem:optanchor}
Assume the evolution of the feasible interval is exogenous (i.e., $A_t = 0$). The variance-minimizing anchor $z^\ast$ is the unique solution to
\begin{equation}
\label{eq:anchor-balance}
\mathbb{E}\Big[ (L_t - z^\ast) \cdot \mathbf{1}_{\{L_t > z^\ast\}} \Big] \;+\; \mathbb{E}\Big[ (R_t - z^\ast) \cdot \mathbf{1}_{\{R_t < z^\ast\}} \Big] \;=\; 0. \tag{Balance Condition}
\end{equation}
\normalsize
Equivalently, the optimal anchor satisfies a self-consistency condition, meaning the target must coincide with the resulting stationary mean of the process: 
$z^\ast \;=\; \mathbb{E}_{\pi}\big[ \Pi_{I_t}(z^\ast) \big]$. 
If the joint distribution of $(L_t, R_t)$ is symmetric about some center $c$, the solution simplifies to $z^\ast = c$.
\end{theorem}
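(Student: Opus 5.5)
With exogenous constraints ($A_t=0$) the intervals $I_t=[L_t,R_t]$ are i.i.d. and do not depend on past actions. The plan is to reduce the dynamic problem in \eqref{eq:Var} to a one-dimensional convex minimization over the anchor, and to read off the \eqref{eq:anchor-balance} condition as its first-order condition.

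\textbf{Step 1 (reduction to a static problem).} Fix a center $c$ in the inner problem of \eqref{eq:Var}. Since $I_t$ is independent of $\mathcal F_{t-1}$ and of the action chosen today, the choice of $P_t$ does not affect any future feasible set. So $\mathbb E[(P_t-c)^2]$ can be minimized period by period and realization by realization. The pointwise minimizer of $(y-c)^2$ over $y\in I_t$ is $\Pi_{I_t}(c)$, so for fixed $c$ the optimal policy is the \AR with $z_c=c$. This is consistent with \Cref{theorem:varanchor}, and in the exogenous case target and center coincide. The induced actions $P_t=\Pi_{I_t}(z)$ are i.i.d., so the joint problem reduces to minimizing
\[
g(z)\;=\;\mathbb E\big[(\Pi_{I_t}(z)-z)^2\big]\;=\;\mathbb E\big[(L_t-z)^2\mathbf 1_{\{L_t>z\}}\big]+\mathbb E\big[(R_t-z)^2\mathbf 1_{\{R_t<z\}}\big],
\]
where $g(z)=\min_{c}\min_{\pi}\mathbb E[(P_t-c)^2]$ evaluated along $c=z$. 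Finite second moments come from \cref{assum:stability}.

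\textbf{Step 2 (first-order condition and self-consistency).} Each integrand $(L-z)_+^2$ and $(R-z)_-^2$ is $C^1$ in $z$ and has a vanishing derivative at the kink. Dominated convergence therefore gives $g'(z)=-2h(z)$ with
\[
h(z)=\mathbb E[(L_t-z)\mathbf 1_{\{L_t>z\}}]+\mathbb E[(R_t-z)\mathbf 1_{\{R_t<z\}}].
\]
Setting $g'=0$ is exactly \eqref{eq:anchor-balance}. Because $\Pi_{I_t}(z)-z=(L_t-z)_+-(z-R_t)_+$, we also have $h(z)=\mathbb E[\Pi_{I_t}(z)]-z$. The balance condition is thus equivalent to the self-consistency condition $z^\ast=\mathbb E_\pi[\Pi_{I_t}(z^\ast)]$.

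\textbf{Step 3 (existence and uniqueness), which I expect to be the main obstacle.} The map $z\mapsto h(z)=\mathbb E[(L-z)_+]-\mathbb E[(z-R)_+]$ is continuous and nonincreasing, with derivative $-\mathbb P(L_t>z)-\mathbb P(R_t<z)$. Moreover $h(z)\to+\infty$ as $z\to-\infty$ and $h(z)\to-\infty$ as $z\to+\infty$, so a root exists and $g$ is convex.

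The delicate point is uniqueness. $h$ can be flat only on a set where $\mathbb P(L_t\le z\le R_t)=1$. On such a set $g\equiv0$, which forces all intervals to share a common point almost surely. In that degenerate case every such $z$ is optimal and the variance is zero. I would state uniqueness under the standing nondegeneracy, namely that the intervals have no almost-sure common point. Then $h$ is strictly decreasing on a neighborhood of any root, and the root is unique.

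\textbf{Step 4 (symmetry).} Suppose $(L_t,R_t)\overset{d}{=}(2c-R_t,2c-L_t)$. Substituting into $\mathbb E[(L_t-c)_+]$ gives $\mathbb E[(c-R_t)_+]$, so $h(c)=0$. By uniqueness, $z^\ast=c$.
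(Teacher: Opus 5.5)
Your proof is correct, and its skeleton is the paper's: reduce to minimizing $J(z)=\mathbb{E}[(\Pi_{I_t}(z)-z)^2]$, differentiate under the expectation, read off the Balance Condition and the self-consistency condition as the first-order condition, then use symmetry. The differences are in the two supporting steps.

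\textbf{Reduction.} The paper invokes Theorem~\ref{theorem:varanchor} and asserts that its proof already shows the minimal variance equals $\min_m J(m)$. That tacitly uses a fact the paper never states: with $A_t=0$ the right-hand side of the ACOE does not depend on $x$, so $h_c$ is constant and the dynamic target $z_c^\ast$ coincides with the center $c$. You get this directly from the pointwise bound $(P_t-c)^2\ge(\Pi_{I_t}(c)-c)^2$ together with the policy-independence of the law of $I_t$. Your argument covers arbitrary history-dependent or randomized policies, holds period by period, and needs no MDP machinery.

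\textbf{Existence and uniqueness.} The paper computes a formal second derivative $J''(m)=2\,\mathbb{P}(m\notin(L_t,R_t))$, which requires interchanging a derivative with the expectation for the only-Lipschitz $j_I'$. It then appeals to strict convexity under an unspecified ``non-degeneracy''. You instead use that $h=-g'/2$ is nonincreasing with limits $\pm\infty$. You also identify exactly when $h$ can be flat: only on an interval of almost-sure common points of the $I_t$, where the variance is zero and uniqueness genuinely fails. This makes the needed hypothesis explicit. Strictly, a single almost-sure common point still gives a unique root, so your ``no common point'' assumption is slightly stronger than necessary.

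\textbf{Symmetry.} The two symmetry arguments are equivalent. You verify $h(c)=0$ by reflection; the paper uses $J(c+\delta)=J(c-\delta)$ plus strict convexity.

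The paper's route buys brevity by presenting the result as a corollary of Theorem~\ref{theorem:varanchor}. Yours is self-contained and fills the step the paper glosses over.
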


\noindent\textbf{Proof Outline and Intuition.}
The full proof is relegated to \Cref{ssec:ProofTheorem4}.
Minimizing the variance is equivalent to minimizing the expected squared displacement $\mathbb{E}[(\Pi_{I_t}(z) - z)^2]$. 
The first-order condition requires the net expected displacement to be zero. 
The term $(L_t - z)\mathbf{1}_{\{L_t > z\}}$ captures positive displacement when the lower boundary is active, while $(R_t - z)\mathbf{1}_{\{R_t < z\}}$ captures negative displacement when the upper boundary is active. \hfill $\diamondsuit$

\bigskip
In our uniform environment discussed in \cref{sec:introduction}, the balance condition is satisfied at $z^\ast = \frac{1}{2}$, confirming that the \AR anchored at $\frac{1}{2}$ is the unique variance-minimizing mechanism.
The \AR yields substantial gains over both the Midpoint Rule and the \RR.
Relative to the Midpoint Rule, it reduces variance by 50\%.
Relative to the \RR, it achieves a variance reduction of approximately 41\%.
For the analytical derivation of the resulting variance for each rule, see \cref{app:uniform-benchmark}. 

\vspace{-0.25cm}
\section{Applications}\label{sec:applications}

We next discuss several applications that directly fit into our general modeling framework.

\vspace{-0.25cm}
\subsection{Double Auction}\label{ex:doubleauction}

\emph{Double Auctions} --- often referred to as \emph{Call or Batch Auctions} --- are a central mechanism in economic theory and are ubiquitous in financial markets \citep{Fri93}. 
In theory, the double auction is appealing due to its simplicity and detail-freeness,\footnote{Referred to as `Wilson's doctrine', a mechanism ought to be detail-free and not rely on features of the agents' common knowledge \citep{Wil85}. The double auction is Wilson's leading example of a real-world mechanism with these features.} because it implements Walrasian exchange \citep{Wil85}, and because of its good incentive and efficiency properties \citep{Rus94,Sat02,Crip06,Per06,Az19}.

In practice, they are the standard mechanism for opening and closing major stock markets, and they are used to clear the market periodically for illiquid assets (\emph{Periodic Call Auction}), see, e.g., \cite{LSE_MIT201}. 
They have also been discussed as a general alternative for intraday trading (\emph{Frequent Batch Auctions}), as they could mitigate the \emph{High Frequency Arms Race} \citep*{budish2015high}.
Furthermore, repeated batch auctions are important mechanisms in \emph{Decentralized Finance} (DeFi).\footnote{Protocols such as \citeauthor{cow_protocol}, \citeauthor{injective_protocol}, and \citeauthor{sei_protocol} have adopted frequent Double Auctions as their core settlement layer, processing over \$500 million in daily volume.}
The terms of trade in a Double Auction are determined as exemplified by the NYSE regulations: 

\singlespacing
\small{
\begin{quote}
\textsc{\textbf{\cite[Rule 7.35C. Exchange-Facilitated Auctions]{NYSE_Rule_7.35C}}}
\emph{ 
All orders eligible to trade $[...]$ will be matched and traded at the Indicative Match Price (IMP), which means the best price at which the maximum volume of shares $[...]$ is tradable. 
If there are two or more prices at which the maximum volume of shares is tradable, the IMP will be the price closest to the Auction Reference Price (ARP), provided that the IMP will not be lower (higher) than the price of an order to buy (sell) $[...]$ that was eligible to participate in the applicable auction. 
If there are two prices at which the maximum volume of shares is tradable and both prices are equidistant to the ARP, the IMP will be the ARP. 
}
\end{quote}}
\onehalfspacing
\normalsize
This rule implements Walrasian equilibrium, and selects from the range of equilibrium prices according to a reference price rule \citep{Jantschgi25}.
Consider a finite market with a set of unit-demand buy orders  and sell orders for a homogeneous asset. 
For any price $P$, the \emph{Demand Correspondences} $\mathcal{D}(P)$
is the interval of quantities demanded at price $P$; the lower bound is given by the quantity demanded with bids strictly above $P$ and the upper bound is given by the quantity demanded with bids at or above $P$. The \emph{Supply Correspondences} $\mathcal{S}(P)$ is defined symmetrically. 
The set of market-clearing prices is the closed interval where these correspondences intersect: $I \;=\; \{P \in \mathbb{R} \mid \mathcal{D}(P) \cap \mathcal{S}(P) \neq \emptyset \}.$

What price to implement within the clearing interval $I$ is a design choice.
In repeated auctions, in each period $t=1,2,\ldots$, the auctioneer faces a new interval $I_t$ and has to choose a market clearing price $P_t\in I_t$.
The intervals naturally evolve stochastically due to exogenous shocks such as news releases or endogenous factors such as the information revealed by previous prices. 
Finally, if the market-clearing price in period $t$ factors in the formula that determines the market-clearing price in period $t+1$ an additional endogenous factor emerges.
This feedback loop implies that the auctioneer does not just select a price to clear the market today; they choose a control variable $P_t \in I_t$ that may influence the market in the next period, see \Cref{fig:auction_dynamics} below.

\begin{figure}[ht]
\centering
\setlength{\belowcaptionskip}{-10pt}
\begin{tikzpicture}[scale=0.625, >=stealth, font=\small]

    \tikzset{
         faded_demand/.style={blue, thick, opacity=0.25},
         faded_supply/.style={red, thick, opacity=0.25},
         interval_style/.style={green!60!black, line width=3pt},
         anchor_line/.style={dotted, black!70, thick}
     }

     \begin{scope}[local bounding box=period1]
         \draw[->, thick] (0,0) -- (7,0) node[right] {Quantity};
         \draw[->, thick] (0,0) -- (0,8.5) node[above] {Price};
        
         \node[below, font=\bfseries] at (3.5, -0.8) {Period $t-1$};

         \draw[faded_demand] (0,7.5) -- (1,7.5) -- (1,6.5) -- (2,6.5) -- (2,5.5) -- (3,5.5) -- (3,4.5) -- (4,4.5) -- (4,2.5) -- (5,2.5) -- (5,1.5) -- (6,1.5);
         \draw[faded_supply] (0,0.5) -- (1,0.5) -- (1,1.5) -- (2,1.5) -- (2,2.5) -- (3,2.5) -- (3,3.5) -- (4,3.5) -- (4,5.0) -- (5,5.0) -- (5,6.0) -- (6,6.0);

         \def\posA{5.5}
         \draw[dashed, gray!40] (4, 4.5) -- (\posA, 4.5);
         \draw[dashed, gray!40] (4, 3.5) -- (\posA, 3.5);
         \draw[interval_style] (\posA, 3.5) -- (\posA, 4.5);
        
         \node[right, green!60!black] at (\posA, 3.5) {$L_{t-1}$};
         \node[right, green!60!black] at (\posA, 4.5) {$R_{t-1}$};
        
         \node[circle, fill=black, inner sep=1.5pt, label={[left, xshift=1pt]$P_{t-1}$}] (Pt_dot) at (\posA, 4.45) {};
     \end{scope}

     \begin{scope}[shift={(14,0)}, local bounding box=period2]
         \draw[->, thick] (0,0) -- (7,0) node[right] {Quantity};
         \draw[->, thick] (0,0) -- (0,8.5) node[above] {Price};

         \node[below, font=\bfseries] at (3.5, -0.8) {Period $t$};

         \def\yshift{0.5}

         \begin{scope}[yshift=\yshift cm]
              \draw[faded_demand] (0,7.5) -- (1,7.5) -- (1,6.5) -- (2,6.5) -- (2,5.5) -- (3,5.5) -- (3,4.5) -- (4,4.5) -- (4,2.5) -- (5,2.5) -- (5,1.5) -- (6,1.5);
              \draw[faded_supply] (0,0.5) -- (1,0.5) -- (1,1.5) -- (2,1.5) -- (2,2.5) -- (3,2.5) -- (3,3.5) -- (4,3.5) -- (4,5.0) -- (5,5.0) -- (5,6.0) -- (6,6.0);
         \end{scope}

         \def\posB{5.5}
         \draw[dashed, gray!40] (4, 3.5+\yshift) -- (\posB, 3.5+\yshift);
         \draw[dashed, gray!40] (4, 4.5+\yshift) -- (\posB, 4.5+\yshift);
         \draw[interval_style] (\posB, 3.5+\yshift) -- (\posB, 4.5+\yshift);
        
         \node[right, green!60!black] at (\posB, 3.5+\yshift) {$L_{t}$};
         \node[right, green!60!black] at (\posB, 4.5+\yshift) {$R_{t}$};
        
         \coordinate (AnchorTarget) at (0, 4.5);
         \draw[anchor_line] (AnchorTarget) -- (\posB, 4.5);
        
         \node[left, xshift=-2pt] at (AnchorTarget) {$C_{t}(P_{t-1})$};
     \end{scope}

     \draw[->, very thick, black!60, dashed, bend left=50] 
         (Pt_dot) to node[midway, above=8pt, align=center, font=\footnotesize, black] {} (AnchorTarget);

\end{tikzpicture}
\caption{\textbf{Dynamics of the Market Clearing Interval in Repeated Double Auctions.} An example of repeated double auctions with seven buy and seven sell orders. The intersection of demand (blue) and supply (red) step functions defines a closed interval of market-clearing prices $I$. 
The market clearing-price chosen in the first period influences the distribution of demand and supply in the second period. The graphic illustrates the evolution of the market across two periods. In period $t-1$ (left), given realized demand and supply shocks, the clearing interval is $I_{t-1}=[L_{t-1}, R_{t-1}]$. A specific trade price $P_{t-1}$ is selected (here, the upper bound). In period $t$, trader valuations have changed, shifting the expected center of demand and supply.}
\label{fig:auction_dynamics}
\end{figure}
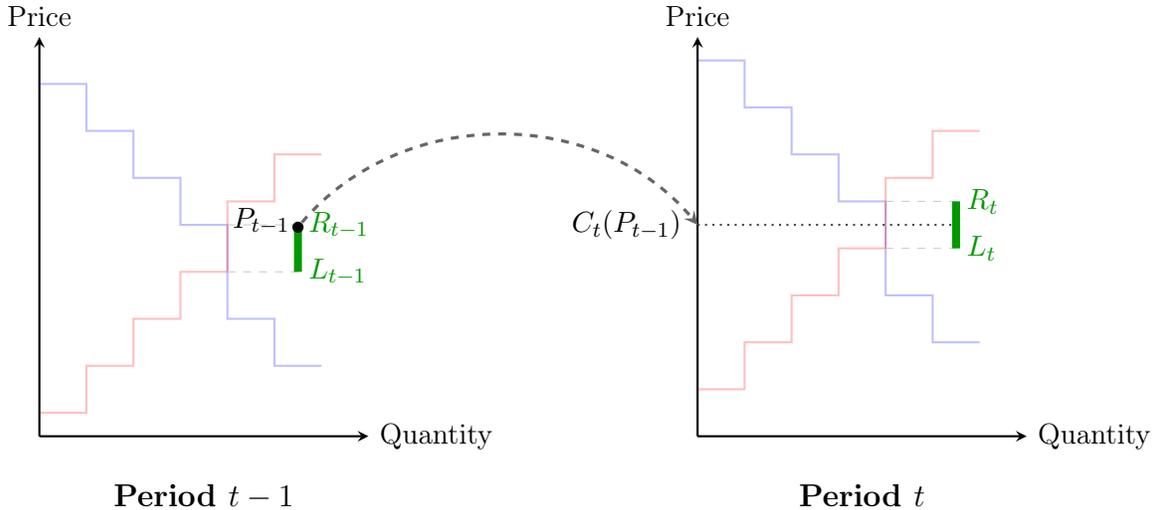

\newpage
Standard market microstructure models often assume random walk environments. Under these conditions,  \Cref{theorem:rwinertia} shows that the \RR is the globally optimal policy.
Strikingly, this theoretical optimum corresponds exactly to the practice of major stock exchanges, including the New York Stock Exchange and the London Stock Exchange.
In contrast, early academic proposals for Frequent Batch Auctions \citep{budish2015high} and current implementations in several DeFi protocols often utilize a \emph{Midpoint Rule} ($P_t = \frac{L_t + R_t}{2}$).\footnote{This standard solution is a special case of a \emph{$k$-Double Auction} \citep{Wil85,Rus94}, which selects a price based on a fixed relative position $k \in [0,1]$ between the endpoints: $P = (1-k) L + k R$.}
Our analysis reveals that such rules are strictly suboptimal for minimizing volatility. 
In our standard uniform example, adopting a Midpoint Rule instead of a \RR effectively doubles the volatility of the price process.

We provide a more detailed micro-foundation of our abstract model for Double Auctions in \Cref{ssec:microfoundation}. 
In \cref{app:incentiveinauctions}, we discuss the incentive properties of Double Auctions with Reference Prices. For a broad class of action distributions, they are Strategyproof in the Large \citep{Az19,Jantschgi25}.

\subsection{Over-the-Counter Markets (Bilateral Trade)}

While exchanges dominate equity trading, the vast majority of bonds, currencies, and derivatives trade in bilateral \emph{Over-the-Counter (OTC)} markets, see, e.g., \cite{duffie2005over,hugonnier2025economics}.
In a typical transaction, an institutional client negotiates a price with a dealer (a bank or market maker).
In this context, frequent or erratic fluctuations can damage trust or signal instability; creating a natural incentive to minimize adjustment costs. 
Thus, both dealers and clients often prefer to keep prices steady unless fundamental values shift significantly.

This setting can be modeled as a bilateral version of the Double Auction in \Cref{ex:doubleauction}. 
Consider a single buyer with valuation $v_t$ and a seller with cost $c_t$ trading via sealed bids. 
If the buyer's bid $b_t$ exceeds the seller's ask $s_t$, the set of market-clearing prices simplifies to the bargaining range: $I_t \;=\; [s_t, b_t].$
Trade occurs if $s \le b$, and the transaction price $P$ is determined by projecting the reference price $R$ onto the feasible interval $[s, b]$. Specifically:
\begin{small}
\[
P(s, b) \;=\; 
\begin{cases} 
s & \text{if } R < s \le b \quad (\text{Seller's ask binds}) \\
R & \text{if } s \le R \le b \quad (\text{Reference price prevails}) \\
b & \text{if } s \le b < R \quad (\text{Buyer's bid binds})
\end{cases} \tag{Reference Price in Bilateral Trade}
\]
\end{small}
In this subsection, we ask, how do reference rules perform in this bilateral trade setting, where the interval is endogenously given by the actions of two traders?

This rule introduces a sharp non-convexity in incentives, visualized in \Cref{fig:bidding_strategy}.
The reference price $R$ forces the buyer to choose between two distinct strategic roles:
\begin{enumerate}[leftmargin=*, labelindent=0pt]
    \item \textbf{Price Maker (Monopsony):} If the buyer bids below $R$, the price is determined by their bid ($P=b$). Like a classic monopsonist, they aggressively shade their bid to minimize costs.
    \item \textbf{Price Taker:} If the buyer bids above $R$, the price is projected to $R$. The buyer effectively faces a fixed price; since raising the bid further has no immediate impact on the payment, they switch to truthful bidding to maximize the probability of trade.
\end{enumerate}

This regime switch creates a discontinuity: at a critical valuation $\hat{v}$, best responses jump from a low, shaded bid to a high, truthful bid, leaving a strategic ``hole'' below $R$.
See \Cref{fig:bidding_strategy} below for the optimal bidding strategy against a seller with uniform action distribution.

\begin{figure}[h]
    \centering
    \includegraphics[width=0.55\textwidth]{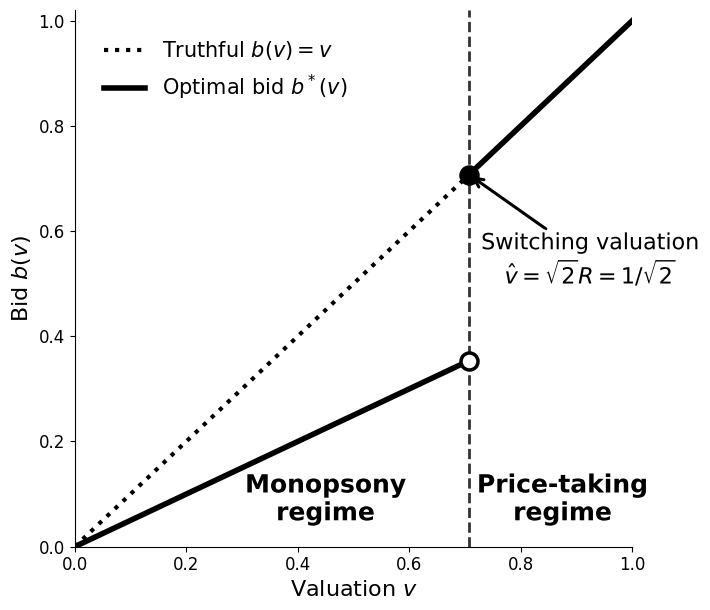}
    \caption{\textbf{Optimal Bidding Strategy under Uniform Distributions ($R=1/2$).}
    This example assumes costs and valuations are uniformly distributed on $[0,1]$ with a fixed reference price $R = 1/2$.
    The buyer compares the optimal monopsony utility $v^2/4$ (achieved by shading $b = v/2$) against the truthful price-taking utility $U_{stat}(v) = \frac{1}{2}(v^2 - 1/4)$.
    Equating these payoffs identifies a unique switching threshold at $\hat{v} = \sqrt{2}/2 \approx 0.707$.
    The figure plots the resulting optimal bid $b^*(v)$ against valuation $v$.
    Buyers below $\hat{v}$ aggressively shade their bids (reaching a maximum bid of $\approx 0.35$), while those just above $\hat{v}$ jump immediately to truthful bidding ($b=v$) to secure the reference price.
    This strategic discontinuity creates a distinct ``hole'' in the bid distribution between $0.35$ and $0.707$.}
    \label{fig:bidding_strategy}
\end{figure}

In \Cref{ssec:bilateral}, we extend this to a full equilibrium analysis. 
We show that this discontinuity causes continuous bidding equilibria to break down.
There is, however, a robust outcome: a \emph{Pooling Equilibrium} where all trade occurs exactly at the reference price $P=R$.

We thus contribute to the study of incentives in bilateral trade initiated by \cite{myerson1983efficient}, who established the impossibility of a mechanism that is simultaneously strategy-proof, efficient, and budget-balanced. The incentive properties of the \emph{$k$-Double Auction} (where $P = k s + (1-k) b$) are well-studied, with \cite{chatterjee1983bargaining} characterizing linear equilibria for the midpoint rule ($k=0.5$) and \cite{Sat89} extending this to general Bayesian Nash equilibria. 

Our analysis shows that the incentive to minimize volatility and the use of Reference Price changes the structure of Bayesian Nash equilibria.
By incentivizing traders to pool at $R$, the mechanism effectively degenerates into a \emph{Posted Price Mechanism} over a range of valuations. Crucially, this aligns with \cite{hagerty1987robust}, who showed that in settings requiring robust implementation (dominant strategies), posted prices are effectively the only viable option. 

\subsection{Resource Scheduling and Capacity Planning}

Adjustment costs have received a lot of attention in the theory of the firm, where production technology and capacity planning require taking them into account. The literature initiated by  \cite{Arrow51,scarf1960optimality} and most notably \cite{becker1983simple} study adjustments driven by changes in technology (for a survey, see \cite{becker2008capital}).  
This literature is not interested in exogenous constraints on the feasible decision set.  
In particular, our results on the optimality of reference rules has no forerunner in this literature. Instead, the main message of this literature is that the presence of adjustment costs smooths the path of technology adjustments. Larger technology shocks lead to smaller adjustments in the short-term than in the long-term; a force also present in our setting as established in \cref{theorem:non-expansive}.\footnote{In this line of work, the presence of adjustment costs has been shown to result in non-trivial dynamic optimization problems, and understanding the dynamic nature of adjustment costs helps explain various economic phenomena such as over- and under-investment, over- and under-staffing, over- and under-production, etc. Others, focus on the timing of irreversible investment decisions in the presence of uncertainty \citep{dixit-pindyck1994investment}.} 

We consider the general problem of capacity planning in environments where the ability to supply in the future is influenced by the output of the last period. 
Specifically, we study \emph{Just-in-Time (JIT) production}, where storage is strictly minimized either because of working-capital considerations or because the goods are perishable (such as fresh produce or daily print media). 

In these settings, the firm operates under constraints defining a range of viable options $I_t = [L_t, R_t]$. 
The upper bound $R_t$ may be determined by physical ramp-limits (e.g., thermal stress on turbines), available labor or raw materials, or maximum component flow. 
The lower bound $L_t$ may be set by minimum run-rates required to prevent idling, contractual delivery minimums, or safety baselines. 
In JIT systems, this range depends on staffing levels and machine configurations established in the previous period ($P_{t-1}$). 
Because the system lacks inventory buffers, expanding output requires immediate and costly scaling of inputs, while lowering it leaves committed resources idle. 
For perishable goods, since inventory expires immediately, the capacity to deliver tomorrow depends on distribution channels that are optimized for today's volume. 
If the firm delivered volume $P_{t-1}$ yesterday, its fleet and routing are calibrated for that scale; significant deviations require a reconfiguration of the logistics network. 
Modern infrastructure sectors like \emph{Energy Grids} and \emph{Cloud Computing (GPU)} represent a hybrid of these dynamics, combining instant perishability of the good with the physical inertial constraints of JIT manufacturing.\footnote{While an unused watt-hour or GPU cycle cannot be stored (perishability), the machinery producing it faces strict physical limitations on its rate of change (JIT inertia). In power generation, thermal plants face ramp rate limits to avoid mechanical stress. Similarly, in high-performance computing, clusters are configured for specific workloads; reallocating capacity deviates from the established state, incurring switching latency and reconfiguration downtime.} 

In all these settings, the feasible interval evolves stochastically due to exogenous shocks --- such as machinery failures, supply chain disruptions, or changing demand --- while also depending on the output from the last period. 
Minimizing adjustment costs, which arise from the friction of adapting production to these changes, is often a primary objective.

As mentioned above in our discussion of the literature initiated by\cite{becker1983simple}, our results contribute to a theoretical foundation for the practice of \emph{Production Smoothing}.
If the firm's primary objective is to minimize the costs associated with volatility (e.g. Quadratic Variation), our main result (\Cref{thm:optimal_policy_adjustment}) implies that the optimal strategy is a \RR. 
The decision maker should smooth production based on the last output and expectations of future shocks. 
If future constraints fully rely on today’s production (e.g., mirroring the random walk environment), then, according to \Cref{theorem:rwinertia}, the firm should maintain the previous production level $P_t = P_{t-1}$ as long as it remains feasible ($P_{t-1} \in I_t$). Adjustments are triggered only when exogenous shocks to supply or demand force the firm's hand (i.e., when the interval boundaries $L_t$ or $R_t$ drift away from $P_{t-1}$). 
This implies that a firm should effectively ignore small fluctuations in demand or capacity availability, absorbing them into the slack of the feasible interval. In contrast, a policy that adjusts the production to every change in demand or supply (e.g. in an attempt to center production within the feasible range — a Midpoint Rule) would be strictly suboptimal.

\subsection{Political Positioning with Endogenous Feasibility}
\label{ssec:political_positioning}

Political decision-making is inherently constrained by the need for credibility and consistency. 
The literature has long recognized that policy instability undermines effectiveness, whether through the time-inconsistency of optimal plans \citep{BarroGordon1983}, resistance to reform \citep{FernandezRodrik1991}, or the erosion of voter trust \citep{acemoglu2013economics}. 
Voters and coalitions punish erratic shifts, creating a zone of acceptance that anchors expectations \citep{alesina2019loss, Krugman1991}.
We sketch a reduced-form application of our framework to analyze how policymakers navigate these dynamic constraints.

In each period $t$, a decision maker (e.g., a party leader or government) chooses a one-dimensional policy position $P_t \in \mathbb{R}$. 
Their choice is constrained by political viability: at any time, institutional and electoral realities define a feasible interval $I_t = [L_t, R_t]$. 
Positions within this interval are safe, maintaining coalition stability and satisfying the median voter. 
In contrast, deviations beyond $L_t$ or $R_t$ generate discrete and irreversible losses, such as loss of office or breakdown of governing coalitions. Thus, boundary violations generate first-order losses, while interior adjustments have only local effects.
Crucially, feasibility is path-dependent. 
Voters and media anchor their expectations on the status quo. A policy that was considered radical yesterday may become the center of the \emph{Overton window} today if the previous policy moved in that direction. 
We model this evolution as in our baseline with $I_t \;=\; C_t(P_{t-1}) + J_t$ and  $C_t(P_{t-1}) = A_t P_{t-1} + B_t$.
Here, $A_t$ captures the persistence of political norms (how much the window shifts with the previous policy), while $B_t$ and $J_t$ represent exogenous shocks such as demographic shifts, economic crises, or opponent behavior.

Our results illuminate a dichotomy in political strategy, depending on the decision maker's objective. 
If the primary cost is ``flip-flopping'' (adjustment costs), the objective is to minimize changes in policy $\Delta P_t$ to appear steady and predictable. 
In this case, our analysis shows that the optimal strategy is that the policymaker anchors their current decision to their own past position ($P_{t-1}$), see \Cref{thm:optimal_policy_adjustment}. 
This rationalizes the behavior of centrist ``pragmatists'' who evolve their agenda slowly, ensuring that today's policy always serves as the viable reference point for tomorrow's window.
In contrast, a policymaker may seek to minimize variance or the distance to a specific, fixed ideological agenda$\theta$ (e.g., a party constitution or long-term mandate). 
By the same logic used in our variance minimization result (\Cref{theorem:varanchor}), the optimal policy is an \AR: the decision maker projects the ideal position $\theta$ onto the current feasible interval $I_t$. 
Thus, our model formally distinguishes the optimal behavior of the ``pragmatist'' (who minimizes adjustment costs by drifting) from the ``ideologue'' (who minimizes ideological deviation by returning to a fixed base).

\subsection{Other applications}
In addition to the examples above, a wide range of other economic settings fit our framework where a decision maker faces hard participation or feasibility constraints and incurs adjustment costs.
Examples include platform fee setting, where fees must remain within a band that sustains developer or merchant participation; wage and compensation policies, where firms maintain pay within retention-compatible ranges shaped by reference dependence; monetary and regulatory policy, where credibility and legal constraints induce target zones; and corporate risk management, where leverage and inventory choices must remain within covenant- or rating-safe regions.
In all these cases, current decisions shift future feasibility through expectations or balance-sheet effects, boundary violations generate discrete losses, and dynamic adjustment frictions dominate fine-tuning within the feasible region. Our framework captures this common structure in a unified stochastic control model.

\section{Outlook}

Dynamic adjustment costs with evolving constraints present many more important questions. We here discuss one natural 
 frontier for future research, namely the extension of our results to higher-dimensional action spaces. 
In many mechanism design environments, the set of feasible outcomes is not a simple interval, but a polytope in $\mathbb{R}^n$. 

A prominent example is the design of financial markets for portfolio trading. 
\cite{budish2023flow} proposes \emph{Flow Trading}, a mechanism where market-clearing prices are determined via batch auctions for arbitrary linear combinations of assets. 
In this setting, the set of market-clearing prices is typically a multidimensional polytope. 
They find that selecting a price arbitrarily from this feasible interval may result in high artificial volatility. 
To restore stability, they effectively implement a high-dimensional analogue of our \RR, anchoring the selection to the previous market-clearing price vector, and thus minimizing the $L^2$ distance between $P_t$ and $P_{t-1}$ subject to the equilibrium constraints.

This pattern extends to non-monetary allocation problems. 
In pseudo-markets for course allocation \citep{budish2011combinatorial, prendergast2022allocation, nguyen2025efficiency} or the allocation of food to food banks, the set of competitive equilibrium prices is typically non-unique. 
Similar considerations apply to exchange problems without money --- such as kidney exchange, shift swaps and resource reallocation \citep{echenique2019fairness, echenique2023balanced, jantschgi2025competitive}. 
Our analysis suggests that Reference Rules such as the ones discussed in this paper may perform well in these multidimensional settings as well. 

We are naturally also keen on empirical analyses of related phenomena in both the one-dimensional and multidimensional cases, especially from stock markets where such rules are implemented.

\bibliographystyle{ACM-Reference-Format}
\bibliography{Citation}

\newpage

\appendix

\section{Additional Results}\label{app:additional_results}

\subsection{A Uniform Benchmark Environment}
\label{app:uniform-benchmark}

In this section, we provide the calculations for the uniform benchmark discussed in the introduction.
We assume the feasible interval $I_t = [L_t, R_t]$ is formed by the order statistics of two i.i.d.\ uniform draws on $[0,1]$, such that $L_t = \min(X_t^{(1)}, X_t^{(2)})$ and $R_t = \max(X_t^{(1)}, X_t^{(2)})$. 

This simplified i.i.d.\ structure preserves the fundamental trade-off between inertia and anchoring while allowing us to derive closed-form solutions for the stationary distributions of the three benchmark policies.

\subsubsection{The Midpoint Rule.}

Consider the midpoint rule $P_t = (L_t + R_t)/2$. This policy is \emph{memoryless}: Consequently, the induced action process is i.i.d.
The action $P_t$ is the average of the minimum and maximum of two independent uniform variables $X_t^{(1)}, X_t^{(2)}$. 
$P_t$ follows a symmetric \emph{triangular distribution} on $[0,1]$ with mode at $1/2$. Its density function is:
\[
f_{\mathrm{mid}}(p) \;=\;
\begin{cases}
4p & \text{if } 0 \le p \le 1/2, \\
4(1-p) & \text{if } 1/2 < p \le 1. 
\end{cases}
\tag{Action Distribution - Midpoint}
\]
We compute the variance using the density. Since the mean is $1/2$, we have that
$\Var(\pi_{\mathrm{mid}}) \;=\; \int_0^1 (p - 1/2)^2 f_{\mathrm{mid}}(p) \, dp$.
By symmetry, this equals $2 \int_0^{1/2} (p - 1/2)^2 (4p) \, dp$, which yields:
\[
\Var(\pi_{\mathrm{mid}}) \;=\; \frac{1}{24} \;\approx\; 0.0417.
\tag{Variance - Midpoint}
\]
Since the policy is memoryless, consecutive actions are independent. 
The mean squared increment decomposes into the sum of variances: $\mathrm{QV}(\pi_{\mathrm{mid}}) \;=\; \mathbb{E}[(P_t - P_{t-1})^2] \;=\; \Var(P_t) + \Var(P_{t-1})$ and
\[
\mathrm{QV}(\pi_{\mathrm{mid}}) \;=\; \frac{1}{12} \;\approx\; 0.0833.
\tag{Quadratic Variation - Midpoint}
\]

\subsubsection{The Anchoring Rule at $\frac{1}{2}$.}

Consider the policy $P_t = \Pi_{I_t}(1/2)$ projecting the center of the state space onto the feasible interval. Like the Midpoint rule, this policy is \emph{memoryless} because the target is fixed.
The distribution of the action is a mixture of a point mass and a continuous density.
\begin{enumerate}
    \item \emph{The Atom at the Center:} The action equals the target $1/2$ whenever the interval covers it. 
    Since $L$ and $R$ are order statistics of two uniform draws, the probability they straddle the midpoint is $1 - (1/4 + 1/4) = 1/2$. 
    Thus, the policy hits its exact target 50\% of the time.
    
    \item \emph{The Tails:} If the interval is entirely to the left ($R_t < 1/2$), the action is constrained to the boundary $R_t$. 
    The density of $R_t$ is $2r$. 
    Thus, on the domain $[0, 1/2)$, the density is $2p$. 
    By symmetry, on $(1/2, 1]$, the density is $2(1-p)$.
\end{enumerate}
Summarizing, the stationary law is defined by the generalized density function:
\[
f_{\mathrm{anch}}(p) \;=\; 2\min(p, 1-p) \;+\; \frac{1}{2}\delta\left(p - \frac{1}{2}\right).
\tag{Action Distribution - Anchoring}
\]

\noindent We compute the variance by integrating the squared deviation against this mixed distribution. The atom at $1/2$ contributes zero to the variance. We integrate over the continuous tails: $\Var(\pi_{\mathrm{anch}}) \;=\; 2 \int_0^{1/2} (p - 1/2)^2 (2p) \, dp \;=\; 4 \int_0^{1/2} (p^3 - p^2 + \frac{p}{4}) \, dp.$
Solving the integral yields the total variance:
\[
\Var(\pi_{\mathrm{anch}}) \;=\; \frac{4}{192} \;=\; \frac{1}{48} \;\approx\; 0.0208.
\tag{Variance - Anchoring}
\]
Since the policy is memoryless, the expected squared increment is the sum of the variances:
\[
\mathrm{QV}(\pi_{\mathrm{anch}}) \;=\; \Var(P_t) + \Var(P_{t-1}) \;=\; \frac{1}{24} \;\approx\; 0.0417.
\tag{Quadratic Variation - Anchoring}
\]

\subsubsection{The Status Quo Reference Rule.}

Consider the policy, where the decision maker selects the feasibleaction, which is closest to the action from the previous period: $P_t = \Pi_{I_t}(P_{t-1})$.
This policy $\pi_{in}$ induces a persistent \emph{Markov chain}, which admits a unique stationary distribution.
We can solve the integral equation for the stationary density. 
The transition kernel has a singular component (a probability mass at the current state $P_{t-1}$) and continuous components where the action jumps to $L_t$ or $R_t$.
It turns out that the density is unimodal and symmetric around $1/2$. It admits the closed form:
\[
f_{in}(p) \;=\; \frac{2p(1-p)}{\bigl(1-2p+2p^2\bigr)^2}, \qquad p\in(0,1).
\tag{Action Distribution - Inertia}
\]
We calculate the variance by integrating the squared deviation from the mean ($1/2$) against the stationary density (via trigonometric substitution or partial fractions):
\[
\Var(\pi_{\mathrm{in}}) \;=\; \int_0^1 \left(p - \frac{1}{2}\right)^2 \frac{2p(1-p)}{(1-2p+2p^2)^2} \, dp \;=\; \frac{\pi - 3}{4} \;\approx\; 0.0354.
\tag{Variance - Inertia}
\]
For the quadratic variation, let $D(x) = \mathbb{E}[(P_t - x)^2 \mid P_{t-1}=x]$ be the expected squared increment given the current action $x$. 
A quick calculation shows that $D(x) \;=\; \frac{x^4}{6} \;+\; \frac{(1-x)^4}{6}.$\footnote{A jump occurs if the new interval $[L, R]$ does not contain $x$.
If $R < x$, the action jumps to $R$. This happens with probability density $f_R(r)=2r$ on $[0, x]$.
If $L > x$, the action jumps to $L$. This happens with probability density $f_L(\ell)=2(1-\ell)$ on $[x, 1]$.
The expected cost is the sum of these jumps:
$D(x) \;=\; \int_0^x (r-x)^2 (2r) \, dr \;+\; \int_x^1 (\ell-x)^2 (2(1-\ell)) \, d\ell =\; \frac{x^4}{6} \;+\; \frac{(1-x)^4}{6}.$}
The Quadratic Variation is the expectation of the conditional cost under the stationary density:
\[
\mathrm{QV}(\pi_{\mathrm{in}}) \;=\; \int_0^1 \frac{x^4 + (1-x)^4}{6} g(x) \, dx \;=\; \frac{\pi}{12} - \frac{2}{9} \;\approx\; 0.0396.
\tag{Quadratic Variation - Inertia}
\]

\begin{table}[h]
\centering 
\renewcommand{\arraystretch}{1.2}
\begin{tabular}{lcc}
\hline
\textbf{Mechanism} & \textbf{Quadratic Variation} & \textbf{Variance} \\ \hline
Midpoint 
& \textcolor{red}{\textbf{1/12} $\approx$ \textbf{0.0833}} & \textcolor{red}{\textbf{1/24} $\approx$ \textbf{0.0417}} \\
\AR at $\frac{1}{2}$ 
& \textcolor{orange}{\textbf{1/24} $\approx$ \textbf{0.0417}} & \textcolor{green!60!black}{\textbf{1/48} $\approx$ \textbf{0.0208}} \\
\RR 
& \textcolor{green!60!black}{\textbf{$\pi/12 - 2/9$} $\approx$ \textbf{0.0396}} & \textcolor{orange}{\textbf{$(\pi-3)/4$} $\approx$ \textbf{0.0354}} \\ \hline \\
\end{tabular}
\caption{\textbf{Performance Trade-off.} \textcolor{green!60!black}{\textbf{Green}}, \textcolor{orange}{\textbf{Orange}}, and \textcolor{red}{\textbf{Red}} text indicates best, intermediate, and worst performance, respectively. Among the three policies, the~\RR is optimal for minimizing Quadratic Variation, while the~\AR is optimal for minimizing Variance.
Crucially, both policies substantially dominate the Midpoint Rule.  The~\RR reduces the Quadratic Variation by over $50\%$ ($0.0396$ vs. $0.0833$), while ~\AR cuts the Variance by  $50\%$ ($1/48$ vs. $1/24$).Neither policy dominates globally, but there is an asymmetry:  While~\RR incurs $70\%$ higher variance than~\AR ($0.0354$ vs. $0.0208$),~\RR outperforms Anchoring on quadratic variation by only $5\%$ ($0.0396$ vs. $0.0417$).}
\vspace{-2em}
\label{tab:uniform-results}
\end{table}

\newpage

\begin{figure}[ht]
\centering
\includegraphics[width=0.8\textwidth]{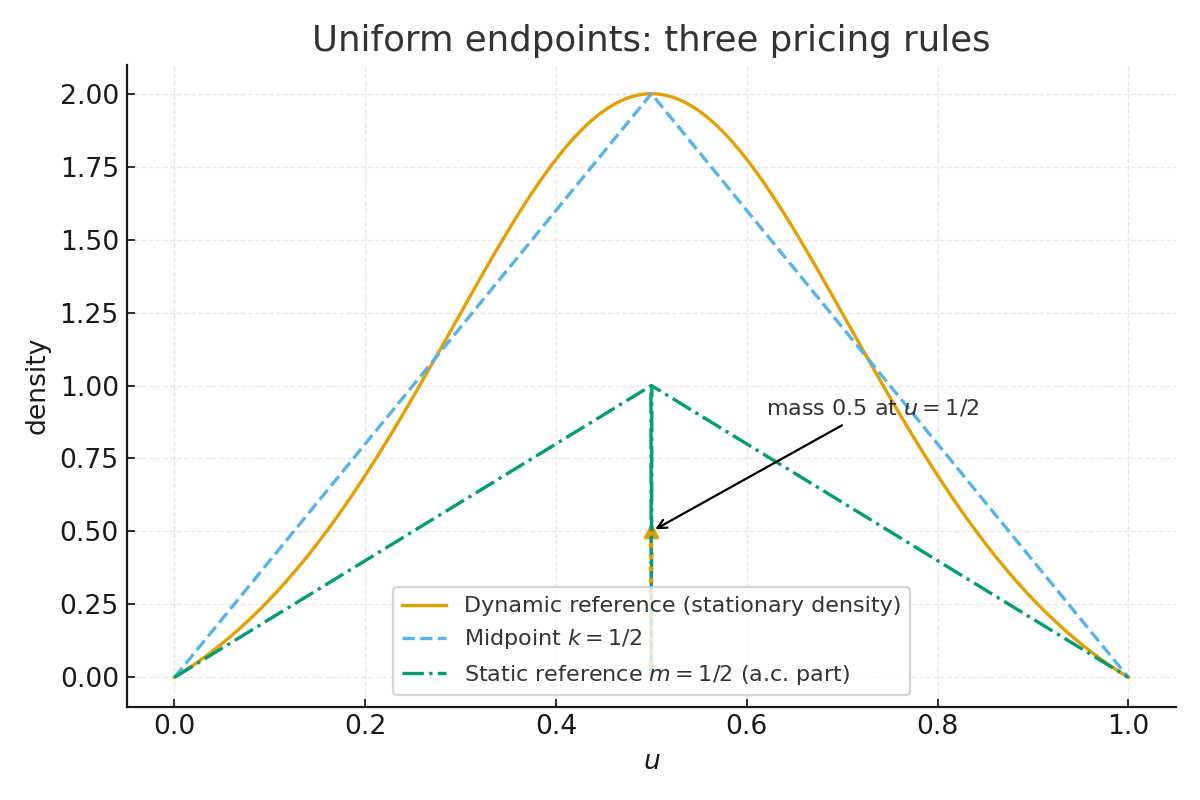}
\caption{\textbf{Stationary Distributions.} 
Midpoint ($f_{\rm mid}$, dashed) fluctuates mechanically with the interval. 
Anchoring ($f_{\rm anch}$, dotted) concentrates 50\% of the probability mass at the target (the vertical spike), minimizing global dispersion. 
Status Quo ($f_{\rm inertia}$, solid) is smootly dispersed, reflecting the action's tendency to drift within the feasible interval rather than correct toward the mean.}
\label{fig:uniform}
\end{figure}

\subsection{Pseudo-Code for Finding the Optimal Policy in \Cref{fig:optimal_map}}

To numerically find the optimal mechanism for minimizing quadratic variation in our uniform benchmark environment, we seek a solution to the \textbf{Average Cost Optimality Equation} (ACOE). The problem reduces to finding a scalar $\rho$ (the optimal average cost) and a relative value function $h(s)$ that satisfy the following Bellman equation for all states $s \in [0,1]$:

\begin{equation*}
\rho + h(s) \;=\; \min_{z \in [0,1]} \; \mathbb{E} \left[ \underbrace{\Big( \Pi_{I}(z) - s \Big)^2}_{\text{Immediate Adjustment Cost}} + \underbrace{h\Big( \Pi_{I}(z) \Big)}_{\text{Future Relative Cost}} \right].
\tag{ACOE}
\end{equation*}

Here, the expectation $\mathbb{E}$ is taken with respect to the random feasible interval $I = [L, R]$. Consistent with our model assumptions, the bounds are generated as order statistics of the uniform distribution: $L = \min(X_1, X_2)$ and $R = \max(X_1, X_2)$ where $X_1, X_2 \sim \mathcal{U}[0,1]$. 
The control variable $z$ represents the target reference action, and the system dynamics are governed by the projection $\Pi_{I}(z)$. 
The algorithm below solves this fixed-point equation using value iteration.

\begin{algorithm}[h]
\caption{Value Iteration for Quadratic Variation (ACOE)}
\label{alg:value_iteration}
\begin{algorithmic}[1]
\State \textbf{Initialize:}
\State \quad Discretize state space $\mathcal{S} = \{s_1, \dots, s_N\} \subset [0,1]$.
\State \quad Initialize relative value function $h(s) \leftarrow 0$ for all $s \in \mathcal{S}$.
\State \quad Set tolerance $\epsilon \leftarrow 10^{-9}$ and error $\delta \leftarrow \infty$.
\State \textbf{Pre-compute Noise:}
\State \quad Generate $M$ i.i.d.\ interval samples $[L_m, R_m]$ where $L_m = \min(X_m^{(1)}, X_m^{(2)})$, $R_m = \max(X_m^{(1)}, X_m^{(2)})$.

\While{$\delta > \epsilon$}
    \ForAll{$s \in \mathcal{S}$}
        \State \textit{// Find optimal unconstrained target $z$ for state $s$}
        \State Define objective function $J(z)$:
        \State \qquad $J(z) := \frac{1}{M} \sum_{m=1}^M \left[ \left(\Pi_{[L_m, R_m]}(z) - s\right)^2 + h(\Pi_{[L_m, R_m]}(z)) \right]$
        \State \qquad \textit{$\rhd$ Evaluate $h(\cdot)$ at off-grid points via linear interpolation over $\mathcal{S}$}
        \State $z^\ast(s) \leftarrow \argmin_{z \in [0,1]} J(z)$
        \State $h_{new}(s) \leftarrow J(z^\ast(s))$
    \EndFor
    
    \State \textit{// Normalize to prevent drift and check convergence}
    \State $\rho \leftarrow \min_{s} h_{new}(s)$ \Comment{Estimate average cost}
    \State $h_{norm}(s) \leftarrow h_{new}(s) - \rho$
    \State $\delta \leftarrow \max_{s} |h_{norm}(s) - h(s)|$ \Comment{Check convergence of relative values}
    \State $h(s) \leftarrow h_{norm}(s)$
\EndWhile

\State \Return Optimal Average Cost $\rho \approx \mathrm{QV}^\ast$, Optimal Policy $r^\ast(s) = z^\ast(s)$.
\end{algorithmic}
\end{algorithm}
\text{}
\normalsize

\newpage

\subsection{Embedding Stock Auctions in the General Framework}
\label{ssec:microfoundation}

Here, we provide the formal derivation connecting the repeated double auction market to the stochastic affine recursion $I_t = C_t(P_{t-1}) + J_t$ used in our general model in \Cref{sec:model}.

\smallskip
\textbf{The Auction Environment.}
Consider a market that operates over discrete periods $t=1, 2, \dots$. In each period $t$, a cohort of $N$ buyers and $M$ sellers enters the market to trade a single unit of a homogeneous asset. 
Each buyer $i$ has a private valuation $v_{i,t}$ (maximum willingness to pay), and each seller $j$ has a private cost $c_{j,t}$ (minimum willingness to accept). 
Traders submit limit orders---bids $b_{i,t}$ and asks $s_{j,t}$---to the exchange.
The market clears via a Double Auction mechanism. 
As established in the main text, the set of market-clearing prices is a closed interval $I_t = [L_t, R_t]$, the boundaries of which are determined by the order statistics of the submitted limit orders.

\smallskip
\textbf{Valuation Dynamics.}
We model the evolution of valuations as a process driven by two components: a common information process (anchored to past prices) and idiosyncratic heterogeneity.
Let $G$ and $F$ be fixed cumulative distribution functions with bounded support. 
We assume that in every period $t$, the idiosyncratic components of valuations, denoted by $\epsilon_{i,t}$ for buyers and $\delta_{j,t}$ for sellers, are drawn i.i.d. from $G$ and $F$ respectively.
The actual valuations are generated by shifting these base draws by a time-varying common component $Z_t$: $v_{i,t} = Z_t + \epsilon_{i,t}$ and $c_{j,t} = Z_t + \delta_{j,t}$.
The common component $Z_t$ represents the consensus fundamental value of the asset. 
We assume this fundamental value follows an AR(1) process anchored to the most recent public price signal $P_{t-1}$, subject to aggregate shocks: $Z_t = A_t P_{t-1} + B_t$.
$A_t$ captures the persistence of the price signal (market memory), and $B_t$ represents exogenous information shocks (news). We assume $(A_t, B_t)$ are i.i.d.

\smallskip
\textbf{Translation Invariance.}
We do not impose a specific solution concept on the auction. 
Instead, we rely on the mild assumption that whatever bidding strategies traders employ, they satisfy \emph{Translation Invariance}. 
This implies that if the underlying valuations are shifted by a constant $C$, the submitted bids are shifted by the same amount. 
Intuitively, this condition is robust because the profitability of a trade depends solely on the surplus margin $(v - P)$; adding a constant to both the valuation and the price level leaves the relative strategic incentives unchanged.

\smallskip
\textbf{Evolution of the Market-Clearing Interval.}
Let $J_t = [u_t, v_t]$ denote the clearing interval derived solely from the base valuations $(\epsilon_{i,t}, \delta_{j,t})$. 
By Translation Invariance, the actual bids—and consequently the resulting clearing boundaries—are simply the base values shifted by the anchor $Z_t$. 
Thus, $[L_t, R_t] = Z_t + J_t$. 
Substituting the definition of $Z_t$, we confirm that the feasible interval evolves exactly according to our general stochastic model:
\[
    I_t = C_t(P_{t-1}) + J_t,
\]
where $C_t(p) = A_t p + B_t$ is the stochastic anchor and $J_t$ represents the exogenous base width.

\subsection{Incentive Properties of Stock Auctions}\label{app:incentiveinauctions}

In this subsection, we provide the argument that Reference Rules in Double Auctions have good incentive properties.

\smallskip
\textbf{The Set-Up.} We assume that traders have quasi-linear preferences. 
Each trader $i$ has a private valuation (type) $v_i \ge 0$ for the asset. 
The utility of a trader depends on their valuation, the allocation, and the transfer. 
Specifically, if a buyer trades at price $P$, their utility is $v_i - P$; if a seller trades at price $P$, their utility is $P - v_i$. Traders who do not trade receive a utility of 0. 
Formally, given a strategy profile $q = (q_i, q_{-i})$ resulting in a market price $P(q)$ and an allocation where $i$ trades (denoted by $\mathbf{1}_i^{\text{trade}}$), the utility is:
\[
u_i(v_i, q) \;=\; (v_i - P(q)) \cdot \mathbf{1}_i^{\text{trade}} \quad (\text{for buyers}), \qquad
u_i(v_i, q) \;=\; (P(q) - v_i) \cdot \mathbf{1}_i^{\text{trade}} \quad (\text{for sellers}).
\]
Traders submit limit orders $q_i$ to maximize their expected utility.
In finite markets, Double Auctions are generally not strategy-proof. Intuitively, the market-clearing interval $[L, R]$ is pinned down by the bids of the marginal traders. 
A single trader can thus influence this interval by misreporting their valuation, thereby manipulating the final price in their favor.

\smallskip
\textbf{Asymptotic Incentive Compatibility.} 
However, as the market grows large, a single trader's influence on the clearing interval vanishes. 
In \cite{Jantschgi25}, we show that any price selection rule, and hence also reference price rules, become \emph{Asymptotically Incentive Compatible}.

\begin{proposition}[Asymptotic Incentive Compatibility, \cite{Jantschgi25}]\label{theorem:AIC}
Let trader $i$'s beliefs over other traders' quotes satisfy a random sampling regularity with marginal distributions that induce a unique market-clearing price in the limit. Then, truthful reporting is approximately optimal in sufficiently large Double Auctions with a Reference Price Rule. Formally, for any $\epsilon > 0$, in sufficiently large markets
\[
\mathbb{E}[u_i(v_i, v_i, q_{-i})] \;\ge\; \sup_{q_i} \mathbb{E}[u_i(v_i, q_i, q_{-i})] - \epsilon.
\]
\end{proposition}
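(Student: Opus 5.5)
The plan is to fix a trader $i$, without loss of generality a buyer with valuation $v_i$ and support bounded by some constant $\bar v$, and to bound the gain $\sup_{q_i}\mathbb{E}[u_i(v_i,q_i,q_{-i})]-\mathbb{E}[u_i(v_i,v_i,q_{-i})]$ by a quantity that does not depend on $q_i$ and that vanishes as the market grows. The only feature of the Reference Price Rule I will use is that the selected price always lies in the realized market-clearing interval $[L,R]$. The reference point itself, which may depend on past prices, plays no role. This suggests that the argument is really about arbitrary selections from the clearing interval.

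\textbf{Step 1 (coupling with the others' interval).} Let $[L_{-i},R_{-i}]$ be the clearing interval generated by the quotes $q_{-i}$ alone. By the step-function structure of $\mathcal{D}$ and $\mathcal{S}$, adding one unit order moves each endpoint of the clearing interval by at most one position in the pooled order statistics. So for every report $q_i$, the realized interval is contained in $[L^-_{-i},R^+_{-i}]$, the interval widened by one adjacent order statistic on each side. Let $W:=R^+_{-i}-L^-_{-i}$ denote its width. The key deterministic claim is
\[
u_i(v_i,q_i,q_{-i})-u_i(v_i,v_i,q_{-i})\;\le\; W+\bar v\,\mathbf 1\{v_i\in[L^-_{-i},R^+_{-i}]\}.
\]
I will verify it by cases. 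If $v_i$ lies strictly above $R^+_{-i}$, truthful bidding makes $i$ trade at a price in the interval, so any deviation can only lower the price by at most $W$ or forgo a profitable trade. If $v_i$ lies strictly below $L^-_{-i}$, truthful bidding yields no trade, and any trade induced by a deviation happens at a price of at least $L^-_{-i}>v_i$, hence is unprofitable. The remaining case, $v_i$ inside the widened interval, is the pivotal event, and there the gain is crudely bounded by $\bar v$.

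\textbf{Step 2 (concentration).} I take expectations over $q_{-i}$ under $i$'s beliefs. The random sampling regularity gives empirical demand and supply distributions that converge uniformly to their marginals, by Glivenko--Cantelli or, more quantitatively, DKW. The limit marginals cross at a unique price $p^*$. Together these imply that all order statistics in a fixed-rank neighborhood of the crossing converge in probability to $p^*$. Hence $W\to0$ in probability, and by bounded support also $\mathbb{E}[W]\to0$.

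\textbf{Step 3 (the pivotal event).} I also need $\Pr(v_i\in[L^-_{-i},R^+_{-i}])\to0$. If $v_i\neq p^*$ this follows from Step 2. If $v_i=p^*$ exactly, I refine the bound of Step 1: in the pivotal case the gain is at most $|v_i-\text{price}|\le W$ rather than $\bar v$, because any trade occurs at a price within the widened interval, which contains $v_i$. Using this refinement throughout gives the uniform bound $2\,\mathbb{E}[W]$. Since this bound is independent of $q_i$, it controls the supremum, and choosing the market size large enough that $2\,\mathbb{E}[W]<\epsilon$ yields the claim. Sellers are handled symmetrically.

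The main obstacle is Step 2: turning the uniqueness of the limit clearing price into concentration of the endpoint order statistics near $p^*$. This needs a strict-crossing (local identifiability) argument, namely that at $p^*\pm\delta$ the limiting excess demand is bounded away from zero, so that uniform empirical convergence pins down the ranks. The wording of the random sampling regularity must be made precise enough to support this step.
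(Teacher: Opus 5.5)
Your argument is correct, modulo making the random sampling regularity precise. It is not the paper's route, because the paper gives no proof of this proposition. It imports the result from \cite{Jantschgi25} and adds only an informal three-regime reading of the reference rule: when the reference lies strictly inside $[L,R]$ the price is locally independent of the marginal quotes, and when it lies below or above, the rule coincides with a seller's-ask or buyer's-bid double auction, which is strategy-proof for one side. You instead give a self-contained, rule-agnostic argument. You use only that the selected price lies in the realized clearing interval, bound the gain from any deviation pathwise by the width $W$ of the others' clearing interval widened by one order statistic, and make $\mathbb{E}[W]\to 0$ via uniform convergence of the empirical distributions. This yields a bound independent of $q_i$, so it controls the supremum directly; it can be made quantitative; and it matches the paper's remark that the result holds for any price selection rule. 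The regime picture explains why reference rules resist manipulation in finite markets, but on its own it is not a global argument, since a single trader can push a boundary across the reference point and switch regimes. A shrinking-influence estimate like yours is what closes that gap. Two simplifications are available. First, truthful bidding never gives negative utility, and any trade after a deviation occurs at a price of at least $L^-_{-i}$, so the gain is at most $W$ in every case; you need neither the separate treatment of the pivotal event nor the factor $2$. Second, the obstacle you flag in Step 2 resolves itself. Limiting excess demand is monotone, so uniqueness of its zero $p^\ast$ already forces it to be strictly positive at $p^\ast-\delta$ and strictly negative at $p^\ast+\delta$, which is exactly the margin uniform convergence needs. The regularity should also fix the limiting buyer--seller proportion.
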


While this theorem applies to any pricing rule, the Reference Price mechanism $P = \Pi_{[L, R]}(m)$ offers a strong intuitive defense against manipulation. 
Its incentive properties can be decomposed into three regimes:
\begin{enumerate}
    \item \textbf{Inside the Interval ($L < m < R$):} The price is set exactly at $m$. In this regime, the clearing price is locally independent of the marginal traders' bids. As long as the interval boundaries do not cross $m_t$, no trader can influence the price.
    \item \textbf{Below the Interval ($m \le L$):} The mechanism projects to the lower bound $L$. This effectively replicates a \emph{Seller's Ask Double Auction}, which is strategy-proof for buyers.
    \item \textbf{Above the Interval ($m \ge R$):} The mechanism projects to the upper bound $R$. This replicates a \emph{Buyer's Bid Double Auction}, which is strategy-proof for sellers.
\end{enumerate}
Unlike a fixed weighted-average rule (e.g., $P = \frac{L+R}{2}$), where every shift in the boundary $L$ or $R$ immediately shifts the price, the Reference Price rule ``disconnects'' the price from the boundaries whenever the reference is feasible.

\subsection{Incentives in Bilateral Trade}\label{ssec:bilateral}

We investigate how reference rules perform in a bilateral trade setting, where the interval is endogenously given by the actions of two traders.

\medskip
\textbf{Set-Up.} Consider a single-period trading game between a Seller and a Buyer.
The Seller has a cost $C$ drawn from a distribution $G$ with density $g$ on $[0,1]$. The Buyer has a valuation $V$ drawn from a distribution $F$ on $[0,1]$.
Agents submit limit orders: the Seller submits an ask $s$, and the Buyer submits a bid $b$.
We assume the distributions are \emph{regular} in the standard sense of mechanism design. Specifically, we assume the virtual valuation function $\Phi(v) = v - \frac{1-F(v)}{f(v)}$ and the virtual cost function $J(c) = c + \frac{G(c)}{g(c)}$ are strictly increasing.

Trade is governed by the \emph{Reference Price Rule} given an exogenous reference price $R \in (0,1)$ which represents a status quo, historical norm, or previous contract price.
Trade occurs if $s \le b$, and the transaction price $P$ is determined by projecting the reference price $R$ onto the feasible interval $[s, b]$. Specifically:
\[
P(s, b) \;=\; \Pi_{[s,b]}(R) \;=\; 
\begin{cases} 
s & \text{if } R < s \le b \quad (\text{Seller's constraint binds}) \\
R & \text{if } s \le R \le b \quad (\text{Reference price prevails}) \\
b & \text{if } s \le b < R \quad (\text{Buyer's constraint binds})
\end{cases}
\]

\subsubsection{\textbf{Best Responses}}

We begin by analyzing a single trader's optimization problem, taking the distribution of the counterparty's action as given. 
This best-response analysis serves as a fundamental building block toward the full equilibrium analysis, but it also holds independent economic relevance for settings where a single strategic trader faces a non-strategic counterpart.
We model this as a strategic buyer with valuation $v \sim F$ facing a passive seller with cost $c \sim G$. The seller truthfully asks their cost $s=c$. The buyer chooses a bid $b$ to maximize expected profit: $U(b; v) \;=\; \mathbb{E}_c \left[ (v - P(c, b)) \cdot \mathbb{I}(C \le b) \right]$.

The Reference Price Rule, $P(c, b) = \Pi_{[c, b]}(R)$, induces a sharp discontinuity, generating two distinct strategic regimes:

\begin{enumerate}
    \item \textbf{Monopsony Strategy ($b < R$):} The buyer bids below the reference price. In this regime, the feasible interval $[c, b]$ lies entirely below $R$, so the projection rule sets the price at the upper bound, $P=b$. The buyer acts as a standard monopsonist facing the supply curve $G(\cdot)$, maximizing $(v - b)G(b)$. The optimal bid $b^{mon}(v)$ is determined by the standard first-order condition involving the inverse hazard rate of the supply distribution.
    
    \item \textbf{Price-Taking Strategy ($b \ge R$):} The buyer bids at or above the reference price. Here, the projection rule enforces the reference price $P=R$ whenever the seller's cost is low enough ($c \le R$), and allows the price to rise to cost ($P=c$) only for marginal trades ($R < c \le b$). Conditional on entering this regime, the optimal tactic is to bid truthfully ($b=v$) to maximize the probability of trade, as the bid itself no longer sets the price for infra-marginal units. The value of this strategy is $U_{stat}(v) = (v - R)G(R) + \int_R^v (v - c)g(c)\, dc$.
\end{enumerate}

\begin{proposition}[Best Response Structure]
Assume the seller's distribution is regular, such that the virtual cost function $J(c) = c + \frac{G(c)}{g(c)}$ is strictly increasing. There exists a unique threshold valuation $\hat{v} > R$ such that the Buyer's optimal bidding strategy is characterized by a discontinuous regime switch:
\[
b^*(v) \;=\; 
\begin{cases} 
b^{mon}(v) & \text{if } v < \hat{v} \quad \text{(Monopsony Regime)} \\
v & \text{if } v \ge \hat{v} \quad \text{(Price-Taking Regime)}
\end{cases}
\]
\end{proposition}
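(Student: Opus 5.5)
We split the buyer's problem at $b=R$ and compare the best payoff available in each regime. The monopsony value is $U_{mon}(v)=\max_{b<R}(v-b)G(b)$; the price-taking value is $U_{stat}(v)=\max_{b\ge R}\big[(v-R)G(R)+\int_R^b (v-c)g(c)\,dc\big]$. The threshold $\hat v$ will be the unique root of $\Delta(v):=U_{stat}(v)-U_{mon}(v)$.

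\emph{Step 1 (within-regime optima).} For $b<R$, differentiating $(v-b)G(b)$ gives $g(b)\big(v-J(b)\big)$. Since $J$ is strictly increasing, the objective is single-peaked with maximizer $b^{mon}(v)=\min\{J^{-1}(v),R\}$, and $b^{mon}$ is nondecreasing in $v$. Because $J(c)>c$, we have $b^{mon}(v)<v$. For $b\ge R$, the derivative of the price-taking objective in $b$ is $(v-b)g(b)$. It is positive for $b<v$ and negative for $b>v$, so the optimum is $b=v$ whenever $v\ge R$. If $v<R$, price-taking only yields trades at a price of at least $R>v$, so its value is at most $(v-R)G(R)<0$ and monopsony wins. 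The switch can therefore only happen for $v>R$.

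\emph{Step 2 (single crossing).} By the envelope theorem, $U_{mon}'(v)=G(b^{mon}(v))$ and $U_{stat}'(v)=G(v)$ for $v\ge R$. For $v>R$ we have $b^{mon}(v)\le R<v$, so $\Delta'(v)=G(v)-G(b^{mon}(v))\ge G(v)-G(R)>0$ wherever $g>0$ on $(R,v)$. Hence $\Delta$ is strictly increasing on $(R,1]$.

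\emph{Step 3 (endpoints).} At $v=R$, $U_{stat}(R)=0$, while $U_{mon}(R)=\max_{b<R}(R-b)G(b)>0$ whenever $G(b)>0$ for some $b<R$. This gives $\Delta(R)<0$. The threshold lies strictly above $R$ because at $v=R$ the monopsonist still earns a positive margin, whereas price-taking earns nothing.

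At $v=1$ we need $\Delta(1)>0$, and this is the main obstacle. Writing $U_{stat}(1)-U_{mon}(1)=(1-R)G(R)+\int_R^1(1-c)g(c)\,dc-\sup_{b<R}(1-b)G(b)$, I would bound $(1-b)G(b)\le(1-R)G(b)+(R-b)G(b)$. I would then compare this with $(1-R)G(R)+\int_R^1(1-c)\,dG$. This comparison may require a mild support condition, for example that $G$ puts mass above $R$. If the condition fails, I would state the result with $\hat v$ defined as $\inf\{v:\Delta(v)\ge 0\}$, capped at $1$.

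Given the endpoints, the intermediate value theorem and strict monotonicity yield a unique $\hat v\in(R,1)$. The optimal bid is then $b^{mon}(v)$ for $v<\hat v$ and $v$ for $v\ge\hat v$; at $v=\hat v$ the two payoffs tie, and we break the tie toward truthful bidding. The jump at $\hat v$ follows from $b^{mon}(\hat v)\le R<\hat v$.
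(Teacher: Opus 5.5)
Your argument is the paper's own, step for step: split the bid space at $b=R$, obtain single crossing from the envelope theorem via $\Delta'(v)=G(v)-G(b^{mon}(v))>0$, and show $\Delta(R)<0$. Your separate treatment of $v<R$ is a small improvement, since the paper only studies $\Delta$ on $[R,1]$.

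The genuine gap is the one you flag, $\Delta(1)>0$, and the support condition you suggest does not close it. Integrating by parts gives $U_{stat}(v)=(v-R)G(R)+\int_R^v(v-c)g(c)\,dc=\int_R^v G(c)\,dc$. So $\Delta(1)>0$ is exactly the condition $\int_R^1 G(c)\,dc>\max_{b\le R}(1-b)G(b)$, and regularity does not imply it. For uniform $G$, $\Delta(v)=v^2/4-R^2/2$ for $R\le v\le\min\{2R,1\}$, hence $\hat v=\sqrt{2}\,R$; this reproduces the paper's $\hat v=\sqrt2/2$ at $R=1/2$. For $R>1/\sqrt{2}$, no type in $[0,1]$ switches. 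Concretely, at $R=0.8$ and $v=1$, truthful bidding earns $\int_{0.8}^1 c\,dc=0.18$, while bidding $b=1/2$ earns $0.25$. This happens even though $G$ puts mass $0.2$ above $R$.

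The paper's proof has the same hole. It asserts $\Delta(v)>0$ for $v$ near $1$ because price-taking ``approaches first-best efficiency,'' and the example above refutes this. So the proposition as stated, with $\hat v\in(R,1]$, needs an extra hypothesis. Your fallback is the right kind of repair, with one correction: do not cap $\hat v$ at $1$, since that would wrongly prescribe $b^*(1)=1$ when $\Delta(1)<0$. You have two clean options.

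\emph{Option 1.} Assume $\int_R^1 G(c)\,dc>\max_{b\le R}(1-b)G(b)$. Your proof is then complete and yields a unique $\hat v\in(R,1)$.

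\emph{Option 2.} Take $\hat v$ to be the unique root of $\Delta$ on $(R,\infty)$. It exists because $\Delta'(v)\ge 1-G(R)>0$ for $v>1$. You then record that the price-taking regime is empty on the support whenever $\hat v>1$.
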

The threshold $\hat{v}$ is the valuation where the buyer is indifferent between the high-margin, low trade probability monopsony strategy and the low-margin, high trade proability Price-Taking strategy.

\subsubsection{Bayesian Nash Equilibria}

We now extend the analysis to a full equilibrium setting, where both the buyer and seller act strategically based on private information. 
Here the feasible interval $[s, b]$ is fully endogenous, capturing scenarios where both parties hold power over the terms of trade.

Strategies $(s^*, b^*)$ form a Bayesian Nash Equilibrium if, for almost all $v, c$, they maximize expected surplus for both the buyer and seller:
\begin{align*}
b^*(v) &\in \arg\max_{b} \mathbb{E}_C [ (v - P(s^*(C), b)) \mathbb{I}(s^*(C) \le b) ], \\
s^*(c) &\in \arg\max_{s} \mathbb{E}_V [ (P(s, b^*(V)) - c) \mathbb{I}(s \le b^*(V)) ].
\end{align*}

We first show that \emph{non-degenerate equilibria} do not exist, which are continuous, monotone and with positive trade probability at the reference price.
This is in stark contrast to standard pricing rules, such as the $k$-Double Auction, for which a continuum of continuous and strictly increasing Bayesian Nash Equilibria exists \cite{chatterjee1983bargaining,Sat89}.
However, there exists a \emph{Pooling Equilibrium}, where trade only happens at the reference price.

\begin{proposition}[Equilibrium Characterization]\label{proposition:BNE}
Assume valuations $v$ and costs $c$ are independently distributed on $[0,1]$ with continuous, strictly positive densities.
\begin{itemize}
    \item \textbf{Impossibility of Non-Degenerate BNE:} There exists no Bayesian Nash Equilibrium with continuous, strictly increasing strategies such that the transaction price equals $R$ with strictly positive probability. 
    \item \textbf{Existence of Pooling Equilibrium:} There exists a Bayesian Nash Equilibrium where all trade occurs exactly at the reference price $P=R$. The strategies are:
    \begin{itemize}
        \item \textbf{Buyers:} If $v \ge R$, bid $b(v) = R$; otherwise bid $b(v) = 0$.
        \item \textbf{Sellers:} If $c \le R$, ask $s(c) = R$; otherwise ask $s(c) = 1$.
    \end{itemize}
\end{itemize}
\end{proposition}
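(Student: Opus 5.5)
The plan is to treat the two bullets separately: verify the pooling equilibrium directly, then get the impossibility by a contradiction at the buyer type whose bid equals $R$.

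\textbf{Pooling equilibrium.} I would check best responses by direct enumeration, since the opponent's action distribution has only two atoms. Fix the seller strategy: ask $R$ with probability $G(R)$ and ask $1$ otherwise. For a buyer, every bid $b$ falls into one of three cases.
\begin{itemize}
\item If $b<R$, the buyer never trades and gets $0$.
\item If $R\le b<1$, the buyer trades only with asks equal to $R$. The price is then $\Pi_{[R,b]}(R)=R$, so the payoff is $(v-R)G(R)$.
\item If $b=1$, the buyer additionally trades at price $\Pi_{[1,1]}(R)=1$ with the high-cost sellers. This adds $(v-1)(1-G(R))\le 0$.
\end{itemize}
Hence $b=R$ is optimal for $v\ge R$. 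For $v<R$, every trade occurs at a price of at least $R>v$, so $b=0$ (zero payoff) is optimal. The seller's problem is the mirror image, with asks $R$ and $0$ against the buyer's bids. The only care needed is the boundary matches at $b=s=1$ and $b=s=0$, where the projection is degenerate.

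\textbf{Impossibility.} Suppose $(s^*,b^*)$ are continuous, strictly increasing, and the price equals $R$ with positive probability. Let $H$ be the distribution of asks $s^*(C)$; it is continuous, and since $\Pr(s^*\le R\le b^*)>0$ we have $H(R)>0$. I would first show that buyers never bid strictly above $R$ and below their value: some types bid above $R$. The two steps are as follows.

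\emph{Step 1 (types bidding above $R$ bid truthfully).} Take $b^*(v)>R$. Raising the bid within $(R,\infty)$ leaves the price $R$ unchanged on $\{s\le R\}$. It only adds trades with asks in the new range, at price equal to the ask. Because $s^*$ is strictly increasing and continuous, $H$ puts positive mass on every subinterval of the range of asks. A revealed-preference comparison of $b$ against nearby bids then forces $b^*(v)=v$ for all types with $b^*(v)>R$ and $v$ in the interior of the ask range. Bidding above $v$ adds trades at prices above $v$; bidding below $v$ forgoes trades at prices below $v$.

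\emph{Step 2 (types bidding below $R$ shade by a uniform amount).} Take $b^*(v)<R$. The price equals $b$, so the buyer maximizes $(v-b)H(b)$, exactly the monopsony regime of the Best Response Structure proposition. Now let $v_0$ be the type with $b^*(v_0)=R$; it exists by continuity, provided bids on both sides of $R$ occur. As $v\uparrow v_0$, optimality against the local alternative $b=R$ shows that type $v_0$ strictly prefers bidding $R$ to any smaller bid only if it retains a margin: $v_0-R\ge \kappa>0$, where $\kappa$ is bounded below using $H(R)>0$. Concretely, I would compare $(v-b)H(b)$ with $(v-R)H(R)$ as $b\uparrow R$ without assuming densities. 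From Step 1, $\lim_{v\downarrow v_0}b^*(v)=v_0$. Since $v_0>R=b^*(v_0)$, this contradicts continuity of $b^*$ at $v_0$. This is the discontinuity of the earlier best-response proposition, now derived against an endogenous ask distribution.

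\textbf{Main obstacle.} The hard part is making Step 2, together with the existence of $v_0$, rigorous without differentiability. I must rule out the case where all buyers bid at least $R$, and pin down the strict gap $v_0-R>0$ using only continuity and monotonicity of $H$.
\begin{itemize}
\item Low types cannot bid at least $R$: for $v<R$, such a bid yields a negative payoff of at most $(v-R)H(R)+0<0$, whereas a bid below the lowest ask yields $0$.
\item The gap follows by showing that at $v=R$ the bid $R$ earns $0$, while some bid $b<R$ with $H(b)>0$ earns strictly positive surplus. By continuity of payoffs in $v$, the indifference type $v_0$ must then satisfy $v_0>R$.
\end{itemize}
If $H(b)=0$ for all $b<R$, I would argue symmetrically on the seller side, where the analogous gap appears below $R$.
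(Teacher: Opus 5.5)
Your impossibility argument is the paper's, reordered. Both combine truthful bidding above $R$ with one comparison: buyers with $v$ near $R$ earn about $(v-R)H(R)$ from bids at or above $R$, but a payoff bounded away from zero from a fixed bid $R-\delta$ with $H(R-\delta)>0$. The paper uses continuity to get $v_0=R$ and then contradicts incentives; you use incentives to get $v_0>R$ and then contradict continuity. Your enumeration for the pooling equilibrium is correct and covers the second bullet, which the paper's proof never checks.

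There is, however, a step that fails as written. The closing ``From Step~1, $\lim_{v\downarrow v_0}b^*(v)=v_0$'' needs the ask distribution to put positive mass on $(R,\min\{v_0,s^*(1)\})$, and nothing you prove guarantees this. Your Step~1 is explicitly restricted to $v$ in the interior of the ask range, and you never exclude $s^*(1)\le R$. In that case every bid above $R$ earns exactly $v-R$, so nothing pins down bids above $R$. The buyer's problem alone then admits a continuous, strictly increasing bid crossing $R$ at some $v_0>R$. For example, take $R<1/2$ and $H$ uniform on $[0,R]$: $b^*(v)=v/2$ up to $v_0=2R$, followed by any increasing continuation above $R$, is a best response. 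So the contradiction has to come from the seller side. Your closing hedge addresses the wrong case: $H(b)=0$ for all $b<R$ is impossible, since $H$ is continuous with $H(R)>0$.

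The repair mirrors your ``low types'' bullet.
\begin{itemize}
\item A seller with $c>R$ who asks $s\le R$ trades with every bid $b\ge R$, an event of positive probability. All her trades occur at prices $\Pi_{[s,b]}(R)\le R<c$, whereas an ask above $b^*(1)$ yields zero. Hence a.e.\ such seller asks above $R$, and $s^*(1)>R$.
\item Replace $b^*(v)=v$ by the weaker bound $b^*(v)\ge\min\{v,s^*(1)\}$ for a.e.\ $v$ with $b^*(v)>R$. Raising a bid $b\in(R,\min\{v,s^*(1)\})$ to $\min\{v,s^*(1)\}$ leaves every existing price unchanged. It adds trades of positive probability at prices (the asks) below $v$.
\item This bound also handles $v_0\ge s^*(1)$, which your interior qualifier misses. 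Letting $v\downarrow v_0$ gives $R=\lim b^*(v)\ge\min\{v_0,s^*(1)\}>R$.
\end{itemize}
The paper's proof has the same omission. It is concealed in the paper's assertion of a positive ask density on both sides of $R$, which continuous strictly increasing strategies need not even induce. Apart from this point, your density-free treatment is the more careful of the two.
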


The result stems from a sharp regime switch at the reference price $R$.
A buyer bidding below $R$ acts as a \emph{Price Maker} ($P=b$): raising the bid increases the transaction price, necessitating aggressive bid shading ($b \ll v$).
Conversely, a buyer bidding effectively above $R$ acts as a \emph{Price Taker} ($P=R$): raising the bid incurs zero marginal cost on the price, incentivizing truthful bidding ($b \approx v$).

This creates a strategic discontinuity: the marginal buyer indifferent between these regimes jumps from a low, shaded bid to a high, truthful bid, leaving a ``hole'' in the bid schedule that precludes continuous trade at $R$.

\medskip
\textbf{Pooling Equilibrium $\approx$ Posted Price Mechanism.} 
In the resulting Pooling Equilibrium, the market effectively degenerates into a \emph{posted price mechanism} at $R$.
Trade occurs if and only if $v \ge R \ge c$.
The problem of maximizing ex-ante welfare then reduces to selecting the optimal posted price. 
The welfare-maximizing posted price $R^*$ is the unique price that clears the market in expectation (the Walrasian price), satisfying $1 - F(R^*) = G(R^*)$. 
This aligns with the optimal \emph{robust} mechanism.
As shown by \cite{hagerty1987robust}, if one requires a mechanism to be dominant-strategy incentive compatible, posted-price mechanisms are the only solution.

\medskip

\textbf{The Cost of Stability.}
It is instructive to benchmark this result against the theoretical upper bound of mechanism design.
\cite{myerson1983efficient} established that no budget-balanced mechanism can achieve first-best efficiency (trading whenever $v \ge c$).
The second-best optimal mechanism maximizes surplus by enforcing a trading wedge $v \ge c + \alpha(v,c)$, excluding trades with thin margins.
Crucially, for the standard uniform case, the linear equilibrium of the simple Double Auction ($k=0.5$) exactly implements this theoretical second-best bound \citep{chatterjee1983bargaining}.
In comparison, the reference price rule with the pooling equilibrium leads to trade iff $v \geq 0.5 \geq c$.
This allows for a direct comparison between the optimal standard mechanism (Linear Equilibrium) and our Reference Rule.
As shown in \Cref{tab:welfare_comparison}, the Reference Rule captures approximately $89\%$ of the theoretical maximum surplus ($0.125$ vs $0.141$).
The table highlights the cost of stability: the Reference Rule sacrifices $\approx 11\%$ of potential welfare to achieve zero price volatility.

\begin{table}[h]
\centering
\renewcommand{\arraystretch}{1.2}
\begin{tabular}{lcc}
\hline
\textbf{Mechanism} & \textbf{Total Welfare} \\ \hline
First Best (Efficient) & $1/6 \approx 0.167$ \\
Myerson-Satterthwaite (Second Best)   &  $9/64 \approx 0.141$ \\
Linear Equilibrium in the $\frac{1}{2}-DA$  & $\mathbf{9/64 \approx 0.141}$ \\
Optimal Reference Price (Posted Price) & $1/8 = 0.125$ \\ \hline \\
\end{tabular}
\caption{\textbf{Welfare Comparison under Uniform Distributions.}
The \textit{First Best} represents efficient trade (impossible under budget balance).
The \textit{Myerson-Satterthwaite} bound represents the maximum surplus achievable under incentive compatibility.
For uniform distributions, the standard Linear Equilibrium of the $0.5$-Double Auction \citep{chatterjee1983bargaining} exactly achieves this second-best bound.
The \textit{Optimal Reference Price} mechanism ($R^*=0.5$) captures approximately $89\%$ ($0.125$ vs $0.141$) of the theoretical optimum.}
\label{tab:welfare_comparison}
\end{table}

\begin{figure}[t!]
    \centering
    \includegraphics[width=0.9\textwidth]{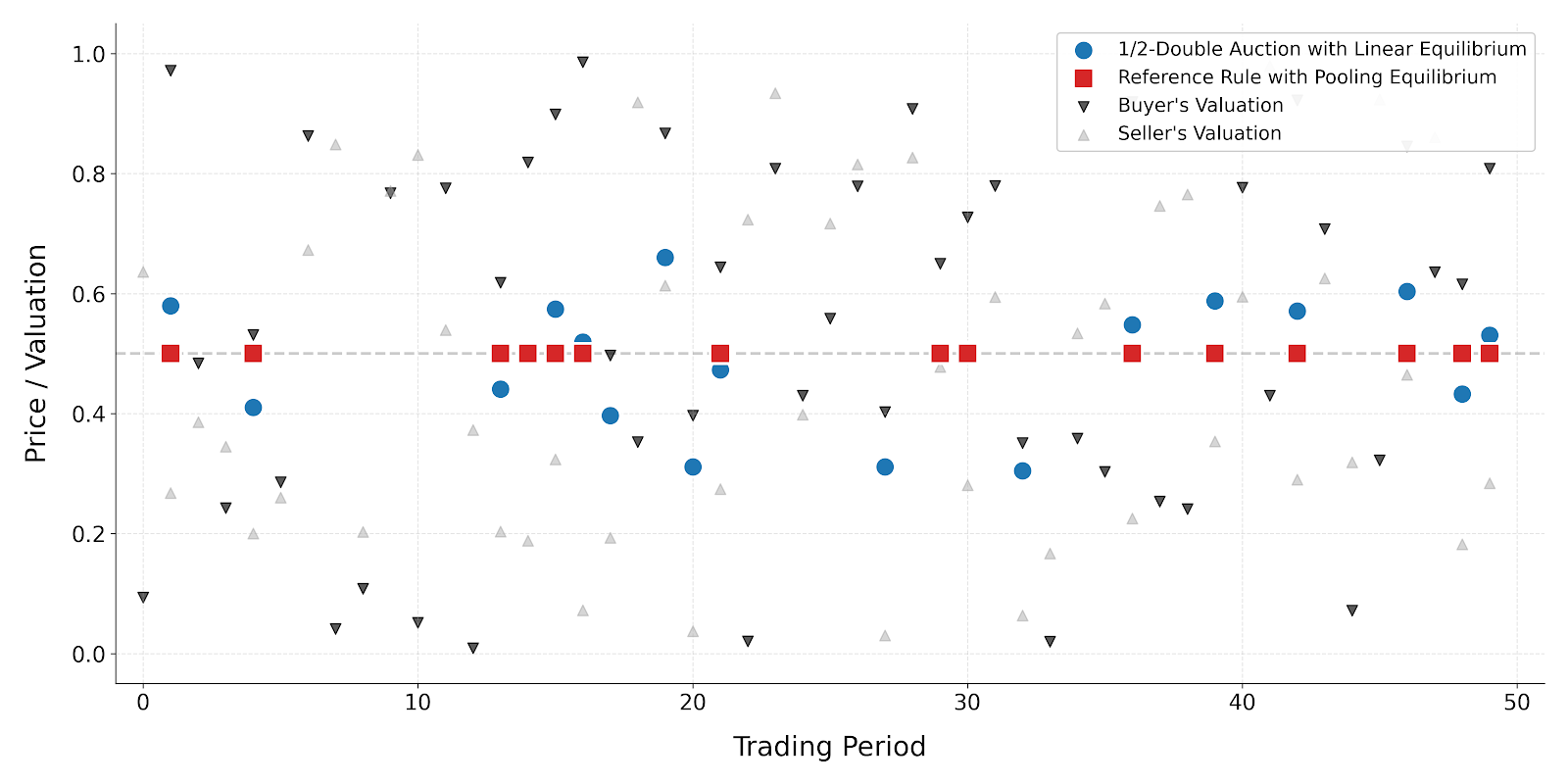}
    \caption{\textbf{Price Dynamics.} 
    This figure simulates $T=50$ periods of bilateral trade with uniformly distributed private values. In each period, a buyer valuation (black $\blacktriangledown$) and seller cost (gray $\blacktriangle$) are drawn uniformly from $[0,1]$. 
    \textbf{Blue circles} depict prices in the standard \textbf{1/2-Double Auction} (Linear Equilibrium), which achieves the theoretical second-best efficiency but exhibits high price volatility ($P_t$ fluctuates with valuations).
    \textbf{Red squares} depict prices under the \textbf{Reference Rule} (Pooling Equilibrium), which achieves perfect price stability at $R=0.5$.
    The simulation illustrates the trade-off: the Reference Rule successfully stabilizes the price but fails to realize ``off-center'' trades (e.g., high-value buyers matching with high-cost sellers) that the Double Auction successfully captures.}
    \label{fig:price_dynamics}
\end{figure}

\newpage

\subsection{Existence of Stationary Distribution}\label{ssec:stability}

In this subsection, we prove that \cref{assum:transitiondynamics} implies that for any admissible policy, the action path is well-behaved.
Moreover, if the policy is a stationary Markov policy, it induces at least one invariant measure.

\begin{proposition}[Universal Stability and $L^p$-Boundedness]
\label{proposition:universal_stability}
Under \cref{assum:transitiondynamics}  (specifically $\mathbb{E}[|A_t|^p] < 1$ for some $p \ge 2$), the action system is globally dissipative. Specifically:
\begin{enumerate}
\item For any admissible policy $\pi$, the $p$-th moments of the action process are uniformly bounded over time. That is, if $\mathbb{E}[|P_0|^p] < \infty$, then:
$$\sup_{t \ge 0} \mathbb{E}^\pi[|P_t|^p] \;<\; \infty.$$
\item If the policy $\pi$ is a stationary Markov policy, it induces at least one invariant probability measure $\psi_\pi$ with finite $p$-th moments. 
\end{enumerate}
\end{proposition}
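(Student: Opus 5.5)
The plan is to derive a one-step stochastic contraction for $|P_t|$ that holds under \emph{any} admissible policy, and then iterate it. The key observation is that admissibility forces $P_t \in I_t = [A_tP_{t-1}+B_t+U_t,\;A_tP_{t-1}+B_t+V_t]$. Hence, regardless of how $P_t$ is chosen (history-dependent or randomized), we have the pathwise bound
\[
|P_t| \;\le\; |A_t|\,|P_{t-1}| + W_t, \qquad W_t := |B_t| + \max(|U_t|,|V_t|).
\]
The point of this reduction is that the policy disappears entirely from the bound. Moreover, $(A_t,W_t)$ is independent of $\mathcal{F}_{t-1}$, and hence of $P_{t-1}$, by the i.i.d.\ structure in \cref{assum:transitiondynamics}. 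By \cref{assum:stability}, $\mathbb{E}[W_t^p]<\infty$.

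\textbf{Step 1 (moment recursion).} Set $\rho:=\mathbb{E}|A_t|^p<1$. Taking $L^p$ norms and using Minkowski's inequality together with independence gives
\[
\|P_t\|_p \le \rho^{1/p}\|P_{t-1}\|_p + \|W_t\|_p .
\]
Iterating this inequality yields
\[
\|P_t\|_p \le \rho^{t/p}\|P_0\|_p + \frac{\|W\|_p}{1-\rho^{1/p}},
\]
which is uniformly bounded in $t$ whenever $\mathbb{E}|P_0|^p<\infty$. This proves part 1.

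The only subtlety here is the independence claim when the policy randomizes. The decision $\nu_t$ may depend on $I_t$, and hence on $(A_t,B_t,U_t,V_t)$. This is harmless, because the bound on $|P_t|$ involves only $|P_{t-1}|$ and the time-$t$ shocks, not the choice itself. The product $|A_t||P_{t-1}|$ then factorizes in expectation, since $P_{t-1}$ is $\mathcal{F}_{t-1}$-measurable and the period-$t$ shocks are drawn afresh.

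\textbf{Step 2 (existence of an invariant measure).} For a stationary Markov policy $\mu$, the sequence $P_t=\mu(P_{t-1},I_t)$ is a time-homogeneous Markov chain on $\mathbb{R}$ with kernel $Q(x,\cdot)$. I would apply the Krylov--Bogoliubov construction. Start from $P_0=0$ and form the Cesàro averages
\[
\bar\psi_T=\frac1T\sum_{t=1}^T \mathrm{Law}(P_t).
\]
By Step 1 and Markov's inequality, $\sup_T\bar\psi_T(\{|x|>K\})\le M/K^p$, so the family is tight. Prokhorov's theorem then gives a weakly convergent subsequence with limit $\psi$. Since $|x|^p$ is lower semicontinuous and bounded below, Fatou's lemma for weak convergence yields $\int|x|^p\,d\psi\le M$.

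\textbf{Main obstacle.} The hard step is showing that the limit $\psi$ is actually invariant. The standard argument needs $Q$ to be weak Feller, i.e.\ $x\mapsto\int f\,dQ(x,\cdot)$ continuous for bounded continuous $f$. This fails for a general measurable $\mu$.

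My first attempt would be to restrict to policies where $\mu(\cdot,I)$ is continuous in $p$ for a.e.\ $I$, which covers all reference rules with continuous $r$, since the projection is continuous. For these, dominated convergence gives the Feller property directly.

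For fully general measurable $\mu$, I would instead replace weak convergence by setwise convergence. The route I would try is to show the Cesàro averages are uniformly integrable in total variation, for instance using the absolutely continuous part of the law of $I_t$ when $J_t$ or $B_t$ has a density. Alternatively, I would invoke a weak-Feller-free existence theorem for Markov chains that are bounded in probability on average (Meyn--Tweedie, Thm.~12.0.1 under a $T$-chain/continuity-component condition). I would flag this as the place where an extra mild regularity hypothesis may be needed.
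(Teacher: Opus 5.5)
Your Part 1 is correct and is essentially the paper's argument. The same pathwise bound $|P_t|\le|A_t|\,|P_{t-1}|+Z_t$ eliminates the policy. Where the paper uses $(x+y)^p\le\lambda^{1-p}x^p+(1-\lambda)^{1-p}y^p$ to get a recursion for $\mathbb{E}^\pi|P_t|^p$, you apply Minkowski to $\|P_t\|_p$; the two are interchangeable. Your Part 2 also follows the paper's route: Markov's inequality for tightness, Prokhorov on Ces\`aro averages, then Krylov--Bogolyubov. Your moment bound via lower semicontinuity of $|x|^p$ is cleaner than the paper's. The paper subtracts $\rho\int|x|^p\,d\psi_\mu$ from both sides of an inequality, which presupposes the finiteness it is trying to prove.

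The obstacle you flag is real, and it is a gap in the paper as well. Its proof just invokes ``the Feller property of the transition kernel $K_\mu$'', which a general measurable $\mu$ does not have. In fact part 2 is false at the stated generality, so neither of your fallback routes can close it. Take $A_t\equiv0$, $B_t\sim\mathrm{Unif}[0,1]$ and $J_t\equiv[-1,2]$. Then $I_t\supseteq[0,1]$ always and \cref{assum:stability} holds. Let $\mu(p,I)=g(p)$ with $g(p)=p/2$ for $p\in(0,1]$ and $g(p)=1$ otherwise. This is an admissible measurable stationary Markov policy, and the chain is $P_t=g(P_{t-1})$.
\begin{itemize}
\item Any invariant $\psi$ puts no mass outside $(0,1]\supseteq g(\mathbb{R})$.
\item Hence $\psi((1/2,1])=\psi(g^{-1}((1/2,1]))=\psi(\mathbb{R}\setminus(0,1])=0$.
\item For $k\ge1$, $\psi((2^{-k-1},2^{-k}])=\psi((2^{-k},2^{-k+1}])$, so every dyadic block is null and $\psi\equiv0$.
\end{itemize}
Concretely, your Ces\`aro limit from $P_0=0$ is $\delta_0$, which the kernel maps to $\delta_1$. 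The location of $I_t$ has a density here, but the policy ignores it. So no total-variation smoothing or $T$-chain property can be extracted from the noise alone; the regularity must be imposed on $\mu$.

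The correct resolution is therefore your first one, with one adjustment. Since the interval moves with the state, dominated convergence needs continuity of $x\mapsto\mu(x,Ax+B+J)$ for almost every realization of $(A,B,J)$. Joint continuity of $\mu$ in $(p,L,R)$ suffices; continuity of $\mu(\cdot,I)$ for each fixed $I$ does not. This holds for every reference rule $\Pi_{I_t}(r(P_{t-1}))$ with continuous $r$, which is the class singled out by the main theorems. Under that restriction your argument is a complete proof.
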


The proof is relegated to \Cref{ssec:ProofAuxiliary}.

\newpage
\section{Proofs}

\subsection{Proof of \Cref{thm:optimal_policy_adjustment}}\label{ssec:ProofTheorem1}

\begin{proof}
\noindent\textbf{Step 1: Well-Posedness and Weighted-Norm Stability.}
The state space $\mathsf S = \mathbb{R}$ is non-compact, and the one-period adjustment cost function $c(x,y)$ is unbounded (\cref{assum:transitiondynamics}). Consequently, the value function may not belong to the space of bounded measurable functions. 
To establish the existence and uniqueness of the value function, we employ the \emph{weighted norm} framework for Markov Decision Processes (see, e.g., \cite{HernandezLermaLasserre1999}).

\smallskip
\textit{1.1 The Discounted Problem Setup.}
We consider the infinite-horizon $\beta$-discounted problem with discount factor $\beta \in (0, 1)$. The objective is to compute the value function:
\[
V_\beta(x) \;:=\; \inf_{\pi} \mathbb{E}^\pi \left[ \sum_{t=1}^\infty \beta^{t-1} c(P_{t-1}, P_t) \;\bigg|\; P_0 = x \right],
\]
where the infimum is taken over all admissible stationary Markov policies $\pi$.
Since the cost terms $c(P_{t-1}, P_t)$ can be arbitrarily large for large $x$ (due to polynomial growth), we must identify a suitable function space in which $V_\beta$ is well-defined and finite.

\smallskip
\textit{1.2 The Weighted Banach Space.}
We introduce the Lyapunov weight function $w: \mathbb{R} \to [1, \infty)$ defined by $w(x) = 1 + |x|^p$, where $p \ge 1$ is the growth order defined in \cref{assum:transitiondynamics}.
Let $\mathcal{B}_w$ denote the vector space of measurable functions $v: \mathbb{R} \to \mathbb{R}$ satisfying the growth condition:
\[
\|v\|_w \;:=\; \sup_{x \in \mathbb{R}} \frac{|v(x)|}{w(x)} \;<\; \infty.
\]
Equipped with the norm $\|\cdot\|_w$, $\mathcal{B}_w$ is a Banach space. We seek a solution $V_\beta$ within this space.

\smallskip
\textit{1.3 Lyapunov Stability of the Transition Kernel.}
A sufficient condition for the optimal value function to belong to $\mathcal{B}_w$ is the existence of at least one stationary policy that stabilizes the system's energy ($p$-th moment).
We construct a ``test policy'' $\pi_0$ for this purpose. Let $\pi_0$ be the midpoint policy:
\[
\pi_0(x, I_t) \;:=\; \frac{L_t + R_t}{2} \;=\; A_t x + B_t + \bar{J}_t,
\]
where $\bar{J}_t = (U_t+V_t)/2$ is the center of the noise interval.
We analyze the drift of the weight function $w(P_t)$ under this policy. By Minkowski's inequality, the $p$-th moment of the next state, conditional on $x$, satisfies:
\begin{align*}
\left(\mathbb{E}^{\pi_0}[|P_t|^p \mid P_{t-1}=x]\right)^{1/p}
\;&=\; \left(\mathbb{E}\big[ |A_t x + (B_t + \bar{J}_t)|^p \big]\right)^{1/p} \\
&\le\; \left(\mathbb{E}[|A_t x|^p]\right)^{1/p} + \left(\mathbb{E}[|B_t + \bar{J}_t|^p]\right)^{1/p} \\
&=\; (\mathbb{E}[|A_t|^p])^{1/p} |x| + K_0,
\end{align*}
where $K_0 = (\mathbb{E}[|B_t+\bar{J}_t|^p])^{1/p}$ is a constant depending on the noise moments.
Let $\rho_p := \mathbb{E}[|A_t|^p]$. By \cref{assum:stability}, $\rho_p < 1$. Raising both sides to the power $p$:
\[
\mathbb{E}^{\pi_0}[|P_t|^p \mid x] \;\le\; (\rho_p^{1/p} |x| + K_0)^p.
\]
For any $\varepsilon > 0$, using the elementary inequality $(a+b)^p \le (1+\varepsilon)a^p + C_\varepsilon b^p$, we can bound the term on the right:
\[
\mathbb{E}^{\pi_0}[|P_t|^p \mid x] \;\le\; (1+\varepsilon)\rho_p |x|^p + C_\varepsilon K_0^p.
\]
Since $\rho_p < 1$, we can choose $\varepsilon$ small enough such that $\gamma := (1+\varepsilon)\rho_p < 1$. Letting $K = 1 + C_\varepsilon K_0^p$, we obtain the \emph{geometric drift condition} for the weight $w(x) = 1+|x|^p$:
\begin{equation*}
\label{eq:drift-inequality}
\mathbb{E}^{\pi_0}[w(P_t) \mid P_{t-1}=x] \;\le\; \gamma |x|^p + K \;\le\; \gamma w(x) + K.
\end{equation*}
This inequaility implies that the Markov chain induced by $\pi_0$ is geometrically ergodic in the $w$-norm. Since the optimal policy yields a value no worse than $\pi_0$, the optimal value function $V_\beta(x)$ is bounded by the value of $\pi_0$, which is finite and belongs to $\mathcal{B}_w$.

\medskip
\textit{1.4 Contraction of the Bellman Operator.}
We now verify that the Bellman operator $T_\beta$ is a contraction on $\mathcal{B}_w$. The operator is defined as:
\[
(T_\beta v)(x) \;:=\; \mathbb{E}_{A,B,J} \left[ \inf_{y \in I(x; A,B,J)} \big\{ c(x,y) + \beta v(y) \big\} \right],
\]
where $I(x; A,B,J) = [Ax+B+U, Ax+B+V]$ denotes the realized feasible interval.
Standard theorems for weighted-norm MDPs (e.g., \citealt[Theorem 8.3.6]{HernandezLermaLasserre1999}) state that $T_\beta$ is a contraction on $\mathcal{B}_w$ if two conditions are met:
\begin{enumerate}
    \item \textbf{Consistency of Cost:} The one-step cost is locally bounded by the weight function. Specifically, \cref{assum:stability}  guarantees $c(x,y) \le K(1+|x|^p+|y|^p)$. Since the interval boundaries grow linearly in $x$, any feasible $y$ satisfies $|y| \le |A_{max}x| + C$, so the cost is bounded by $C w(x)$.
    \item \textbf{Drift Condition:} The drift inequality \eqref{eq:drift-inequality} derived in Section 1.3 holds. This ensures that the ``energy'' of the system contracts in expectation, preventing the value function from exploding.
\end{enumerate}
Under these conditions, $T_\beta$ is a contraction mapping with respect to the weighted norm $\|\cdot\|_w$. By the Banach Fixed Point Theorem, a unique solution $V_\beta \in \mathcal{B}_w$ exists. This guarantees that the value function satisfies the polynomial growth bound $|V_\beta(x)| \le M(1+|x|^p)$.

\medskip
\noindent\textbf{Step 2: Convexity of the Value Function.}
We establish the convexity of the fixed point $V_\beta$ via a rigorous inductive argument on the Bellman operator $T_\beta$.

\smallskip
\textit{2.1 Base Case and Inductive Hypothesis.}
We proceed by induction. Let $\mathcal{C}_{cx}$ denote the set of convex functions in $\mathcal{B}_w$.
Initialize the value iteration with $V_0(x) \equiv 0$, which is trivially convex.
Assume as the inductive hypothesis that the function $V_n$ is convex (i.e., $V_n \in \mathcal{C}_{cx}$). We must show that the updated function $V_{n+1} = T_\beta V_n$ is also convex.

\smallskip
\textit{2.2 Joint Convexity of the State-Action Cost.}
The Bellman update is given by:
\[
V_{n+1}(x) \;=\; \mathbb{E}_{A,B,J} \left[ \inf_{y \in I(x; A,B,J)} \left\{ c(x,y) + \beta V_n(y) \right\} \right].
\]
Fix an arbitrary realization of the random parameters $\omega = (A, B, U, V)$. Define the cost function for this specific realization as:
\[
\Phi_\omega(x, y) \;:=\; c(x,y) + \beta V_n(y).
\]
We claim that $\Phi_\omega$ is \emph{jointly convex} in the pair $(x,y)$. To see this, consider the two terms separately:
\begin{enumerate}
    \item The immediate cost $c(x,y)$ is \emph{strictly jointly convex} by assumption.
    \item The future value term $\beta V_n(y)$ depends only on $y$. Since $V_n$ is convex by hypothesis, the map $(x,y) \mapsto V_n(y)$ is convex on $\mathbb{R}^2$ (it is constant along the $x$-axis and convex along the $y$-axis).
\end{enumerate}
Since the sum of convex functions is convex, $\Phi_\omega(x,y)$ is jointly convex in $(x,y)$.

\smallskip
\textit{2.3 Convexity of the Constraint Graph.}
The feasible interval for the realization $\omega$ is:
\[
\mathcal{G}_\omega \;:=\; \big\{ (x,y) \in \mathbb{R}^2 \mid Ax + B + U \le y \le Ax + B + V \big\}.
\]
Geometrically, this set represents the region between two parallel lines in the $(x,y)$-plane with slope $A$. Formally, it is the intersection of two closed half-spaces:
\[
\mathcal{G}_\omega \;=\; \{ (x,y) : y - Ax \ge B+U \} \cap \{ (x,y) : y - Ax \le B+V \}.
\]
Since half-spaces are convex sets, and the intersection of convex sets is convex, the feasible interval $\mathcal{G}_\omega$ is a convex subset of $\mathbb{R}^2$.

\smallskip
\textit{2.4 Preservation of Convexity under Minimization.}
Let $g_\omega(x)$ be the value of the minimization problem for this realization:
\[
g_\omega(x) \;:=\; \inf_{y} \big\{ \Phi_\omega(x,y) \mid (x,y) \in \mathcal{G}_\omega \big\}.
\]
This describes the partial minimization of a jointly convex function over a convex set. A fundamental result in convex analysis (see, e.g., \cite[Section 3.2.5]{boyd2004convex}) states that this operation preserves convexity.
Since the cost is non-negative, the infimum is finite, and thus $g_\omega(x)$ is a convex function of $x$.

\smallskip
\textit{2.5 Preservation under Expectation.}
The updated value function is the expectation over all possible realizations:
\[
V_{n+1}(x) \;=\; \mathbb{E}[g_\omega(x)] \;=\; \int g_\omega(x) \, d\mathbb{P}(\omega).
\]
Since the expectation operator is linear with a positive measure, and a non-negative weighted average (or integral) of convex functions is convex, $V_{n+1}(x)$ is convex.
This completes the inductive step: if $V_n$ is convex, then $V_{n+1}$ is convex.

\medskip
\textit{2.6 Convergence to the Fixed Point.}
In Step 1, we established that the Bellman operator is a contraction on the Banach space $\mathcal{B}_w$. Therefore, the sequence $V_n$ converges to the unique fixed point $V_\beta$ in the norm $\|\cdot\|_w$, which implies pointwise convergence.
Since the pointwise limit of a sequence of convex functions is convex, the unique discounted value function $V_\beta$ is convex.

\newpage

\medskip
\noindent\textbf{Step 3: The Average Cost Limit and Verification.}
Having established the properties of the discounted value function $V_\beta$, we pass to the limit to solve the long-run average cost problem.

\medskip
\textit{3.1 The Vanishing Discount Limit.}
We utilize the vanishing discount method. Define the \emph{differential value function} $h_\beta(x) := V_\beta(x) - V_\beta(0)$ and the average cost approximation $\lambda_\beta := (1-\beta)V_\beta(0)$.
Crucially, the \emph{geometric drift condition} derived in Step 1.3 ensures the stability required for unbounded costs (see \citet{HernandezLermaLasserre1999}). Specifically:
\begin{itemize}
    \item The sequence of normalized values $\{h_\beta\}_{\beta \in (0,1)}$ is uniformly bounded in the weighted norm on compact subsets of $\mathbb{R}$.
    \item The sequence of average costs $\{\lambda_\beta\}$ is bounded.
\end{itemize}
By the extended Arzela-Ascoli theorem, there exists a sequence $\beta_n \uparrow 1$ such that $h_{\beta_n}$ converges pointwise to a function $h(x)$ and $\lambda_{\beta_n} \to \lambda^*$. The limit function $h$ inherits convexity and the polynomial growth bound ($h \in \mathcal{B}_w$) from $V_\beta$.
Taking the limit in the discounted Bellman equation yields the \emph{Average Cost Optimality Equation} (ACOE):
\begin{equation}
\label{eq:ACOE-final}
\lambda^* + h(x)
\;=\;
\min_{y \in I(x)} \big\{ c(x,y) + \mathbb{E}[h(P_t)] \big\}.
\end{equation}

\medskip
\textit{3.2 Sufficiency of Stationary Policies.}
Standard results for weighted-norm MDPs (e.g., \cite{HernandezLermaLasserre1999}) establish that if a solution $(\lambda^*, h)$ to the ACOE exists and satisfies the growth condition $h \in \mathcal{B}_w$, then the stationary deterministic policy $\pi^*$ that selects the minimizer in the ACOE is average-cost optimal over the entire class of admissible policies.
Thus, finding the unique stationary solution to the ACOE is sufficient to characterize the globally optimal behavior.

\medskip
\textit{3.3 Verification (The Telescoping Sum).}
We now prove that solving the ACOE is sufficient for optimality. 
Let $\pi$ be any admissible policy. Under $\pi$, the state evolves as $P_t$. Applying the expectation operator $\mathbb{E}^\pi$ to the ACOE inequality at state $P_{t-1}$:
\[
\lambda^* + \mathbb{E}^\pi[h(P_{t-1})] \;\le\; \mathbb{E}^\pi \big[ c(P_{t-1}, P_t) + h(P_t) \big].
\]
The inequality becomes an equality if $\pi$ minimizes the RHS of the ACOE.
Summing this inequality from $t=1$ to $T$:
\[
T \lambda^* + \sum_{t=1}^T \mathbb{E}^\pi[h(P_{t-1})] \;\le\; \sum_{t=1}^T \mathbb{E}^\pi[c(P_{t-1}, P_t)] + \sum_{t=1}^T \mathbb{E}^\pi[h(P_t)].
\]
Observe that the sum of expectations telescopes. The term $\mathbb{E}^\pi[h(P_t)]$ appears on both sides (once as a cost-to-go, once as a previous value), cancelling out for all $t$ except the first ($t=0$) and the last ($t=T$). We simplify to:
\[
T \lambda^* + h(P_0) \;\le\; \mathbb{E}^\pi \left[ \sum_{t=1}^T c(P_{t-1}, P_t) \right] + \mathbb{E}^\pi[h(P_T)].
\]
Dividing by $T$ yields the fundamental performance bound:
\begin{equation}
\label{eq:performance-bound}
\lambda^* + \frac{h(P_0)}{T} \;\le\; \frac{1}{T} \mathbb{E}^\pi \left[ \sum_{t=1}^T c(P_{t-1}, P_t) \right] + \frac{\mathbb{E}^\pi[h(P_T)]}{T}.
\end{equation}
We now take the limit as $T \to \infty$:
\begin{enumerate}
    \item The term $h(P_0)/T$ vanishes to 0.
    \item The term $\frac{1}{T} \mathbb{E}^\pi [\sum c(P_{t-1}, P_t)]$ converges to the definition of the long-run average cost, $\mathcal{J}(\pi)$.
    \item The final term $\frac{\mathbb{E}^\pi[h(P_T)]}{T}$ must vanish. Since $h \in \mathcal{B}_w$, we have $|h(x)| \le C(1+|x|^p)$. For any policy with finite average cost, the $p$-th moment $\mathbb{E}[|P_T|^p]$ grows sub-linearly (or is bounded), ensuring $\lim_{T \to \infty} \frac{\mathbb{E}[h(P_T)]}{T} = 0$. (For policies with infinite cost, the inequality $\lambda^* \le \infty$ holds trivially).
\end{enumerate}
Thus, we obtain the lower bound $\lambda^* \le \mathcal{J}(\pi)$ for \emph{any} admissible policy.
Conversely, let $\pi^*$ be the policy that selects the minimizer in the ACOE. For this policy, the initial inequality is an equality. Following the same telescoping logic, we obtain $\lambda^* = \mathcal{J}(\pi^*)$.
Therefore, $\pi^*$ is optimal, and finding the minimizing policy in the ACOE is sufficient to prove the theorem.

\medskip
\noindent\textbf{Step 4: Characterization of the Optimal Policy.}
Finally, we determine the explicit functional form of the optimal policy by solving the minimization problem on the right-hand side of the ACOE.

\medskip
\textit{4.1 The Inner Minimization Problem.}
The optimal policy $\pi^*$ must select an action $y \in I_t$ that minimizes the sum of the immediate cost and the relative value of the next state.
Fix the current state $P_{t-1} = x$ and the realized interval $I_t = [L_t, R_t]$. The agent solves:
\[
\min_{y} \big\{ \psi_x(y) \big\} \quad \text{subject to} \quad L_t \le y \le R_t,
\]
where the objective function is defined as $\psi_x(y) := c(x,y) + h(y)$.

\medskip
\textit{4.2 Properties of the Objective Function.}
We analyze the function $\psi_x$:
\begin{enumerate}
    \item \textbf{Strict Convexity:} By \cref{assum:transitiondynamics}, the mapping $y \mapsto c(x,y)$ is strictly convex. In Step 3, we proved that $h(y)$ is convex. Since the sum of a strictly convex function and a convex function is strictly convex, $\psi_x$ is strictly convex on $\mathbb{R}$.
    \item \textbf{Coercivity (Growth):} We must ensure the minimum exists. By \cref{assum:cost} (Monotonicity), the cost $c(x,y)$ grows unbounded as $|y| \to \infty$. Furthermore, since the original costs are non-negative, the value function is bounded below by zero ($V_\beta \ge 0$), implying $h(y) = V_\beta(y) - V_\beta(0) \ge -V_\beta(0)$. Thus, the sum $\psi_x(y) \to \infty$ as $|y| \to \infty$.
\end{enumerate}
A strictly convex, coercive function on $\mathbb{R}$ possesses a unique global minimizer. We define this unconstrained minimizer as the ``Reference Target'' $r(x)$:
\[
r(x) \;:=\; \arg\min_{z \in \mathbb{R}} \big\{ c(x,z) + h(z) \big\}.
\]
Crucially, $r(x)$ depends \emph{only} on the current action $x$ and the value function $h$; it is independent of the interval realization $I_t$.

\medskip
\textit{4.3 The Projection Argument.}
We now re-introduce the constraint $y \in [L_t, R_t]$.
Because $\psi_x$ is strictly convex with a unique global minimum at $r(x)$, the function satisfies a strict monotonicity property relative to the minimum: it is strictly decreasing for all $y < r(x)$ and strictly increasing for all $y > r(x)$.
The constrained minimizer $y^*$ is therefore determined solely by the position of the interval relative to $r(x)$:
\begin{itemize}
    \item \textbf{Case 1: $r(x) \in [L_t, R_t]$.} The global minimum is feasible. Since it minimizes $\psi_x$ over all $\mathbb{R}$, it minimizes $\psi_x$ over the subset $[L_t, R_t]$. Thus $y^* = r(x)$.
    \item \textbf{Case 2: $r(x) < L_t$.} The interval lies entirely to the right of the minimum. On the domain $[L_t, \infty)$, the function $\psi_x$ is strictly increasing. Therefore, the minimum occurs at the leftmost boundary: $y^* = L_t$.
    \item \textbf{Case 3: $r(x) > R_t$.} The interval lies entirely to the left of the minimum. On the domain $(-\infty, R_t]$, the function $\psi_x$ is strictly decreasing. Therefore, the minimum occurs at the rightmost boundary: $y^* = R_t$.
\end{itemize}
This case-by-case logic corresponds exactly to the definition of the projection operator $\Pi_{[L,R]}(z) = \max(L, \min(R, z))$.
Thus, for \emph{any} realization of the interval, the optimal action is:
\[
P_t^* \;=\; \Pi_{I_t}\big(r(P_{t-1})\big).
\]

\smallskip
\textit{4.4 Optimality of a Reference Policy.}
We have constructed a policy $P_t = \Pi_{I_t}(r(P_{t-1}))$ that minimizes the RHS of the Average Cost Optimality Equation state-by-state. By the verification argument in Step 3, this policy is globally optimal for minimizing the long-run adjustment cost.
Since this policy belongs to the class $\mathcal{M}^{\mathrm{ref}}$, we have proven both the existence of an optimal policy and the structural sufficiency of reference-price policies.
Furthermore, since the objective $\psi_x(y)$ is strictly convex, the minimizer is unique. Thus, the reference policy derived here is the unique optimal policy

\smallskip
\noindent\textbf{Step 5: Structural Properties of the Reference Map.}
We now derive the structural properties (continuity, monotonicity, contraction, fixed point) of the optimal reference target $r(x)$. This requires refining our analysis of the relative value function $h$ to establish strict convexity and utilizing subgradient optimality conditions.

\smallskip
\textit{5.1 Strict Convexity of the Value Function.}
In Step 2, we established that the value function $h$ is convex. We now strengthen this to \emph{strict} convexity.
Recall that $h(x) = \mathbb{E}[\psi_{I_t}(x)] - \lambda^*$. Due to the stochastic support of the noise terms, for any state $x$, there is a strictly positive probability that the unconstrained minimizer $r(x)$ falls outside the realized interval $I_t$. In such events, the optimal policy is forced to the boundary.
Consider a realization where the upper constraint is active ($P_t = R_t$). The realized cost function becomes:
\[
\psi_{I_t}(x) \;=\; c(x, R_t) + h(R_t).
\]
Substituting the linear transition dynamics $R_t = A_t x + B_t + V_t$, the immediate cost term is $f(x) = c(x, A_t x + K_t)$.
We analyze the curvature of this term:
\begin{itemize}
    \item If $c$ is \emph{strictly jointly convex} (\cref{assum:cost} (Strict Convexity)), the composite function $f(x)$ is strictly convex in $x$ for any $A_t$, because the trajectory $(x, A_t x + K_t)$ traces a line in the domain. Strict convexity is preserved along lines unless the function is linear along that specific direction, which is ruled out by the growth conditions of $c$.
    \item In the specific case of \emph{translation invariant} costs (e.g., $c(x,y)=\phi(y-x)$), the term simplifies to $\phi((A_t-1)x + K_t)$. This is strictly convex provided $A_t \neq 1$. Since \cref{assum:stability} requires $\mathbb{E}[|A_t|^p] < 1$, the variable $A_t$ cannot be identically 1.
\end{itemize}
Since $h(x)$ is the expectation of these valuations, and the event of hitting a boundary occurs with positive probability, $h$ inherits this strictly positive curvature. Therefore, $h$ is \emph{strictly} convex on $\mathbb{R}$.

\smallskip
\textit{5.2 Optimality Conditions via Subgradients.}
To characterize the optimal target $r(x)$ without assuming differentiability of the value function $h$, we employ subgradient calculus.
Recall that $r(x)$ minimizes the strictly convex function $\phi_x(z) = c(x,z) + h(z)$. The necessary and sufficient condition for optimality is that zero belongs to the subdifferential of the objective function:
\begin{equation}
\label{eq:FOC-subgradient}
0 \;\in\; \partial_y c(x, r(x)) + \partial h(r(x)),
\end{equation}
where $\partial_y c$ denotes the partial derivative of the cost with respect to its second argument (which exists since $c \in C^2$) and $\partial h$ denotes the subdifferential of $h$.
This inclusion implies that there exists a specific subgradient $\xi \in \partial h(r(x))$ such that:
\[
\xi \;=\; - \partial_y c(x, r(x)).
\]

\smallskip
\textit{5.3 Continuity of the Reference Map.}
We explicitly prove that $r(x)$ is a continuous function.
Consider the minimization problem defining the target:
\[
r(x) \;=\; \arg\min_{z \in \mathbb{R}} \big\{ c(x,z) + h(z) \big\}.
\]
The objective function $\Phi(x,z) = c(x,z) + h(z)$ is jointly continuous in $(x,z)$ (since $c$ is $C^2$ and $h$ is convex/finite). Furthermore, as established in Step 4.2, the objective is coercive in $z$, effectively restricting the minimization to a compact set for any local neighborhood of $x$.
Since the objective is strictly convex in $z$ (Step 4.2), the minimizer is unique.
By Berge's Maximum Theorem applied to a strictly convex objective, the unique minimizer is a continuous function of the parameter $x$. Thus, $r \in C^0(\mathbb{R})$.
\end{proof}

\subsection{Proof of \Cref{theorem:non-expansive}}\label{ssec:proof-non-expansive}

\begin{proof}

This proof builds on the notation and results established in the proof of \Cref{theorem:non-expansive} (\Cref{ssec:ProofTheorem1}).

We first prove that the target function is \emph{strictly} increasing. 
By assumption, the cost function is submodular, that is, $c_{xy} < 0$.
Consider two distinct states $x_1 < x_2$. Let $r_1 = r(x_1)$ and $r_2 = r(x_2)$.
We revisit the structural balance equation derived from the optimality conditions and the Multivariate Mean Value Theorem:
\begin{equation}
\label{eq:structural_balance_strict}
c_{yy}(\tilde{x}, \tilde{r})(r_2 - r_1) + (\xi_2 - \xi_1) \;=\; -c_{yx}(\tilde{x}, \tilde{r})(x_2 - x_1).
\end{equation}
We proceed by contradiction. Suppose $r_2 \le r_1$ (Non-Increasing).
\begin{itemize}
    \item \textbf{RHS Analysis:} Since $x_2 > x_1$, the term $(x_2 - x_1)$ is strictly positive. By \Cref{assum:cost}, the cross-derivative $c_{yx}$ is strictly negative. Thus, the RHS is strictly positive.
    \item \textbf{LHS Analysis:}
    Since $r_2 \le r_1$, the term $(r_2 - r_1)$ is non-positive. Since $c_{yy} > 0$, the first term is non-positive.
    Since $h$ is strictly convex, the subgradient operator is strictly monotone. Thus $r_2 \le r_1 \implies \xi_2 \le \xi_1$, making the second term non-positive.
    Thus, the LHS is non-positive.
\end{itemize}
We have a contradiction: (Non-Positive) = (Strictly Positive). Therefore, the assumption $r_2 \le r_1$ is false. We must have $r_2 > r_1$.
This proves that the reference target $r(x)$ is strictly increasing.

Next, we prove that the policy is non-expansive, i.e., $|r(x_2) - r(x_1)| < |x_2 - x_1|$.
Using strict monotonicity, we know $r_2 > r_1$.
Consequently, by the strict convexity of $h$, the subgradient difference is strictly positive: $\xi_2 - \xi_1 > 0$.
We return to the balance equation \eqref{eq:structural_balance_strict}. 
If we drop the strictly positive term $(\xi_2 - \xi_1)$ from the LHS, the equality becomes a strict inequality:
\[
c_{yy}(\tilde{x}, \tilde{r})(r_2 - r_1) \;<\; -c_{yx}(\tilde{x}, \tilde{r})(x_2 - x_1).
\]
Dividing by the strictly positive curvature $c_{yy}$:
\[
r_2 - r_1 \;<\; \left( \frac{-c_{yx}(\tilde{x}, \tilde{r})}{c_{yy}(\tilde{x}, \tilde{r})} \right) (x_2 - x_1).
\]
We now invoke the condition $|c_{xy}| \le c_{yy}$. This implies that at the intermediate point, the ratio $-c_{yx} / c_{yy} \le 1$.
Substituting this bound:
\[
r_2 - r_1 \;<\; 1 \cdot (x_2 - x_1).
\]
Combining this with Step 5.4 ($r_2 > r_1$), we have $0 < r_2 - r_1 < x_2 - x_1$.
This confirms that the reference map is a strict contraction.

\medskip
A fixed point $p^\ast$ satisfies $0 \in \partial_y c(p^\ast, p^\ast) + \partial h(p^\ast)$.
By \cref{assum:cost} (No Cost for Inaction), $\partial_y c(x,x) = 0$. Thus, the condition simplifies to $0 \in \partial h(p^\ast)$.
This states that $p^\ast$ must be a global minimizer of the relative value function $h$.
\begin{itemize}
    \item \textbf{Existence:} As shown in Step 4.2 of the proof of \Cref{thm:optimal_policy_adjustment}, the coercivity of $c$ ensures $h$ is coercive. A continuous, coercive function on $\mathbb{R}$ attains a global minimum.
    \item \textbf{Uniqueness:} Since $h$ is strictly convex (Step 5.1), it has a unique global minimizer.
\end{itemize}
Thus, there exists a unique state $p^\ast$ that serves as the fixed point of the reference policy.
\end{proof}

\subsection{Proof of \Cref{theorem:rwinertia}}\label{ssec:ProofTheorem2}

\begin{proof}
The proof proceeds by formalizing the problem in terms of increments, showing that the dynamic optimization decouples into independent static problems, and utilizing the properties of convex minimization to identify the unique solution.

\medskip
\noindent\textbf{Step 1: Formalization in Increments.}
Let $(\mathcal{F}_t)_{t\ge 0}$ be the natural filtration generated by $P_0$ and the base intervals $(J_k)_{k \ge 1}$. 
In the random walk environment ($A_t \equiv 1$), the constraint $P_t \in P_{t-1} + J_t$ is equivalent to constraining the action increment $\Delta P_t := P_t - P_{t-1}$ to lie within $J_t$.
By assumption, the adjustment cost is translation invariant: $c(P_{t-1}, P_t) = \phi(\Delta P_t)$.
Thus, the global optimization problem is to minimize:
\[
\mathcal{J}(\pi) \;:=\; \limsup_{T\to\infty}\frac{1}{T} \sum_{t=1}^T \mathbb E [\phi(\Delta P_t)],
\]
subject to $\Delta P_t \in J_t$ almost surely.

\medskip
\noindent\textbf{Step 2: Decoupling of the Optimization Problem.}
In general MDPs, the current action affects the state, which constrains future actions. Here, however, the problem decouples entirely:
\begin{enumerate}
    \item \textbf{Feasibility Independence:} The feasibility of the future increment $\Delta P_{t+1} \in J_{t+1}$ depends only on the exogenous random interval $J_{t+1}$. It is statistically independent of the current action level $P_t$ or the current action $\Delta P_t$.
    \item \textbf{Cost Independence:} The future cost $\phi(\Delta P_{t+1})$ depends only on the future increment. Unlike the mean-reverting case, there is no "preferred" action level; shifting the entire action path by a constant does not alter the cost structure. Thus, the marginal value of the state is constant ($\nabla h \equiv 0$).
\end{enumerate}
Because the current action $\Delta P_t$ exerts no influence on future feasible intervals or future cost functions, the global minimization of the sum reduces to minimizing the expected cost $\mathbb{E}[\phi(\Delta P_t)]$ independently for each period $t$.

\medskip
\noindent\textbf{Step 3: The Static Lower Bound.}
We solve the static minimization problem pointwise. 
Fix a realization of the base interval $J_t = [U_t, V_t]$. The agent solves:
\[
\min_{\delta} \phi(\delta) \quad \text{subject to} \quad \delta \in J_t.
\]
By \Cref{assum:cost} and translation invariance, $\phi$ is strictly convex with a global unconstrained minimum at $\delta = 0$.
The minimizer of a strictly convex function over a closed interval is simply the point in the interval closest to the global minimum. 
Mathematically, the unique solution $\delta^*(J_t)$ is the projection of $0$ onto $J_t$:
\[
\delta^*(J_t) \;=\; \Pi_{J_t}(0) \;=\;
\begin{cases}
U_t, & \text{if } U_t > 0 \quad (\text{Interval is to the right}), \\
0, & \text{if } 0 \in [U_t, V_t] \quad (\text{Zero is feasible}), \\
V_t, & \text{if } V_t < 0 \quad (\text{Interval is to the left}).
\end{cases}
\]
This establishes a universal lower bound: for any admissible policy, $\phi(\Delta P_t) \ge \phi(\delta^*(J_t))$ almost surely.

\medskip
\noindent\textbf{Step 4: Optimality of the \RR.}
The \RR sets the action to $P_t = \Pi_{I_t}(P_{t-1})$.
Substituting $I_t = P_{t-1} + J_t$ and utilizing the translation invariance of the projection operator ($\Pi_{x+J}(x) = x + \Pi_{J}(0)$), the implied increment is:
\[
\Delta P_t^{\mathrm{Inertia}} \;=\; \Pi_{P_{t-1}+J_t}(P_{t-1}) - P_{t-1} \;=\; \Pi_{J_t}(0).
\]
The \RR generates the optimal increment $\delta^*(J_t)$ for every realization of the constraints. Therefore, it achieves the pointwise minimum cost almost surely and minimizes the long-run average cost.

\medskip
\noindent\textbf{Step 5: Uniqueness.}
Uniqueness follows from the strict convexity of $\phi$.
Suppose there exists another optimal policy $\tilde{\pi}$ producing increments $\Delta \tilde{P}_t$. For this policy to achieve the lower bound, it must satisfy $\mathbb{E}[\phi(\Delta \tilde{P}_t)] = \mathbb{E}[\phi(\delta^*(J_t))]$.
Since $\delta^*(J_t)$ is the unique minimizer of $\phi$ on $J_t$ (due to strict convexity), any deviation $\Delta \tilde{P}_t \neq \delta^*(J_t)$ on a set of positive measure implies strict inequality $\phi(\Delta \tilde{P}_t) > \phi(\delta^*(J_t))$.
Taking expectations, this would imply a strictly higher cost, contradicting optimality. Thus, the \RR is the unique optimal policy.
\end{proof}

\subsection{Proof of \Cref{theorem:varanchor}}\label{ssec:ProofTheorem3}

\begin{proof}
The proof is constructive and proceeds in five steps. 
We use the variational characterization of variance to decompose the global non-linear optimization problem into a family of tractable average-cost Markov Decision Processes (MDPs), each indexed by a scalar target parameter.

\smallskip
\noindent\textbf{Preliminaries and Stability.}
The state space is $\mathsf S = \mathbb{R}$. The feasible intervals evolve according to the affine dynamics $I_t(p) = [A_t p + B_t + U_t, \, A_t p + B_t + V_t]$.
We invoke the model's stability condition, $\mathbb{E}[A_t^p] < 1$. 
By \Cref{proposition:universal_stability}, this condition ensures that the second moment of the action process is uniformly bounded for \emph{any} admissible policy $\pi$:
\[
\sup_{t \ge 0} \mathbb{E}^\pi[P_t^2] < \infty.
\]
This boundedness is crucial: it guarantees that all Mean Squared Error (MSE) objectives defined below are finite, ensuring the optimization problems are well-posed in the space of square-integrable random variables.

\smallskip
\noindent\textbf{Step 0: Variational Formulation.}
Let $\pi$ be an admissible policy. 
We seek to minimize the stationary variance $\mathrm{Var}(\pi)$. 
The variance minimization problem is:
\[
\mathcal{V}^\ast \;=\; \inf_{\pi} \left( \inf_{c \in \mathbb{R}} J(\pi, c) \right).
\]
Since the minimization is over the product space of policies and scalars ($\Pi \times \mathbb{R}$), we can interchange the order of operations:
\begin{equation}
\label{eq:interchange}
\inf_{\pi} \inf_{c \in \mathbb{R}} J(\pi, c)
\;=\;
\inf_{c \in \mathbb{R}} \left( \inf_{\pi} J(\pi, c) \right).
\end{equation}
This decomposes the difficult global problem into two simpler sequential tasks:
\begin{enumerate}
    \item \textbf{Inner Control Problem:} Fix $c$ and find the policy $\pi_c$ that minimizes the MSE relative to $c$.
    \item \textbf{Outer Scalar Problem:} Minimize the resulting value function $F(c) := J(\pi_c, c)$ with respect to $c$.
\end{enumerate}

\smallskip
\noindent\textbf{Step 1: Solving the Fixed-Target MDP.}
Fix the target scalar $c \in \mathbb{R}$. 
The inner problem is to find a policy minimizing the average cost with the one-period cost function $g_c(y) = (y-c)^2$.
This is an infinite-horizon average-cost MDP with state $P_{t-1}=x$ and action $P_t=y \in I_t(x)$.

\smallskip
\textit{1.1 Weighted Norm Consistency.}
Since the cost function $(y-c)^2$ is unbounded, we work within the weighted Banach space $\mathcal{B}_w$ with weight $w(x) = 1+x^2$.
We must verify that the cost function does not grow faster than the weight. For any feasible action $y$, we have:
\[
g_c(y) = (y-c)^2 \le 2y^2 + 2c^2 \le 2(1+y^2) + 2c^2 \le C(c) \cdot w(y).
\]
Since the transition kernel satisfies the geometric drift condition for $w$ (as established in \Cref{proposition:universal_stability}), and the cost is dominated by $w$, the standard theory for unbounded-cost MDPs (e.g. \cite{HernandezLermaLasserre1999}) applies.

\smallskip
\textit{1.2 Optimality Equation.}
There exists a scalar optimal cost $\lambda_c^*$ and a relative value function $h_c \in \mathcal{B}_w$ satisfying the Average Cost Optimality Equation (ACOE):
\begin{equation}
\label{eq:ACOE-var}
\lambda_c^* + h_c(x)
\;=\;
\mathbb E \left[ \min_{y \in I_t(x)} \big\{ (y-c)^2 + h_c(y) \big\} \right],
\quad \forall x \in \mathbb{R}.
\end{equation}
Furthermore, the stationary deterministic policy $M_c$ that selects the minimizer in the RHS of \eqref{eq:ACOE-var} is optimal for the fixed target $c$.

\smallskip
\noindent\textbf{Step 2: Convexity of the Bias Function.}
To characterize the policy $M_c$, we determine the shape of $h_c$. We prove it is convex by analyzing the finite-horizon/discounted approximations.

Fix a discount factor $\beta \in (0,1)$ and consider the discounted value iteration $V_{n+1} = T V_n$.
Suppose $V_n$ is convex. The update for a specific realization of parameters $\omega=(A, B, U, V)$ is:
\[
g_\omega(x) \;=\; \min_{y} \big\{ (y-c)^2 + \beta V_n(y) \big\} \quad \text{s.t.} \quad y \in [Ax+B+U, Ax+B+V].
\]
\begin{enumerate}
    \item \textbf{Joint Convexity:} The objective $(y-c)^2 + \beta V_n(y)$ is jointly convex in $(x,y)$ (it is strictly convex in $y$ and constant in $x$).
    \item \textbf{Convex Constraint:} The feasible interval $\{(x,y) : Ax+B+U \le y \le Ax+B+V\}$ is the intersection of closed half-spaces, hence convex.
    \item \textbf{Preservation:} Minimizing a jointly convex function over a convex set results in a convex function of the parameter $x$. Thus, $g_\omega(x)$ is convex.
\end{enumerate}
The full update $V_{n+1}(x) = \mathbb{E}[g_\omega(x)]$ is an expectation of convex functions, which is convex. By induction, the discounted value function $V_{\beta,c}$ is convex.
The average-cost bias function is obtained as the limit $h_c(x) = \lim_{\beta \uparrow 1} (V_{\beta,c}(x) - V_{\beta,c}(0))$. Since convexity is preserved under pointwise limits, $h_c$ is convex.

\medskip
\noindent\textbf{Step 3: Characterization of the Policy $M_c$.}
We explicitly solve the minimization in the ACOE:
\[
\min_{y \in [L_t, R_t]} \big\{ \psi_c(y) \big\}, \quad \text{where } \psi_c(y) = (y-c)^2 + h_c(y).
\]
Crucially, the objective function $\psi_c(y)$ depends on the target $c$ and the continuation value $h_c$, but is independent of the current state $x$. 
The state $x$ enters the problem only through the constraint bounds $[L_t, R_t]$.

The function $\psi_c(y)$ is strictly convex (as the sum of the strictly convex immediate cost $(y-c)^2$ and the convex continuation value $h_c$). 
By coercivity, it has a unique global minimizer on $\mathbb{R}$, which we denote by $z_c^\ast$:
\[
z_c^\ast \;:=\; \arg\min_{z \in \mathbb{R}} \big\{ (z-c)^2 + h_c(z) \big\}.
\]
This defines an Anchor Policy: for a fixed variance center $c$, the optimal policy $M_c$ steers the action toward a constant target $z_c^\ast$.

\medskip
\noindent\textbf{Step 4: Existence of the Optimal Target.}
We return to the outer problem: minimizing $F(c) := \inf_{\pi} J(\pi, c) = J(M_c, c)$. We must show that an optimal scalar $c^\ast$ exists.
We argue that if we choose a target $c$ extremely far from the natural center of the market, the cost must explode.
Recall the dynamics $P_t = A_t P_{t-1} + B_t + \xi_t$, where $\xi_t$ is the policy's adjustment.
The stationary mean $\mu_M$ of \emph{any} stationary policy $M$, and hence of all Anchoring Policies, satisfies the balance equation $\mu_M = \mathbb{E}[A]\mu_M + \mathbb{E}[B] + \mathbb{E}[\xi]$.
Since the adjustment $\xi_t$ is constrained to lie within the noise interval $[U_t, V_t]$, its expectation is uniformly bounded by the model primitives: $\mathbb{E}[U] \le \mathbb{E}[\xi] \le \mathbb{E}[V]$.
Solving for the mean:
\[
\mu_M \;=\; \frac{\mathbb{E}[B] + \mathbb{E}[\xi]}{1 - \mathbb{E}[A]}.
\]
This implies that the set of achievable stationary means is a bounded interval $[-R, R]$.
Now consider the cost $F(c)$. Using the MSE identity:
\[
F(c) \;=\; J(M_c, c) \;=\; \mathrm{Var}(M_c) + (\mu_{M_c} - c)^2 \;\ge\; (\mu_{M_c} - c)^2.
\]
As $|c| \to \infty$, the distance between $c$ and the bounded mean $\mu_{M_c}$ grows linearly, so the cost grows quadratically. Thus $F(c)$ is coercive.
By the Generalized Weierstrass Theorem, a lower semicontinuous coercive function attains its global minimum. Let $c^\ast$ be a minimizer.

\medskip
\noindent\textbf{Step 5: Global Optimality.}
We construct the global policy and verify it minimizes variance.
Let $m^\ast := z_{c^\ast}^\ast$ be the optimal dynamic target associated with the optimal parameter $c^\ast$.
Define the policy $M^\ast$ as the Anchoring Rule targeting $m^\ast$: $P_t \;=\; \Pi_{I_t}(m^\ast).$
We chain the optimality conditions derived in Steps 0-4:
\begin{align*}
\mathrm{Var}(M^\ast)
&\;\le\; J(M^\ast, c^\ast) & (\text{Property of Variance: } \sigma^2 \le \text{MSE}_c) \\
&\;=\; \inf_{\pi} J(\pi, c^\ast) & (\text{Step 3: } M^\ast \text{ is optimal for } c^\ast) \\
&\;=\; F(c^\ast) & (\text{Definition of } F) \\
&\;=\; \inf_{c} F(c) & (\text{Step 4: } c^\ast \text{ minimizes } F) \\
&\;=\; \inf_{c} \inf_{\pi} J(\pi, c) & (\text{Expansion}) \\
&\;=\; \inf_{\pi} \inf_{c} J(\pi, c) & (\text{Step 0: Interchange}) \\
&\;=\; \inf_{\pi} \mathrm{Var}(\pi). & (\text{Variational Identity})
\end{align*}
This equality confirms that the Anchoring Rule $M^\ast$ achieves the global minimum variance among all admissible policies.
\end{proof}

\subsection{Proof of \Cref{theorem:optanchor}}\label{ssec:ProofTheorem4}

\begin{proof}
By \Cref{theorem:varanchor}, the variance-minimizing policy is an anchoring rule $P_t = \Pi_{I_t}(m)$ for some scalar $m \in \mathbb{R}$. 
Furthermore, the proof of \Cref{theorem:varanchor} established that the minimal achievable variance is exactly the global minimum of the function:
\[
J(m) \;:=\; \mathbb E \big[ (\Pi_{I_t}(m) - m)^2 \big].
\]
Our task is to characterize the unique minimizer $m^\ast$ of this strictly convex function.

\medskip
\noindent\textbf{Step 1: Differentiability and Convexity.}
Let $I = [L, R]$ be a generic realization of the random constraint interval. Define the realized cost function $j_I(m) := (\Pi_I(m) - m)^2$, which represents the squared Euclidean distance from $m$ to the interval $I$.
Explicitly:
\[
j_I(m) \;=\;
\begin{cases}
(L-m)^2 & \text{if } m < L, \\
0 & \text{if } L \le m \le R, \\
(R-m)^2 & \text{if } m > R.
\end{cases}
\]
This function is continuously differentiable with respect to $m$. Its derivative is:
\[
j_I'(m) \;=\; 2(m - \Pi_I(m)).
\]
If $m < L$, derivative is $2(m-L)$; if $L \le m \le R$, derivative is 0; if $m > R$, derivative is $2(m-R)$).

We now differentiate the expected cost $J(m) = \mathbb{E}[j_I(m)]$.
Since the distance $|\Pi_I(m) - m|$ is bounded by $|m| + |L| + |R|$, and the constraint boundaries have finite first moments, the derivative is dominated by an integrable function. By the Dominated Convergence Theorem, we can differentiate under the expectation:
\begin{equation}
\label{eq:J-prime-direct}
J'(m) \;=\; \mathbb E \big[ j_I'(m) \big] \;=\; 2 \mathbb E \big[ m - \Pi_{I_t}(m) \big].
\end{equation}
To establish convexity, we examine the second derivative. The projection function $\Pi_I(m)$ has a slope of 1 strictly inside the interval and 0 outside. Thus, its derivative is almost everywhere $\Pi_I'(m) = \mathbf{1}_{\{L < m < R\}}$.
Substituting this into the expression for the second derivative of the realized cost $j_I''(m) = 2(1 - \Pi_I'(m))$, we find:
\[
j_I''(m) \;=\; 2\big(1 - \mathbf{1}_{\{L < m < R\}}\big) \;=\; 2 \cdot \mathbf{1}_{\{m \notin (L, R)\}}.
\]
The term $\mathbf{1}_{\{m \notin (L, R)\}}$ is the indicator that the anchor $m$ lies outside the feasible interval, meaning the constraint is binding.
Taking expectations, the second derivative of the aggregate cost is directly proportional to the probability that the anchor falls outside the random interval:
\[
J''(m) \;=\; 2 \cdot \mathbb{E}\big[ \mathbf{1}_{\{m \notin (L, R)\}} \big] \;=\; 2 \cdot \mathbb{P}\big( \{m \le L\} \cup \{m \ge R\} \big).
\]
Assuming the interval distribution is non-degenerate, the probability of the constraint being active is strictly positive, implying $J''(m) > 0$. Thus, $J(m)$ is strictly convex and coercive, guaranteeing a unique global minimizer.

\medskip
\noindent\textbf{Step 2: The Self-Consistency Condition.}
The optimal anchor $m^\ast$ is defined by the first-order condition $J'(m^\ast) = 0$.
Substituting \eqref{eq:J-prime-direct}:
\[
2 \mathbb E \big[ m^\ast - \Pi_{I_t}(m^\ast) \big] \;=\; 0 \quad \implies \quad m^\ast \;=\; \mathbb E \big[ \Pi_{I_t}(m^\ast) \big].
\]
This proves the second claim of the theorem: the target must equal the expected stationary action that results from using that target.

\medskip
\noindent\textbf{Step 3: The Boundary-Balance Equation.}
We expand the term $\Pi_{I_t}(m^\ast)$ using indicator functions for the three regions relative to $m^\ast$:
\[
\Pi_{I_t}(m^\ast) \;=\; L_t \mathbf{1}_{\{L_t > m^\ast\}} + m^\ast \mathbf{1}_{\{L_t \le m^\ast \le R_t\}} + R_t \mathbf{1}_{\{R_t < m^\ast\}}.
\]
Substituting this into the condition $m^\ast = \mathbb{E}[\Pi_{I_t}(m^\ast)]$ yields:
\[
m^\ast \;=\; \mathbb{E}\big[ L_t \mathbf{1}_{\{L_t > m^\ast\}} + m^\ast \mathbf{1}_{\{L_t \le m^\ast \le R_t\}} + R_t \mathbf{1}_{\{R_t < m^\ast\}} \big].
\]
Subtracting $m^\ast$ (written as $m^\ast(\mathbf{1}_{\{L > m\}} + \mathbf{1}_{\{L \le m \le R\}} + \mathbf{1}_{\{R < m\}})$) from both sides cancels the middle term and centers the boundary terms:
\[
0 \;=\; \mathbb{E}\Big[ (L_t - m^\ast)\mathbf{1}_{\{L_t > m^\ast\}} \Big] \;+\; \mathbb{E}\Big[ (R_t - m^\ast)\mathbf{1}_{\{R_t < m^\ast\}} \Big].
\]
This is exactly Eq.~\eqref{eq:anchor-balance}.

\medskip
\noindent\textbf{Step 4: Symmetry Case.}
Suppose the joint distribution of $(L_t, R_t)$ is symmetric about $c$. Let $\delta = m - c$. Due to symmetry, the cost function $J(m)$ satisfies $J(c+\delta) = J(c-\delta)$.
Specifically, if we shift the anchor away from $c$, the expected squared distance to a symmetric random interval increases identically in both directions.
The unique minimizer of a strictly convex symmetric function is its center of symmetry. Thus, $m^\ast = c$.
\end{proof}

\subsection{Proof of Proposition 1}\label{ssec:ProofBestResponse}

\begin{proof}
Let $\Delta(v) := U_{stat}(v) - \max_{b < R} U_{mono}(b; v)$ denote the utility difference between the optimal Inertia strategy and the optimal Monopsony strategy. We establish the result by showing that $\Delta(v)$ satisfies a single-crossing property.

First, we apply the Envelope Theorem to determine the slope of the value functions. For the Monopsony profit $V_{mono}(v) = \max_b (v-b)G(b)$, the derivative is simply the probability of trade at the optimum, $V'_{mono}(v) = G(b^m(v))$. For the Inertia profit $U_{stat}(v)$, differentiation yields $U'_{stat}(v) = G(R) + \int_R^v g(c)dc = G(v)$. The rate of change of the utility difference is therefore $\Delta'(v) = G(v) - G(b^m(v))$. In the monopsony regime, the Buyer strictly shades their bid to extract surplus, implying $b^m(v) < v$. Since $G(\cdot)$ is strictly increasing, it follows that $G(v) > G(b^m(v))$, and thus $\Delta'(v) > 0$ for all $v$.

Second, we check the boundary conditions. At $v=R$, the Inertia strategy yields zero surplus, $U_{stat}(R) = 0$. The Monopsony strategy, however, yields strictly positive profit because the buyer can set an optimal bid $b^m(R) < R$ and extract surplus from sellers with $c < b^m(R)$. Thus, $\Delta(R) < 0$. As $v \to 1$, the Monopsony strategy entails a permanent efficiency loss due to bid shading, whereas the Inertia strategy approaches first-best efficiency (trade occurs whenever $c \le v$). Consequently, $\Delta(v) > 0$ for sufficiently high $v$. Since $\Delta(v)$ is continuous, strictly monotonic, starts negative, and ends positive, there exists a unique $\hat{v} \in (R, 1]$ such that $\Delta(\hat{v}) = 0$.
\end{proof}

\subsection{Proof of \Cref{proposition:BNE}}\label{ssec:ProofBNE}

\begin{proof}

\textbf{Impossibility of a non-degenerate BNE.} We proceed by contradiction. Assume there exists a Bayesian Nash Equilibrium where the strategies $b: [0,1] \to \mathbb{R}$ and $s: [0,1] \to \mathbb{R}$ are continuous and strictly increasing, and where the event $\{P = R\}$ occurs with strictly positive probability.

\medskip
\noindent\textbf{Step 1: The Domain of Strategies.}
Let $H_S$ denote the distribution of asks induced by the seller's strategy. Since $s(c)$ is continuous and strictly increasing, $H_S$ admits a continuous density $h_S$.
The condition that trade occurs at $R$ with positive probability implies that the event $\{s(C) \le R \le b(V)\}$ has positive measure. Given strictly increasing strategies (which preclude atoms at $R$), this requires the supports of the strategies to strictly overlap $R$.
Specifically, the range of equilibrium bids must include an interval $(R, R+\delta]$ and the range of equilibrium asks must include an interval $[R-\delta, R)$. Consequently, $H_S(R) > 0$ and $h_S(R) > 0$.

\medskip
\noindent\textbf{Step 2: Optimality Above the Reference Price.}
Consider a buyer who submits a bid $b$ in the range $(R, R+\delta)$.
The trade volume splits into two regions:
\begin{itemize}
    \item \textbf{Infra-marginal sellers ($s \le R$):} The interval is $[s, b]$. Since $s \le R < b$, the reference price is strictly interior. The projection rule sets the price to $P = \Pi_{[s, b]}(R) = R$.
    \item \textbf{Marginal sellers ($R < s \le b$):} The interval is $[s, b]$. The reference price is below the interval. The projection rule sets the price to the lower bound $P = s$.
\end{itemize}
The expected utility for a buyer with valuation $v$ is:
\[ U_{right}(b; v) = \int_{\underline{s}}^{R} (v - R) \, dH_S(s) + \int_{R}^{b} (v - s) \, dH_S(s). \]
Differentiating with respect to $b$ (using Leibniz's Rule), we observe that the first term is independent of $b$. The marginal utility depends only on the second term:
\[ \frac{\partial U}{\partial b} = (v - b)h_S(b). \]
For the strategy to be optimal in this region, the First Order Condition requires $v = b$.
Thus, any buyer bidding above $R$ must bid truthfully: $b(v) = v$.
Taking the limit as the bid approaches $R$ from above, continuity implies that the valuation of the buyer bidding exactly $R$ must be $R$:
\[ \lim_{b \searrow R} v(b) = R. \]

\bigskip
\noindent\textbf{Step 3: The Contradiction.}
We analyze the behavior of buyers with valuations in the neighborhood $v \in [R, R+\epsilon)$.
From Step 2, we established that any buyer choosing a bid $b > R$ must bid truthfully ($b(v)=v$) to satisfy the First Order Condition.
This imposes a strict constraint: the only candidates who could possibly submit bids in the interval $(R, R+\epsilon)$ are buyers with valuations in that same interval, $v \in (R, R+\epsilon)$.

We now check if these specific buyers actually prefer to bid in this region. Consider a buyer with valuation $v = R + \eta$ (where $0 < \eta < \epsilon$). We compare their utility under the two available regimes:

\begin{itemize}
    \item \textbf{Regime A: Bidding in the Price-Taking Region ($b > R$).}
    By the truthfulness condition, the optimal bid in this region is $b = R + \eta$.
    The buyer pays $P=R$ for all infra-marginal trades ($s \le R$).
    The expected utility is approximately:
    \[ U_{\text{right}} \approx (v - R) H_S(R) = \eta \cdot H_S(R). \]
    As $\eta \to 0$, this utility vanishes.

    \item \textbf{Regime B: Bidding in the Monopsony Region ($b < R$).}
    The buyer can deviate to a lower bid $b' = R - \delta$ (for some fixed, small $\delta > 0$).
    Since the support of asks extends below $R$ (from Step 1), trade volume $H_S(b')$ remains strictly positive.
    Crucially, the price on \textit{every} trade drops from $R$ to $b'$.
    The expected utility is:
    \[ U_{\text{left}} = (v - b') H_S(b') = (R + \eta - (R - \delta)) H_S(R - \delta) = (\delta + \eta) H_S(R - \delta). \]
    Since $\delta$ is fixed and independent of $\eta$, this utility is strictly bounded away from zero.
\end{itemize}

\textbf{Conclusion:}
For sufficiently small $\eta$, we have $U_{\text{left}} \gg U_{\text{right}}$.
Thus, buyers with valuations $v \in (R, R+\epsilon)$ strictly prefer to bid below $R$ rather than bid truthfully just above $R$.
This leads to a contradiction:
\begin{enumerate}
    \item The continuity assumption requires that bids cover the neighborhood of $R$.
    \item The truthfulness condition requires that bids in $(R, R+\epsilon)$ can only come from types $v \in (R, R+\epsilon)$.
    \item The incentive analysis shows that types $v \in (R, R+\epsilon)$ refuse to bid there, choosing $b < R$ instead.
\end{enumerate}
Therefore, the interval of bids $(R, R+\epsilon)$ is empty. This gap contradicts the assumption that the bidding strategy $b(v)$ is continuous and strictly increasing.
\end{proof}

\subsection{Proof of \Cref{proposition:universal_stability}}\label{ssec:ProofAuxiliary}

\begin{proof}
\textbf{Part 1: Uniform $L^p$-boundedness.}
Fix an admissible policy $\pi$.
By \cref{assum:transitiondynamics}, the realized feasible interval is $I_t = A_t P_{t-1}+B_t + [U_t,V_t]$. Since $P_t\in I_t$ almost surely, we can bound the magnitude of the action process as:
\[
|P_t| \;\le\; |A_t|\,|P_{t-1}| \;+\; Z_t,
\]
where $Z_t := |B_t|+\max\{|U_t|,|V_t|\}$. By \Cref{assum:stability}, $\mathbb{E}[Z_t^p] < \infty$.

We apply the elementary convexity inequality: for any $x, y \ge 0$, $p \ge 1$, and $\lambda \in (0,1)$,
\[
(x+y)^p \;\le\; \lambda^{1-p} x^p \;+\; (1-\lambda)^{1-p} y^p.
\]
Applying this to $|P_t|$, we obtain:
\[
|P_t|^p \;\le\; \lambda^{1-p} |A_t|^p |P_{t-1}|^p \;+\; (1-\lambda)^{1-p} Z_t^p.
\]
Taking conditional expectations given $\mathcal{F}_{t-1}$ and using that $(A_t, Z_t)$ are independent of the past:
\[
\mathbb{E}^\pi[|P_t|^p \mid \mathcal{F}_{t-1}]
\;\le\;
\lambda^{1-p} \underbrace{\mathbb{E}[|A_t|^p]}_{=: \alpha} |P_{t-1}|^p
\;+\;
(1-\lambda)^{1-p} \mathbb{E}[Z_t^p].
\]
By \Cref{assum:stability}, $\alpha < 1$. Define the function $f(\lambda) = \lambda^{1-p}\alpha$. Since $\lim_{\lambda \to 1^-} f(\lambda) = \alpha < 1$, there exists a $\lambda \in (0,1)$ sufficiently close to 1 such that
\[
\rho \;:=\; \lambda^{1-p}\alpha \;<\; 1,
\qquad
K \;:=\; (1-\lambda)^{1-p}\mathbb{E}[Z_t^p] \;<\; \infty.
\]
Taking total expectations yields the uniform recursion:
\[
\mathbb{E}^\pi[|P_t|^p] \;\le\; \rho\,\mathbb{E}^\pi[|P_{t-1}|^p] \;+\; K.
\]
Iterating this recursion gives $\mathbb{E}^\pi[|P_t|^p] \le \rho^t \mathbb{E}[|P_0|^p] + \frac{K}{1-\rho}$, implying $\sup_{t \ge 0} \mathbb{E}^\pi[|P_t|^p] < \infty$.

\medskip
\textbf{Part 2: Existence of an invariant measure.}
Fix a deterministic stationary Markov policy $\mu$ and let $\nu_t := \mathcal{L}(P_t)$.
By Part 1, $M_p := \sup_{t\ge 0}\int |p|^p\,\nu_t(dp) < \infty$.
By Markov's inequality, for any $R>0$:
\[
\nu_t\big(\{|p|>R\}\big)
\;\le\;
\frac{1}{R^p}\int |p|^p\,\nu_t(dp)
\;\le\;
\frac{M_p}{R^p}.
\]
This vanishes uniformly as $R\to\infty$, so $\{\nu_t\}_{t\ge 0}$ is tight. By Prokhorov's theorem, the sequence of Ces\`aro averages $\bar\nu_T = \frac{1}{T}\sum_{t=0}^{T-1}\nu_t$ is tight and admits a weakly converging subsequence $\bar\nu_{T_n} \Rightarrow \psi_\mu$. The standard Krylov--Bogolyubov argument (utilizing the Feller property of the transition kernel $K_\mu$) confirms that $\psi_\mu$ is an invariant probability measure.

Finally, to bound the $p$-th moment under $\psi_\mu$, integrate the conditional inequality from Part 1 with respect to the invariant measure:
\[
\int \mathbb{E}[|P_t|^p \mid P_{t-1}=x] \,\psi_\mu(dx) \;\le\; \rho \int |x|^p \psi_\mu(dx) \;+\; K.
\]
Using invariance ($\int \mathbb{E}[|P_t|^p \mid P_{t-1}=x] \psi_\mu(dx) = \int |x|^p \psi_\mu(dx)$), we obtain:
\[
\int |x|^p\,\psi_\mu(dx) \;\le\; \frac{K}{1-\rho} \;<\; \infty.
\]
This completes the proof.
\end{proof}

\end{document}